\documentclass[11pt]{article}

\usepackage[a4paper,margin=1.05in]{geometry}
\usepackage[T1]{fontenc}
\usepackage{lmodern}
\usepackage{microtype}
\usepackage{amsmath,amssymb,amsthm,mathtools}
\usepackage{enumitem}
\usepackage[hidelinks]{hyperref}
\hypersetup{
  pdftitle={Free-Energy Asymptotics of the Two-Dimensional Quantum Heisenberg Ferromagnet},
  pdfauthor={Andreas Klippel},
  pdfsubject={Low-temperature free energy, collision traces, and magnons in the two-dimensional quantum Heisenberg ferromagnet}
}

\allowdisplaybreaks
\numberwithin{equation}{section}

\newtheorem{theorem}{Theorem}[section]
\newtheorem{proposition}[theorem]{Proposition}
\newtheorem{lemma}[theorem]{Lemma}
\newtheorem{corollary}[theorem]{Corollary}
\theoremstyle{remark}

\newcommand{\N}{\mathbb N}
\newcommand{\Z}{\mathbb Z}
\newcommand{\R}{\mathbb R}

\newcommand{\Tr}{\operatorname{Tr}}
\newcommand{\1}{\mathbf 1}

\newcommand{\cE}{\mathcal E}
\newcommand{\cM}{\mathcal M}
\newcommand{\Sp}{\mathbf S}
\newcommand{\spec}{\operatorname{spec}}
\newcommand{\gap}{\operatorname{gap}}
\newcommand{\Sym}{\operatorname{Sym}}
\newcommand{\dist}{\operatorname{dist}}
\newcommand{\HW}{\mathrm{HW}}

\title{Free-Energy Asymptotics of the Two-Dimensional Quantum Heisenberg Ferromagnet}
\author{Andreas Klippel\thanks{ \href{mailto:anklippe@uni-mainz.de}{anklippe@uni-mainz.de}}\\[0.35em]
\small Institut für Mathematik, Johannes Gutenberg-Universität Mainz,\\[-0.1em]
\small Staudingerweg 9, 55128 Mainz}
\date{\today}

\begin{document}
\maketitle

\begin{abstract}
We prove the leading low-temperature free-energy asymptotics of the
nearest-neighbour quantum Heisenberg ferromagnet on \(\mathbb Z^2\).  For
every fixed \(S\in\{\frac12,1,\frac32,\ldots\}\), the free energy per site
satisfies
\[
 \lim_{\beta\to\infty}\beta^2S f_2(\beta,S)=-\frac{\pi}{24}.
\]
This formula was predicted by Takahashi's modified spin-wave theory
\cite{Takahashi1986}.  Napiórkowski and Seiringer proved the corresponding
upper bound but left the matching lower bound in two dimensions open
\cite{NapiorkowskiSeiringer2021}, a gap later highlighted again by
Seiringer \cite{SeiringerOberwolfach2025}.  We prove this missing bound
through a new comparison between the magnon exclusion dynamics and free
bosons, based on extending wave functions to collision configurations with
controlled kinetic cost.
\end{abstract}

\medskip
\noindent\textbf{Keywords.} Quantum Heisenberg ferromagnet; spin-wave
theory; free energy; magnons; exclusion process; discrete trace inequality.

\smallskip
\noindent\textbf{Mathematics Subject Classification (2020).}
82B10; 82B20; 81Q10; 60K35.

\section{Introduction}
\label{sec:introduction}

The isotropic quantum Heisenberg ferromagnet is one of the standard models
of quantum magnetism.  Its local interaction favours parallel spins, and
its lowest excitations are collective waves of the spin field.  The
spin-wave description goes back to Bloch \cite{Bloch1930}; one quantum of
such a mode is called a \emph{magnon}.
The Holstein--Primakoff representation describes spin deviations by
bosons with an on-site occupation constraint
\cite{HolsteinPrimakoff1940}.  Spin-wave theory replaces the interacting
quantum system, to leading order at low temperature, by an ideal Bose gas
whose one-particle energy behaves as \(|p|^2\) near zero momentum.  This
picture underlies Dyson's low-temperature expansion of the ferromagnet
\cite{Dyson1956,Dyson1956Thermodynamics}.
Takahashi's modified spin-wave theory predicted the leading
low-temperature free energy in dimensions one and two
\cite{Takahashi1986}.

The replacement is far from automatic.  In the exact spin system, the
number of spin deviations at one site is bounded, neighbouring deviations
interact, and global spin rotations produce many zero-energy states.  The
leading free energy therefore depends not only on isolated low-lying
eigenvalues, but on the total number of many-body states whose Boltzmann
weights contribute to \(\Tr e^{-\beta H}\).

Two dimensions make this question especially sharp.  The Mermin--Wagner
theorem excludes spontaneous magnetic order at every positive temperature
for the short-range isotropic model \cite{MerminWagner1966}.  It does not,
however, determine the leading free energy.  The natural question is
whether treating magnons as independent bosons still gives the correct
leading free energy even though spontaneous magnetization is absent.  In
other words, can spin waves control the leading thermodynamics without
long-range order?

In three dimensions, a short account of the spin-\(1/2\) case appeared in
\cite{CorreggiGiulianiSeiringer2014}, while the spin-wave free-energy
asymptotics for every fixed spin were proved in
\cite{CorreggiGiulianiSeiringer2015}; the same argument extends to
\(d\geq3\) \cite{SeiringerOberwolfach2025}.  This followed earlier work in the
large-spin regime \cite{CorreggiGiuliani2012}.  The first-order interaction
correction in that regime was subsequently obtained, as an upper bound, in
both two and three dimensions \cite{Benedikter2017}.  Earlier non-sharp
pressure bounds include \cite{ConlonSolovej1991,Toth1993}.  Napiórkowski and
Seiringer subsequently established the corresponding result for the
one-dimensional chain \cite{NapiorkowskiSeiringer2021}.  In two dimensions
they proved the predicted upper bound but left the matching lower bound
open \cite{NapiorkowskiSeiringer2021}.  Seiringer later identified this as
the remaining open case \cite{SeiringerOberwolfach2025}.

To state the result, let \(\beta>0\) denote inverse temperature and let
\(\Lambda\) be a finite subset of \(\Z^2\).  At each \(x\in\Lambda\), place
the spin-\(S\) irreducible representation of
\(\mathrm{SU}(2)\), with
generators \(\Sp_x=(S_x^1,S_x^2,S_x^3)\).  We consider the free-boundary
nearest-neighbour Hamiltonian
\begin{equation}
 H_\Lambda^{(S)}
 :=\sum_{\substack{\langle x,y\rangle\\x,y\in\Lambda}}
 \bigl(S^2-\Sp_x\mathbin\cdot\Sp_y\bigr),
 \label{eq:intro-hamiltonian}
\end{equation}
where each unoriented nearest-neighbour bond is counted once.  The additive
constant normalizes the ground-state energy to zero.  Write
\begin{equation}
 f_2(\beta,S):=-\lim_{\Lambda\uparrow\Z^2}
 \frac1{\beta|\Lambda|}\log\Tr e^{-\beta H_\Lambda^{(S)}}
 \label{eq:intro-free-energy},
\end{equation}
for the thermodynamic free energy.  The limit may equivalently be taken
along free square boxes.  For completeness, let \(H_0\) be the sum of the
Hamiltonians of disjoint square boxes and let \(B\) be the sum of the bonds
joining them.  Since \(B\geq0\) and \(\|B\|\leq S(2S+1)\) times the
number of joining bonds, the min--max principle gives
\[
 e^{-\beta\|B\|}\Tr e^{-\beta H_0}
 \leq \Tr e^{-\beta(H_0+B)}
 \leq \Tr e^{-\beta H_0}.
\]
Here positivity follows from
\(2\Sp_x\cdot\Sp_y=(\Sp_x+\Sp_y)^2-2S(S+1)\) and the bound of the
two-site total spin by \(2S\); the same identity gives the stated norm.
Tiling a large square by boxes of side \(L\), with a remainder strip of
width less than \(L\), shows that the upper and lower limiting pressures
differ by at most \(C_S\beta/L\).  Letting \(L\to\infty\) proves the
existence of the limit.  We write
\(p_2(\beta,S):=-\beta f_2(\beta,S)\) for the logarithmic pressure per
site.

The known and missing directions can now be stated without ambiguity.
Napiórkowski and Seiringer proved the first inequality below; the second
was open:
\begin{align}
 \limsup_{\beta\to\infty}\beta^2S f_2(\beta,S)
 &\leq-\frac\pi{24},
 \label{eq:intro-known-direction}\\
 \liminf_{\beta\to\infty}\beta^2S f_2(\beta,S)
 &\geq-\frac\pi{24}.
 \label{eq:intro-missing-direction}
\end{align}
Since \(p_2=-\beta f_2\), the second inequality is the corresponding upper
bound on the pressure.  It is the new estimate proved in this paper.

\begin{theorem}[Two-dimensional spin-wave free energy]
\label{thm:main-free-energy}
For every fixed \(S\in\{\frac12,1,\frac32,\ldots\}\),
\begin{equation}
 \lim_{\beta\to\infty}\beta^2S f_2(\beta,S)
 =-\frac\pi{24}.
 \label{eq:main-limit}
\end{equation}
Equivalently,
\[
 f_2(\beta,S)=-\frac{\pi}{24S\beta^2}+o(\beta^{-2}).
\]
\end{theorem}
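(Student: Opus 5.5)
Since \eqref{eq:intro-known-direction} is the Napiórkowski--Seiringer upper bound on the free energy, it remains to prove \eqref{eq:intro-missing-direction}, i.e.\ an upper bound on the pressure,
\[
 p_2(\beta,S)\le \frac{\pi}{24S\beta}\,(1+o(1))\qquad(\beta\to\infty).
\]
The constant comes from free bosons with dispersion \(S|p|^2\): on \(\Z^2\) one has \(-\int_{\R^2}\log(1-e^{-\beta S|p|^2})\,\frac{dp}{(2\pi)^2}=\frac{\pi}{24 S\beta}\).

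\textbf{Step 1: localization to boxes.} By the subadditivity argument in the introduction, I will work on free square boxes of side \(L=\beta^{1/2+\varepsilon}\), slightly larger than the thermal length \(\beta^{1/2}\). The boundary loss is \(C_S\beta/L\). This term is not \(o(\beta^{-1})\) for such \(L\), so I will instead drop bonds via the operator inequality \(H_\Lambda\ge\sum_{\text{boxes}}H_{\text{box}}\), which holds because \(B\ge0\). The upper bound on the pressure then needs only \(\frac1{L^2}\log\Tr e^{-\beta H_{\text{box}}}\).

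\textbf{Step 2: block diagonalization by magnon number.} In each box, decompose by the \(S^3\)-eigenvalue, \(n=\sum_x(S-S^3_x)\) spin deviations. On the \(n\)-deviation sector, \(H_{\text{box}}\) acts as a hopping operator with hard-core on-site constraints and nearest-neighbour attraction, i.e.\ an exclusion-type dynamics. The central comparison is a trace inequality
\[
 \Tr_{n}e^{-\beta H_{\text{box}}}\le (1+\delta)^n\,\Tr_{\Sym^n}e^{-\beta\,S\,\Delta_{\text{box}}^{(n)}},
\]
with free bosons on the box carrying Neumann Laplacian \(S\Delta\). Summing over \(n\) then gives the Bose gas partition function \(\prod_p(1-(1+\delta)e^{-\beta S\epsilon(p)})^{-1}\). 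The magnitude \(|p|\) is of order \(\beta^{-1/2}\), so the effective chemical potential \(\log(1+\delta)\) must be \(o(\beta^{-1})\)-harmless. Handling this requires care with the \(p=0\) mode (the rotational zero modes), which I will cut off separately with a bound \(n\le L^2\) that costs only \(O(\log L)\) per box.

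\textbf{Step 3: the comparison (main obstacle).} The difficulty is that the spin Hamiltonian restricted to non-colliding configurations is not bounded below by the bosonic Laplacian. Near collisions, the exclusion constraint and the nearest-neighbour interaction lower the kinetic energy, and in 2D the collision set is not negligible. I plan to use min--max. For each eigenfunction of \(H_{\text{box}}\) on the \(n\)-sector, I will construct an extension to symmetric functions on all of \(\Lambda^n\), including coincident sites, by harmonic-type interpolation. The aim is to control the bosonic Dirichlet form of the extension by \((1+\delta)\) times the spin form plus an error supported on collisions. That collision error should be bounded by a trace over configurations with at least one pair within distance \(R\). This gives the count \(\Tr\le\sum_k\binom{\cdot}{k}\times(\text{free trace})\), where each pair costs a factor \(R^2/L^2\) weighted by density. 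Since the thermal magnon density is \(\sim\beta^{-1}\log\beta\) in 2D, the expected number of close pairs per box is \(o(n)\). The relative error is then \(o(1)\), and letting \(\beta\to\infty\) then \(\delta\to0\) gives \eqref{eq:intro-missing-direction}.

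I expect the extension estimate with controlled kinetic cost, uniform in \(n\) up to \(L^2\), to be the hard step, including handling high spin \(S\) where on-site multiple occupation is allowed up to \(2S\).
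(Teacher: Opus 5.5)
Your reduction follows the paper. Only the pressure upper bound is new, the nonnegative interbox bonds can be dropped, and min--max is applied to harmonic extensions $E_{L,k}f$ of exclusion eigenfunctions. Their free energy is $\langle f,(H_{L,k}+N_{L,k})f\rangle$, where $N_{L,k}\geq0$ is the energy on product edges touching a collision.

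\textbf{Step~3 has no working mechanism.} You identify it yourself as the crux. Min--max needs $\langle f,N_{L,k}f\rangle\leq\delta\langle f,H_{L,k}f\rangle$ for every $f$ in the span of the low-lying eigenvectors, and a count of close pairs in a thermal free-boson state is not an estimate of that kind. The heuristic also has the wrong size in two dimensions. If collisions are merely counted, each adjacent or coinciding pair costs energy of order one, which is the bound $N_{L,k}\leq R_{L,k}$ by the contact multiplier, and hence order $\beta$ in $\log Z$. At the density $\rho\sim\beta^{-1}\log\beta$ you quote, this gives $\beta V\rho^2$ per box. That exceeds the main term $\pi V/(24S\beta)$ by a factor $(\beta\rho)^2\sim(\log\beta)^2$. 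That close pairs are $o(n)$ does not help, because a magnon carries energy only $O(\beta^{-1})$. In $d\geq3$ the same count is smaller than the main term by a power of $\beta$, which is why two dimensions is the hard case.

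Nor can any bound $N_{L,k}\leq\delta H_{L,k}$ with small $\delta$ be uniform in energy, so uniformity ``in $n$ up to $L^2$'' is the wrong target. For example, take $k=2$ on the periodic square with $L\geq4$, and let $f=+1$ on horizontally adjacent pairs, $-1$ on vertically adjacent pairs and $0$ otherwise. Then $\langle f,H_{L,2}f\rangle=6\|f\|^2$, while the harmonic extension vanishes on the diagonal, so $\langle f,N_{L,2}f\rangle=\frac13\langle f,H_{L,2}f\rangle$.

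\textbf{The missing idea is that the collision correction is second order.} The paper proves $N_{L,k}\leq Ck^6\log^2(2L)H_{L,k}^2$ in three steps.
\begin{itemize}
\item Exchange symmetry cancels the first-order part of the normal-edge multiplier in the relative momentum of a colliding pair. This gives $q_{\mathrm{bad}}(F)\leq Ck^2\|L_{0,L,k}F\|^2$ for symmetric $F$.
\item Since $L_{0,L,k}E_{L,k}f=((H_{L,k}+N_{L,k})f,0)$, this yields $N\leq Ck^2(H+N)^2$.
\item The $N$ on the right is absorbed using a first-order bound $N\leq Ck^2\log(2L)H$. That bound comes from the moving-particle lemma, the $\log L$ effective resistance of the punctured square, and Aldous' gap.
\end{itemize}

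\textbf{Cutoffs your plan lacks.} The estimate only becomes useful after a priori cutoffs. A complete-graph Casimir comparison on highest-weight spaces, $H^{\HW}_{m,k}\geq ck/m^2$, applied on mesoscopic cells, shows two things. Energies above $e_0\sim g(\log t)^2/t$ are negligible, and only $k\leq K\lesssim g^2(\log t)^2$ matter. In that range $H^2\leq e_0H$ gives $N\leq\delta_tH$ with $\delta_t\sim K^6\log^2(2L)\,e_0\to0$.

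\textbf{Fugacity versus temperature.} Min--max then gives $\mu_{k,j}\leq(1+\delta_t)\lambda_{k,j}$, i.e.\ a rescaled temperature, not the fugacity $(1+\delta)^n$ of your Step~2. The fugacity form is actually harmful. Your product diverges for every mode with $\beta S\epsilon(p)\leq\log(1+\delta)$. With the cutoff $n\leq L^2$, the zero mode alone contributes $L^2\log(1+\delta)\gg L^2/\beta$.

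\textbf{Spin $S>\frac12$.} Your proposal gives no mechanism here. The paper writes spin $S$ as the symmetric part of $r=2S$ spin-$\frac12$ fibres coupled by $K_{r,r}$ and bounds the physical trace by the decorated one. It then uses that the non-symmetric fibre modes have one-particle gap at least $2$.
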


With \(p\) denoting momentum, the constant is the free two-dimensional Bose
integral
\begin{equation}
 \frac1{(2\pi)^2}\int_{\R^2}
 \log\bigl(1-e^{-|p|^2}\bigr)\,dp=-\frac\pi{24}.
 \label{eq:bose-integral},
\end{equation}
Thus magnon interactions and the on-site occupation bound do not change
the leading pressure.  Theorem~\ref{thm:main-free-energy} does not assert
magnetic order or a Gibbs state with spontaneous magnetization.  It shows
that the summed Boltzmann weights of the low-energy states agree, to
leading order, with the free-spin-wave prediction.

The proof also gives the quantitative remainder stated in
Corollary~\ref{cor:quantitative-free-energy}.  Its exponent is not expected
to be optimal; it records the finite boxes and energy cutoffs used in the
proof.

The central new estimate of the paper concerns the on-site occupation
constraint.
For spin \(1/2\), a sector with \(k\) reversed spins is an exclusion
process: the particles hop but cannot occupy the same site.  Free magnons,
by contrast, are symmetric bosons and may coincide.  We extend a function
from distinct configurations to collisions by minimizing the free kinetic
energy.  The added collision energy is then bounded by the \emph{square}
of the exclusion energy, up to a logarithmic loss in the box size.
Exchange symmetry supplies the additional power of the energy, while an
electrical-network estimate controls configurations containing occupied
neighbouring sites.  This quadratic collision estimate is the main
analytic innovation of the paper.

This mechanism differs from the arguments used in the other dimensions.
In three and higher dimensions the interaction is controlled through an
estimate on the two-particle density
\cite{CorreggiGiulianiSeiringer2015}.  The one-dimensional proof uses
estimates that depend essentially on the order of sites along a chain
\cite{NapiorkowskiSeiringer2021}.  Our estimate controls pair collisions
and several simultaneous collisions at once, uniformly over the particle
numbers and energies retained in the finite-volume trace.

After discarding a quantitatively negligible high-energy part, harmonic
extension and the min--max principle compare the remaining eigenvalues
with those of free symmetric bosons.  The constant one-particle
eigenfunction has energy zero and accounts for the degeneracy under global
spin rotations; the positive eigenvalues give the coefficient \(\pi/24\).

For general fixed \(S\), we replace each spin-\(S\) degree of freedom by
\(r=2S\) auxiliary spin-\(1/2\) variables, called fibres, and couple every
fibre above \(x\) to every fibre above a neighbouring site \(y\).  The
subspace symmetric under permutations of the fibres at each site is the
original spin-\(S\) model.  The component constant in the fibre label
reproduces the physical magnon, whereas every orthogonal component has
energy bounded below by a positive constant independent of the box size.
The collision estimate remains valid with an \(S\)-dependent constant,
which is why the theorem holds for every fixed spin; no uniformity as
\(S\to\infty\) is claimed.

\section{Magnons, exclusion, and the collision correction}
\label{sec:magnons-collision}

To bound the pressure, we compare the interacting \(k\)-particle spectrum
with the spectrum of free symmetric bosons.  For spin \(1/2\), fixing the
number \(k\) of reversed spins turns twice the Hamiltonian into a
\(k\)-particle exclusion Laplacian: particles move between neighbouring
sites but cannot coincide \cite{Thomas1980}.  The main object of this
section is \(N_{L,k}\), the energy added when a function on distinct
configurations is extended to collision configurations with minimal free
kinetic energy.  We first describe the spin decomposition and then
construct this operator.

Throughout the proof, \(\N=\{1,2,\ldots\}\), and \(C,c>0\) denote
constants that may change from line to line; a subscript \(r\) allows
dependence on \(r\).  Unlabelled \(\ell^2\)-inner products and norms use
counting measure, \(\1_A\) denotes the indicator of \(A\), and \(I\) denotes
the identity operator.  We write \(A\asymp B\) when
\(cB\leq A\leq CB\), with constants independent of the asymptotic
parameter under consideration.

\subsection{Highest-weight sectors and exclusion}
\label{subsec:highest-weight}

Let \(L\geq3\) be an integer and let
\(B_L=\{1,\ldots,L\}^2\) with free boundary, or
\(B_L=(\Z/L\Z)^2\) with periodic boundary, and put \(V=L^2\).  Write
\(x\sim y\) for nearest neighbours in the corresponding graph.  For
\(0\leq k\leq V\), set
\[
 \Omega_{L,k}:=\{A\subset B_L:|A|=k\}.
\]
Under the identification of \(A\) with the spin configuration having down
spins exactly at the sites of \(A\), twice the spin-\(1/2\) Hamiltonian is
the following operator on \(\ell^2(\Omega_{L,k})\):
\begin{equation}
 (H_{L,k}f)(A)
 :=\sum_{\substack{x\in A,\ y\notin A\\x\sim y}}
 \bigl(f(A)-f(A\setminus\{x\}\cup\{y\})\bigr).
 \label{eq:exclusion-laplacian}
\end{equation}
This is the \(k\)-particle exclusion Laplacian.  If \(t=\beta/2\), the
physical Gibbs factor in this sector is
\(e^{-tH_{L,k}}\).

Rotation invariance gives a more economical decomposition than the raw
down-spin sectors.  Put
\[
 S^3_{\mathrm{tot}}:=\sum_{x\in B_L}S_x^3,\qquad
 S^+_{\mathrm{tot}}:=\sum_{x\in B_L}(S_x^1+\mathrm iS_x^2).
\]
Let \(\cM_{L,k}\) be the subspace on which
\(S^3_{\mathrm{tot}}=V/2-k\) and \(S^+_{\mathrm{tot}}=0\).  It contains
one highest-weight vector for each copy of the total-spin representation
of spin \(V/2-k\), and is therefore called its highest-weight multiplicity
space.  Write \(H_{L,k}^{\HW}\) for the restriction of \(H_{L,k}\) to
this space.
Then
\begin{equation}
 \Tr e^{-\beta H_{B_L}^{(1/2)}}
 =\sum_{k=0}^{\lfloor V/2\rfloor}
 (V-2k+1)\Tr_{\cM_{L,k}}e^{-tH_{L,k}^{\HW}}.
 \label{eq:su2-trace-decomposition}
\end{equation}
Only the elementary bound \(V-2k+1\leq V+1\) will be used.  We never
require harmonic extension to preserve \(\cM_{L,k}\).

It is convenient to use ordered coordinates before imposing symmetry.  Let
\(X_{L,k}=B_L^k\), let \(\ell^2_{\Sym}(X_{L,k})\) be the permutation-invariant
subspace, and define the free product Laplacian
\begin{equation}
 L_{0,L,k}:=\sum_{i=1}^k(-\Delta_i),
 \qquad
 (-\Delta u)(x)=\sum_{y\sim x}\bigl(u(x)-u(y)\bigr).
 \label{eq:free-product-laplacian}
\end{equation}
Here \(-\Delta_i\) acts as the graph Laplacian in coordinate \(i\) and as
the identity in the remaining coordinates.
For the free square we also denote this graph Laplacian by
\(-\Delta_N\).
The distinct-coordinate set
\[
 D_{L,k}:=\{(x_1,\ldots,x_k):x_i\ne x_j\ \text{for }i\ne j\}
\]
is related to \(\ell^2(\Omega_{L,k})\) by the unitary map
\[
 U:\ell^2(\Omega_{L,k})\longrightarrow\ell^2_{\Sym}(D_{L,k}),
 \qquad
 (Ug)(x_1,\ldots,x_k)
 =(k!)^{-1/2}g(\{x_1,\ldots,x_k\}).
\]
Under this identification the moves staying inside \(D_{L,k}\) give
\(H_{L,k}\).

\subsection{Harmonic extension to free bosons}
\label{subsec:harmonic-extension}

For \(k=1\) there is no collision set and we put \(N_{L,1}=0\).  Let
\(k\geq2\) below and set \(C_{L,k}=X_{L,k}\setminus D_{L,k}\).  With
respect to the orthogonal decomposition into distinct and colliding
configurations, the symmetric product Laplacian has the block form
\begin{equation}
 L_{0,L,k}=
 \begin{pmatrix}
  A&B\\ B^*&C_0
 \end{pmatrix}.
 \label{eq:block-product-laplacian}
\end{equation}
The block \(B\) couples distinct and colliding configurations, and
\(C_0\) is the compression to the collision set with Dirichlet boundary
on \(D_{L,k}\).  Moreover, \(A=H_{L,k}+R_{L,k}\), where \(R_{L,k}\) is
multiplication by the
number of free product jumps from a distinct configuration into the
collision set.  For \(k\leq V/2\), the Dirichlet collision block \(C_0\) is
strictly positive.  Indeed, the product graph \(B_L^k\) is connected and
\(D_{L,k}\) is nonempty for \(k\leq V\).  Starting from any colliding
configuration, take a product-graph path to a fixed element of
\(D_{L,k}\).  The initial segment up to the first entrance into
\(D_{L,k}\) shows that every connected component of \(C_{L,k}\) has an
edge to the Dirichlet boundary.  The quadratic form of \(C_0\) can
therefore vanish only on the zero function.  Since the space is finite
dimensional, \(C_0>0\).

Given \(f\) on \(D_{L,k}\), let \(E_{L,k}f\) be the extension that
minimizes the product energy among all extensions of \(f\).  In the block
decomposition above it is
\begin{equation}
 E_{L,k}f:=\binom{f}{-C_0^{-1}B^*f}.
 \label{eq:harmonic-extension}
\end{equation}
The Schur complement yields
\begin{align}
 \|E_{L,k}f\|&\geq\|f\|,
 \label{eq:extension-norm}\\
 \langle E_{L,k}f,L_{0,L,k}E_{L,k}f\rangle
 &=\langle f,(H_{L,k}+N_{L,k})f\rangle,
 \label{eq:extension-energy}
\end{align}
where
\begin{equation}
 N_{L,k}:=R_{L,k}-BC_0^{-1}B^*.
 \label{eq:normal-correction}
\end{equation}
Applying the minimizing property first to \(E_{L,k}f\) and then to an
extension supported on \(D_{L,k}\) gives
\begin{equation}
 0\leq N_{L,k}\leq R_{L,k}.
 \label{eq:N-between-zero-R}
\end{equation}
The constant function extends to a constant function, so both \(H_{L,k}\)
and \(N_{L,k}\) vanish on constants.  Theorem~\ref{thm:normal-trace}
quantifies their relation on the orthogonal complement.

\section{A quadratic trace estimate on collision diagonals}
\label{sec:collision-trace}

The main object of this section is the kinetic energy carried by product
edges which touch a collision diagonal.  We first show that exchange
symmetry makes this energy second order in the free product Laplacian.  We
then bound the collision multiplier \(R_{L,k}\) by the exclusion Dirichlet
form, using effective resistance, and obtain an estimate involving only the
exclusion operator.

\begin{theorem}[Quadratic collision-trace estimate]
\label{thm:normal-trace}
There is a universal constant \(C_{\mathrm{tr}}<\infty\) such that, for
every integer \(L\geq3\), for either choice of boundary conditions
(periodic or free), and for every integer
\(1\leq k\leq\lfloor V/2\rfloor\),
\begin{equation}
 0\leq N_{L,k}\leq
 C_{\mathrm{tr}}k^6\log^2(2L)\,H_{L,k}^2
 \label{eq:intro-normal-trace}
\end{equation}
on the Hilbert space of permutation-invariant functions of \(k\) distinct
particle coordinates.
\end{theorem}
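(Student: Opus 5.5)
The lower bound \(N_{L,k}\geq0\) is \eqref{eq:N-between-zero-R}, so only the upper bound needs proof. The section introduction splits it into two parts.

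\emph{Symmetry step.} For symmetric \(f\) on \(D_{L,k}\), I will bound \(\langle f,N_{L,k}f\rangle\) by a quantity quadratic in the free energy. Write \(\langle f,N_{L,k}f\rangle=\min_g \mathcal E_{\mathrm{coll}}(f,g)\), the product Dirichlet energy on edges touching \(C_{L,k}\), minimised over extensions \(g\). I will construct an explicit extension rather than the harmonic one. The natural choice at a configuration with \(x_i=x_j\) is an average of \(f\) over nearby distinct configurations, where the \(j\)-th particle is displaced to a neighbouring free site.

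The energy on a product edge from \((\ldots,x,\ldots,y,\ldots)\in D\) to the collision \((\ldots,y,\ldots,y,\ldots)\) is then controlled by differences of \(f\). By exchange symmetry the first-order difference cancels: the two configurations obtained by moving particle \(i\) onto \(j\) or \(j\) onto \(i\) are related by the transposition \(i\leftrightarrow j\) composed with a shift. Hence the collision energy is bounded by second differences of \(f\), that is, by \(\|\nabla^2 f\|^2\) localised near the diagonal.

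Second differences summed over all sites are controlled by \(\|L_0 f\|^2\) for the free product Laplacian, up to combinatorial factors polynomial in \(k\); these account for part of the \(k^6\). On the square with free boundary this step uses discrete elliptic regularity \(\|\nabla^2u\|\lesssim\|\Delta_N u\|\), which holds for Neumann boundary conditions on a convex lattice domain. The step is trivial by Fourier analysis in the periodic case.

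\emph{Converting \(L_0\) to \(H_{L,k}\).} This requires \(\|L_0\tilde f\|\lesssim\|H f\|+(\text{collision terms})\). Since \(A=H+R\), I need \(R\) controlled by \(H\), plus bounds on the off-diagonal part \(B\). This is where effective resistance enters. The quantity \(\langle f,Rf\rangle=\sum_{A}\#\{\text{occupied neighbour pairs in }A\}\,|f(A)|^2\) must be bounded by \(C k^{a}\log(2L)\,\langle f,Hf\rangle\) on the orthogonal complement of constants. To do this I will route a unit flow in the exclusion graph from each configuration with adjacent occupied sites to configurations where the pair is separated. In two dimensions the resistance of such a flow grows like \(\log L\), which is the source of the logarithm. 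I will also use that \(Hf\) has mean zero.

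Applying this bound twice gives \(\log^2(2L)\). Schematically, \(N\lesssim k^{a}(L_0\text{ on extension})^2\), then \(\lesssim k^{a}(H+R)^2\), then \(\lesssim k^{6}\log^2(2L)\,H^2\). This requires \(R\leq C\log\cdot H\) to be upgraded to an operator bound on squares, for example through \(\|Rf\|\leq k\,\|R^{1/2}f\|\cdot\) and spectral calculus. Simple operator monotonicity does not give this, so the upgrade must be done carefully.

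\emph{Main obstacle.} I expect the hardest step to be the resistance estimate \(R\lesssim k^{c}\log(2L)\,H\) together with its quadratic version, uniformly in \(k\leq V/2\). Dense configurations leave little room to separate a pair, and the flow must be constructed in the exclusion graph itself. I would first try moving one particle of the pair along a two-dimensional unit-current flow toward infinity, swapping identities with any particle it meets; exchange symmetry makes such swaps free. This keeps congestion at most polynomial in \(k\).
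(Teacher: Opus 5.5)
Your lower bound and the variational description $\langle f,N_{L,k}f\rangle=\min_g q_{\mathrm{bad}}$ are correct. The route to the quadratic bound, however, breaks at the conversion step. Your chain $N\lesssim k^a\|L_0\tilde f\|^2\lesssim k^a\|(H+R)f\|^2\lesssim k^6\log^2(2L)\|Hf\|^2$ needs $\|R_{L,k}f\|\lesssim\log(2L)\|H_{L,k}f\|$. This is not merely unreachable by monotonicity; it is false by a factor of order $V$ in the squared norms.

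To see this, take $k=2$ and $f(A)=\sum_{x\in A}\phi(x)$, with $\phi$ a lowest nonconstant one-particle eigenfunction. Then $f\perp 1$ and $H_{L,2}f=\lambda_1 f$ with $\lambda_1\asymp V^{-1}$, so $\|H_{L,2}f\|^2\asymp V^{-2}\|f\|^2$. On the other hand, the $\asymp V$ adjacent pairs carry a fraction $\asymp V^{-1}$ of the mass of $f$, so $\|R_{L,2}f\|^2\asymp V^{-1}\|f\|^2$.

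There is a second problem with an explicit non-harmonic extension $\tilde f$. The vector $L_{0,L,k}\tilde f$ has a nonzero component on $C_{L,k}$, and its $D_{L,k}$-component contains $\sum_{A\to c}(f(A)-\tilde f(c))$. Controlling these in $\ell^2$ is essentially the estimate you are trying to prove, so the scheme is circular.

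The missing idea is to use the harmonic extension itself. By \eqref{eq:L-on-harmonic-extension}, $L_{0,L,k}E_{L,k}f=\bigl((H_{L,k}+N_{L,k})f,0\bigr)$ exactly, and $q_{\mathrm{bad}}(E_{L,k}f)=\langle f,N_{L,k}f\rangle$. Hence a trace bound $q_{\mathrm{bad}}(F)\le Ck^2\|L_0F\|^2$ yields the self-referential inequality $N\le Ck^2(H+N)^2$, with $N$, not $R$, inside the square. The first-order bound $N\le R\le bH$, with $b=Ck^2\log(2L)$, is used only to know that $\mathcal K=H^{-1/2}NH^{-1/2}\le b$. Conjugation then gives $\mathcal K(I+\mathcal K)^{-2}\le Ck^2H$. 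The scalar inequality $x\le(1+b)^2x(1+x)^{-2}$ on $[0,b]$ then yields $N\le Ck^2(1+b)^2H^2$. So $\log^2(2L)$ and $k^6=k^2\cdot(k^2)^2$ both come from $(1+b)^2$; the resistance estimate is used once, not twice.

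Your symmetry step also claims too much. Exchange symmetry removes only the first-order term in the relative coordinate of the colliding pair. After pairing $q$ with $\tilde p-q$, the normal multiplier is $e^{\mathrm i\tilde p_\nu/2}\cos r_\nu-1$, which has size $\varepsilon(\tilde p)^{1/2}+\rho(q)$. Two kinds of terms survive as genuine first differences on the codimension-two diagonal:
\begin{itemize}
\item the tangential bad edges, where a spectator moves while $x_i=x_j$;
\item the centre-of-mass part of the normal edges.
\end{itemize}
These are not second differences, so their diagonal sum cannot be obtained as a sub-sum of a global $\|\nabla^2\tilde f\|^2$ via elliptic regularity. What controls them is a two-dimensional trace inequality that is scale invariant in the relative momentum: $\frac1V\sum_q\mu(\mu+\rho(q))^{-2}\le C$, uniformly in $\mu\in\{0\}\cup[cL^{-2},\infty)$. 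This is the content of Lemma~\ref{lem:single-diagonal-trace}; it is proved for symmetric functions on all of $B_L^k$ and then applied to $E_{L,k}f$, not to $f$ on $D_{L,k}$.

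Your resistance step is close in spirit to the paper's. The paper uses Chen's moving-particle lemma on the punctured square, and absorbs the mean term $Ck^2V^{-1}\|f\|^2$ with the exclusion gap $\gap(H_{L,k})\ge cV^{-1}$, which you should invoke explicitly. Neumann elliptic regularity is an acceptable substitute for the reflection to the $2L$-torus. Neither point repairs the two gaps above.
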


The square of \(H_{L,k}\) is essential.  On the low-energy subspace used
in the pressure estimate it makes the collision correction small relative
to \(H_{L,k}\), even though both the box and the number of magnons grow
with the inverse temperature.  The logarithm is the borderline
two-dimensional resistance cost.

\subsection{Exchange-symmetric cancellation}
\label{subsec:exchange-cancellation}

We begin with periodic boundary conditions.  A product edge is an edge of
\(B_L^k\) obtained by moving one coordinate by one nearest-neighbour step.
Call it bad if at least one endpoint belongs to \(C_{L,k}\).  For either
orientation \(e=(u,v)\), set \(\nabla_eF:=F(v)-F(u)\); its squared modulus
is independent of the chosen orientation.  Write
\begin{equation}
 q_{\mathrm{bad}}(F)
 :=\sum_{e\,\mathrm{bad}}|\nabla_eF|^2.
 \label{eq:bad-edge-form}
\end{equation}
All edge sums in this section are over unoriented product edges, counted
once.  Thus the bad edges are exactly those product edges which do not
have both endpoints in \(D_{L,k}\).
For \(1\leq i<j\leq k\), let \(\mathcal B_{ij}\) contain the product edges
based on the diagonal \(x_i=x_j\): a normal edge moves coordinate \(i\) or
\(j\), and a tangential edge moves one of the remaining coordinates.

\begin{lemma}
\label{lem:single-diagonal-trace}
There is a universal \(C<\infty\) such that, for every periodic square with
integer \(L\geq3\), every integer \(k\geq2\), every
\(1\leq i<j\leq k\), and every symmetric
\(F\in\ell^2_{\Sym}(B_L^k)\),
\begin{equation}
 \sum_{e\in\mathcal B_{ij}}|\nabla_eF|^2
 \leq C\|L_{0,L,k}F\|^2.
 \label{eq:single-diagonal-trace}
\end{equation}
\end{lemma}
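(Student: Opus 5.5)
The plan is to work entirely in Fourier space on the periodic torus. The key structural input is the exchange symmetry $x_i\leftrightarrow x_j$: without it, the estimate would fail by a logarithm.

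\emph{Reduction.} Fix $i<j$ and the remaining coordinates. Write $\varepsilon(p)=\sum_{\nu=1,2}2(1-\cos p_\nu)$ for the one-particle symbol on the dual torus $(2\pi\Z/L)^2$. Let $\hat F(p_1,\ldots,p_k)$ be the discrete Fourier transform, so that $\|L_{0,L,k}F\|^2=\sum_{\mathbf p}|E(\mathbf p)|^2|\hat F(\mathbf p)|^2$ with $E(\mathbf p)=\sum_l\varepsilon(p_l)$. Restricting to the diagonal $x_i=x_j=x$ is dual to summing over $p_i+p_j=P$. For any multiplier $m$ and $G=m(\nabla)F$, Parseval on the diagonal gives
\[
 \sum_{x_i=x_j}|G|^2=\sum_{P,\,\mathbf p'}\Bigl|\frac1{\sqrt V}\sum_{p_i+p_j=P}m(\mathbf p)\hat F(\mathbf p)\Bigr|^2 ,
\]
where $\mathbf p'$ denotes the other momenta. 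Cauchy--Schwarz with weight $|E(\mathbf p)|^2$ then bounds this by
\[
 \sup_{P,\mathbf p'}\Bigl(\frac1V\sum_{p}\frac{|m|^2}{E^2}\Bigr)\cdot\|L_{0,L,k}F\|^2 .
\]
So the whole lemma reduces to showing that these Riemann sums are bounded uniformly in $L$, $P$ and $\mathbf p'$. Here $p=p_i$ and $p_j=P-p$. In each case I would compare the sum with the corresponding integral over $[-\pi,\pi]^2$, using $\varepsilon(p)+\varepsilon(P-p)\geq c\,|p-P/2|^2$ (periodically).

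\emph{Tangential edges.} These move a coordinate $l\notin\{i,j\}$ in direction $\nu$, so $|m|^2\leq\varepsilon(p_l)=:a$. Then $E\geq\varepsilon(p)+\varepsilon(P-p)+a$, and with $u=p-P/2$ the relevant sum is controlled by
\[
 \int_{\R^2}\frac{a\,du}{(c|u|^2+a)^2}=\frac{\pi}{c},
\]
independently of $a$. This is exactly the borderline two-dimensional computation: the scale invariance of this integral is what makes the trace of a tangential gradient on a codimension-two diagonal controlled by $\|L_0F\|$. The lattice sum needs a little care near $u=0$ when $a\lesssim L^{-2}$. The single term $u=0$ contributes $V^{-1}a/a^2\cdot a=1/V\cdot1$, which is at most $1$; the rest is handled by dyadic shells.

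\emph{Normal edges: the main obstacle.} Here symmetry is essential. Without it, the multiplier $e^{ip\cdot e_\nu}-1$ gives $\frac1V\sum_p|p|^2/(|p|^2+|P-p|^2)^2\sim\log L$. Instead I use $F(x+e_\nu,x,\ldots)=F(x,x+e_\nu,\ldots)$, which holds by symmetry, and write the normal difference as the average of the two moves:
\[
 \nabla_eF=\tfrac12\bigl[F(x+e_\nu,x)-F(x,x)\bigr]+\tfrac12\bigl[F(x,x+e_\nu)-F(x,x)\bigr].
\]
This has multiplier $m=\tfrac12(e^{ip_\nu}+e^{i(P-p)_\nu}-2)$. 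The key splitting is
\[
 m=\tfrac12(e^{ip_\nu}-1)(e^{i(P-p)_\nu}-1)+\tfrac12(e^{iP_\nu}-1)\cdot(\text{bounded}).
\]
More precisely, $e^{ia}+e^{ib}-2=(e^{ia}-1)(e^{ib}-1)-(e^{i(a+b)}-1)+\ldots$; one checks the exact identity $e^{ia}+e^{ib}-2=(e^{i(a+b)}-1)-(e^{ia}-1)(e^{ib}-1)$.
\begin{itemize}[leftmargin=*]
 \item The product term is bounded by $\varepsilon(p)^{1/2}\varepsilon(P-p)^{1/2}\leq E$. Its Riemann sum is therefore at most $1$.
 \item The term $e^{iP_\nu}-1$ depends only on the total momentum, so it acts like a tangential derivative of the centre of mass. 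It has size $|P|$, and $\varepsilon(p)+\varepsilon(P-p)\geq c|P|^2$. The sum is then handled exactly as in the tangential case, with $a=|P|^2$.
\end{itemize}
Both terms are $O(1)$ uniformly.

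\emph{Conclusion.} Summing over the $O(1)$ directions and over both edge types gives the universal constant $C$ in the lemma. I expect the main obstacle to be the normal-edge bookkeeping: matching each normal edge with its symmetric partner, so that every element of $\mathcal B_{ij}$ is counted a bounded number of times. This matching is what removes the $\log L$ divergence.
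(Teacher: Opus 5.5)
Your route is the paper's: Fourier transform on the torus, weighted Cauchy--Schwarz against $E(\mathbf p)^2$ on each fibre $p_i+p_j=P$, and symmetrization of the normal multiplier. Your exact identity $e^{\mathrm ia}+e^{\mathrm ib}-2=(e^{\mathrm i(a+b)}-1)-(e^{\mathrm ia}-1)(e^{\mathrm ib}-1)$ is a valid, slightly cleaner substitute for the paper's bound $|m_N|^2\le C(\varepsilon(\widetilde p)+\rho(q)^2)$, since the product term is pointwise at most $E/4$. Two steps, however, do not hold as written.

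First, the lattice sum is not bounded ``independently of $a$'' just because the continuum integral is. When $P=0$, the point $u=0$ lies on the grid. Its term is $\frac1V\cdot\frac{a}{a^2}=\frac1{Va}$, not $1/V$; your computation has a spurious factor $a$. More generally, the grid points $p$ with $\varepsilon(p-\theta)\le a$ contribute up to $C(1+Va)/(Va)$. This blows up as $a\downarrow0$ at fixed $L$, precisely in the regime $a\lesssim L^{-2}$ that you single out. The missing input is discreteness. All momenta lie in $\frac{2\pi}{L}\mathbb Z_L^2$, so $a$ and $\varepsilon(\widetilde P)$ are either $0$, in which case the corresponding multiplier vanishes, or at least $4\sin^2(\pi/L)\ge cL^{-2}$. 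You also need a grid count $\#\{q\in\mathbb T_L^2:\varepsilon(q-\theta)\le s\}\le C(1+Vs)$ that is uniform in the generally off-grid shift $\theta$. With both inputs, your dyadic shells give the uniform bound \eqref{eq:shifted-lattice-resolvent}. The shift must be $\theta=\widetilde P/2$ with the centred lift $\widetilde P\in[-\pi,\pi)^2$. For another lift of $P$, the lower bound $\varepsilon(p)+\varepsilon(P-p)\ge c\,\varepsilon(p-\theta)$ fails.

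Second, your constant is not universal in $k$. You treat each spectator $l$ separately with $a=\varepsilon(p_l)$, and you take the supremum over $(P,\mathbf p')$ for each tangential multiplier. Each of the $2(k-2)$ resulting families costs order one, and genuinely so, e.g.\ when the other spectators carry zero momentum. Summing them therefore yields $Ck$, not $C$: the spectator directions are not $O(1)$ in number. The fix is to sum the squared tangential multipliers over $l$ and $\nu$ before taking the supremum:
\[
\sum_{l\ne i,j}\sum_\nu|e^{\mathrm ip_{l,\nu}}-1|^2\le\sum_{l\ne i,j}\varepsilon(p_l)=:a .
\]
Since $E\ge c\rho(p)+a$ with the same aggregated $a$, all tangential edges together cost $\frac1V\sum_p a/(c\rho(p)+a)^2\le C$. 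This is what the paper does. The $k$-independence is used downstream: it gives the factor $k^2$ in \eqref{eq:all-collision-strata} and hence the power $k^6$ in Theorem~\ref{thm:normal-trace}.
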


\begin{proof}
Set \(\mathbb Z_L:=\mathbb Z/L\mathbb Z\).  We use the unitary Fourier
transform
\[
 \widehat F(p_1,\ldots,p_k)
 :=V^{-k/2}\sum_{x_1,\ldots,x_k\in B_L}
 e^{-\mathrm i\sum_jp_j\cdot x_j}F(x_1,\ldots,x_k).
\]
Let
\[
 \mathbb T_L^2:=\frac{2\pi}{L}\mathbb Z_L^2,
 \qquad
 \varepsilon(p)=2\sum_{\nu=1}^2(1-\cos p_\nu).
\]
Fix the total momentum \(p\) of particles \(i,j\), their relative momentum
\(q\), and the momenta \(p_r\), \(r\ne i,j\), of the remaining particles,
which we call spectators.  Choose for the
class \(p\) its centred lift \(\widetilde p\in[-\pi,\pi)^2\); all
differences below are understood modulo \(2\pi\) with this lift.  On the
subspace with these momenta fixed, the product eigenvalue is
\begin{equation}
 \lambda(q):=\varepsilon(q)+\varepsilon(\widetilde p-q)+a,
 \qquad a:=\sum_{r\ne i,j}\varepsilon(p_r).
 \label{eq:product-fibre-energy}
\end{equation}
We abbreviate the resulting function of the remaining relative momentum
by \(\widehat F(q)\); the fixed values \(p,(p_r)_{r\ne i,j}\) are
suppressed.
The elementary identity
\[
 \varepsilon(q)+\varepsilon(\widetilde p-q)
 =4\sum_{\nu=1}^2
 \left[1-\cos(\widetilde p_\nu/2)
 \cos(q_\nu-\widetilde p_\nu/2)\right]
\]
implies, with constants independent of \(L,p,q\), that
\begin{equation}
 c\bigl\{\varepsilon(\widetilde p)
       +\varepsilon(q-\widetilde p/2)\bigr\}
 \leq \varepsilon(q)+\varepsilon(\widetilde p-q)
 \leq C\bigl\{\varepsilon(\widetilde p)
       +\varepsilon(q-\widetilde p/2)\bigr\}.
 \label{eq:pair-dispersion-comparison}
\end{equation}
Indeed, for one coordinate put
\(u=\cos(\widetilde p_\nu/2)\in[0,1]\) and
\(r=q_\nu-\widetilde p_\nu/2\).  Then
\[
 4(1-u\cos r)=4(1-u)+4u(1-\cos r),
 \qquad
 2(1-\cos\widetilde p_\nu)=4(1-u)(1+u).
\]
If \(u\geq1/2\), the second term in the first identity controls
\(1-\cos r\); if \(u<1/2\), the first term is bounded below by \(2\)
and hence controls the bounded quantity \(1-\cos r\).  The upper bound
follows from the same identities.

We shall use the following shifted lattice-sum estimate.  For every fixed
\(c_0>0\), uniformly in \(\theta\in\mathbb R^2\), \(L\geq3\), and
\(h\geq0\) satisfying either \(h=0\) or \(h\geq c_0L^{-2}\),
\begin{equation}
 \frac1V\sum_{q\in\mathbb T_L^2:\,h+\varepsilon(q-\theta)>0}
 \frac{h}{\bigl(h+\varepsilon(q-\theta)\bigr)^2}
 \leq C_{c_0}.
 \label{eq:shifted-lattice-resolvent}
\end{equation}
Here and below \(\varepsilon\) is evaluated modulo \(2\pi\).  To prove
\eqref{eq:shifted-lattice-resolvent}, let
\(d_{\mathbb T}(q,\theta)\) be Euclidean distance on the two-dimensional
torus.  On a centred fundamental domain,
\[
 c d_{\mathbb T}(q,\theta)^2
 \leq\varepsilon(q-\theta)
 \leq C d_{\mathbb T}(q,\theta)^2.
\]
Since \(\mathbb T_L^2\) has mesh \(2\pi/L\), a disk-packing argument gives,
uniformly in the shift,
\begin{equation}
 \#\{q\in\mathbb T_L^2:\varepsilon(q-\theta)\leq s\}
 \leq C(1+Vs),\qquad 0<s\leq8.
 \label{eq:shifted-grid-count}
\end{equation}
For completeness, place around every lattice point a disjoint square of
side \(\pi/L\).  All squares corresponding to momenta counted on the
left-hand side of \eqref{eq:shifted-grid-count} lie in a disk of radius
\(C\sqrt{s}+C/L\); comparison of areas gives
\eqref{eq:shifted-grid-count}.

If \(h=0\), every summand in \eqref{eq:shifted-lattice-resolvent} is zero.
If \(h\geq1\), every summand is at most \(h^{-1}\leq1\).  It remains to
consider \(c_0L^{-2}\leq h<1\).  Split the grid into
\[
 A_0=\{\varepsilon(q-\theta)\leq h\},\qquad
 A_j=\{2^{j-1}h<\varepsilon(q-\theta)\leq2^jh\},\quad j\geq1,
\]
stopping once \(2^jh\geq8\).  By \eqref{eq:shifted-grid-count}, the
contribution of \(A_0\) is at most
\[
 \frac{C(1+Vh)}{Vh}\leq C_{c_0},
\]
and the contribution of \(A_j\) is at most
\[
 \frac{C(1+V2^jh)}{V}\,
 \frac{h}{(2^{j-1}h)^2}
 \leq C_{c_0}4^{-j}+C2^{-j}.
\]
Summing in \(j\) proves \eqref{eq:shifted-lattice-resolvent}.  This proof
does not require \(\theta\) to be a lattice momentum; in particular it
covers the half-grid shift \(\theta=\widetilde p/2\).

Put
\[
 \mu:=a+\varepsilon(\widetilde p),\qquad
 \rho(q):=\varepsilon(q-\widetilde p/2).
\]
By \eqref{eq:pair-dispersion-comparison},
\(\lambda(q)\geq c(\mu+\rho(q))\).  Moreover, since \(p\) and all
spectator momenta lie in \(\mathbb T_L^2\), either \(\mu=0\) or
\(\mu\geq c_0L^{-2}\).  The smallest positive value of
\(\varepsilon\) on \(\mathbb T_L^2\) is
\(4\sin^2(\pi/L)\geq c_0L^{-2}\).

Restriction to \(x_i=x_j\) is the normalized sum
\(V^{-1/2}\sum_q\).  For any multiplier \(m(q)\), weighted
Cauchy--Schwarz gives
\begin{equation}
 \left|\frac1{\sqrt V}\sum_qm(q)\widehat F(q)\right|^2
 \leq
 \left(\frac1V\sum_{\lambda(q)>0}
 \frac{|m(q)|^2}{\lambda(q)^2}\right)
 \sum_q\lambda(q)^2|\widehat F(q)|^2.
 \label{eq:weighted-restriction-cs}
\end{equation}
Terms with \(\lambda(q)=0\) are omitted; the corresponding gradient
multiplier vanishes in every application below.

For tangential edges, Parseval in the spectator coordinate gives the sum
of squared multipliers
\[
 \sum_{r\ne i,j}\sum_{|e|=1}
 |e^{\mathrm i p_r\cdot e}-1|^2=2a,
\]
where using oriented rather than unoriented unit vectors changes only the
displayed factor.  Weighted Cauchy--Schwarz therefore reduces
the tangential contribution to
\begin{align}
 \frac1V\sum_{q:\lambda(q)>0}\frac{a}{\lambda(q)^2}
 &\leq \frac C V\sum_q\frac{a}{(\mu+\rho(q))^2} \notag\\
 &\leq \frac C V\sum_q\frac{\mu}{(\mu+\rho(q))^2}
 \leq C.
 \label{eq:tangential-restriction-sum}
\end{align}
If \(\mu=0\), then \(a=0\), so the left-hand side of
\eqref{eq:tangential-restriction-sum} vanishes and no \(0/0\) term is
present.

For a normal edge in a coordinate direction, permutation symmetry gives
\(\widehat F(q,\widetilde p-q)=
  \widehat F(\widetilde p-q,q)\).  Thus, after pairing the terms
\(q\) and \(\widetilde p-q\), the raw multiplier may be replaced exactly
by
\begin{equation}
 m_N(\widetilde p,q)
 :=\frac12\bigl(e^{\mathrm iq\cdot e}
 +e^{\mathrm i(\widetilde p-q)\cdot e}-2\bigr).
 \label{eq:normal-multiplier}
\end{equation}
With \(r=q-\widetilde p/2\), equation \eqref{eq:normal-multiplier} becomes
\(m_N=e^{\mathrm i\widetilde p_\nu/2}\cos r_\nu-1\).  Hence
\[
 |m_N|^2
 \leq C\bigl(\varepsilon(\widetilde p)+\rho(q)^2\bigr).
\]
Consequently,
\begin{align}
 &\frac1V\sum_{q:\lambda(q)>0}
 \frac{\varepsilon(\widetilde p)+\rho(q)^2}{\lambda(q)^2}
 \notag\\
 &\quad\leq
 \frac C V\sum_q\frac{\mu}{(\mu+\rho(q))^2}
 +\frac C V\sum_q\frac{\rho(q)^2}{(\mu+\rho(q))^2}
 \leq C.
 \label{eq:normal-restriction-sum}
\end{align}
The first sum is bounded by \eqref{eq:shifted-lattice-resolvent}; every
summand in the second is at most one.  When \(\mu=0\), the unique possible
zero of \(\lambda\) is omitted and its normal multiplier vanishes.
Summing the finitely many normal and tangential directions and using
Parseval proves \eqref{eq:single-diagonal-trace}.
\end{proof}

Every bad product edge belongs to at least one \(\mathcal B_{ij}\): choose
a colliding pair at a collision endpoint.  Moving a member of that pair is
normal, while moving another particle is tangential.  Triple and higher
collisions only lead to overcounting.  Summing
Lemma~\ref{lem:single-diagonal-trace} over the at most \(k^2/2\) pairs gives
\begin{equation}
 q_{\mathrm{bad}}(F)
 \leq Ck^2\|L_{0,L,k}F\|^2.
 \label{eq:all-collision-strata}
\end{equation}

Multiplying the block matrix in
\eqref{eq:block-product-laplacian} by the harmonic extension gives
\begin{align}
 L_{0,L,k}E_{L,k}f
 &=
 \begin{pmatrix}A&B\\B^*&C_0\end{pmatrix}
 \binom{f}{-C_0^{-1}B^*f} \notag\\
 &=\binom{(A-BC_0^{-1}B^*)f}{0}
 =\binom{(H_{L,k}+N_{L,k})f}{0}.
 \label{eq:L-on-harmonic-extension}
\end{align}
In particular,
\begin{equation}
 \|L_{0,L,k}E_{L,k}f\|^2
 =\|(H_{L,k}+N_{L,k})f\|^2.
 \label{eq:L-extension-norm}
\end{equation}

The quadratic form of \(L_{0,L,k}\) splits into edges having both
endpoints in \(D_{L,k}\) and edges having at least one collision endpoint.
Since \(E_{L,k}f=f\) on \(D_{L,k}\), the first contribution is exactly the
exclusion form.  Hence
\begin{equation}
 \langle E_{L,k}f,L_{0,L,k}E_{L,k}f\rangle
 =\langle f,H_{L,k}f\rangle+q_{\mathrm{bad}}(E_{L,k}f).
 \label{eq:extension-good-bad-splitting}
\end{equation}
On the other hand, \eqref{eq:extension-energy} makes the left-hand side
equal to \(\langle f,(H_{L,k}+N_{L,k})f\rangle\).  Subtracting
\(\langle f,H_{L,k}f\rangle\) proves
\begin{equation}
 q_{\mathrm{bad}}(E_{L,k}f)=\langle f,N_{L,k}f\rangle.
 \label{eq:N-as-bad-energy}
\end{equation}
Applying \eqref{eq:all-collision-strata} to \(E_{L,k}f\) and using
\eqref{eq:L-extension-norm} yields
\[
 \langle f,N_{L,k}f\rangle
 \leq Ck^2\langle f,(H_{L,k}+N_{L,k})^2f\rangle.
\]
Since this holds for every \(f\), we obtain
\begin{equation}
 N_{L,k}\leq Ck^2(H_{L,k}+N_{L,k})^2.
 \label{eq:quadratic-N-prebound}
\end{equation}

\subsection{Bounding the collision multiplier by effective resistance}
\label{subsec:contact-resistance}

Equation~\eqref{eq:quadratic-N-prebound} still contains \(N_{L,k}\) on its
right-hand side.  To remove it we first prove a preliminary first-order bound.
Let \(\nu_{L,k}\) be the uniform probability measure on
\(\Omega_{L,k}\), and let \(R_{L,k}(A)\) count oriented nearest-neighbour
pairs \(a\sim b\) with \(a,b\in A\).  The following estimate is the only
place where the logarithmic two-dimensional resistance enters.
We use the normalized exclusion form
\begin{equation}
 \cE_{L,k}(f):=\langle f,H_{L,k}f\rangle_{\nu_{L,k}}.
 \label{eq:normalized-exclusion-form}
\end{equation}
For a unit-conductance graph \(G\),
\(\mathsf R_{\mathrm{eff}}^G(x,y)\) denotes the electrical effective
resistance between \(x\) and \(y\).  By Thomson's principle it is the
minimum of \(\sum_e|j(e)|^2\) over antisymmetric edge flows \(j\) whose
divergence equals \(1\) at \(x\), \(-1\) at \(y\), and \(0\) elsewhere.
We write \(\dist\) for graph distance.

\begin{lemma}
\label{lem:punctured-resistance}
There is a universal constant \(C<\infty\) such that, for every integer
\(L\geq3\), for either the periodic or the free square, and for every
\(a\in B_L\),
\begin{equation}
 \max_{x,y\in B_L\setminus\{a\}}
 \mathsf R_{\mathrm{eff}}^{B_L\setminus\{a\}}(x,y)
 \leq C\log(2L).
 \label{eq:punctured-resistance}
\end{equation}
\end{lemma}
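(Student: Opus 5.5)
By Thomson's principle it suffices, for each pair \(x,y\in B_L\setminus\{a\}\), to construct a unit flow from \(x\) to \(y\) in the punctured graph \(G_a:=B_L\setminus\{a\}\) with energy at most \(C\log(2L)\). Resistance is a metric on a connected graph, so the triangle inequality lets us route through convenient intermediate points. The plan has two steps. First we bound the resistance in the unpunctured square \(B_L\). Then we show that deleting a single vertex raises it by at most an additive constant.

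\emph{Step 1 (unpunctured square).} For the free square \(\{1,\ldots,L\}^2\), take the standard two-dimensional spreading flow from \(x\) to a corner. At graph distance \(r\) from \(x\), spread the unit current uniformly over the roughly \(r\) edges of the \(\ell^\infty\)-annulus lying inside the box; a quadrant always fits, by choosing the quadrant pointing into the box. Each edge at level \(r\) then carries current \(O(1/r)\), so the energy is \(\sum_{r\le 2L}r\cdot r^{-2}\le C\log(2L)\). Concatenating two such flows, each running to a common corner, gives \(\mathsf R^{B_L}_{\mathrm{eff}}(x,y)\le C\log(2L)\). Alternatively, use the explicit monotone-path flow: average over the lattice paths that go first in direction one and then direction two, with a random intermediate cut. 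For the periodic torus, Rayleigh monotonicity gives the bound directly, because the torus contains the free square as a spanning subgraph and adding edges only lowers resistance.

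\emph{Step 2 (the puncture).} If \(\dist(x,a)\ge2\) and \(\dist(y,a)\ge2\), take the flow from Step 1. Any current through \(a\) is rerouted along the \(3\times3\) ring of the (at most eight) sites surrounding \(a\), restricted to the box in the free case; near a boundary or corner this ring is a path and is still connected after removing \(a\). The flow into and out of \(a\) has total magnitude \(O(1)\) on at most four edges. We replace it by a flow on the ring, which has boundedly many edges, so the energy increases by at most a universal constant. The flow is unchanged elsewhere, and \(\sum(j+j')^2\le2\sum j^2+2\sum j'^2\) handles the superposition. If \(x\) or \(y\) is adjacent to \(a\), choose a neighbour \(x'\) of \(x\) with \(\dist(x',a)\ge2\), joined to \(x\) by a path of length at most \(3\) avoiding \(a\); such an \(x'\) exists because \(L\ge3\). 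That path contributes \(O(1)\) resistance, and the triangle inequality handles it.

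The main obstacle is making Step 2 uniform near the boundary of the free square, in particular when \(a\) sits next to a corner. A corner site \(x\) adjacent to \(a\) may have degree \(1\) in \(G_a\), but it remains connected; and for \(L\ge3\) the ring around \(a\) intersected with \(B_L\) stays connected after deleting \(a\). This needs a short case check (\(a\) interior, on an edge, at a corner). After that check, all constants are independent of \(L\) and \(a\), which gives the bound \(C\log(2L)\) claimed in the lemma.
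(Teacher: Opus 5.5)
Your proof is correct and follows the same route as the paper: a \(C\log(2L)\) bound on the unpunctured square, then deleting the edges at \(a\) and rerouting the current through \(a\) (which has zero divergence there) among its neighbours inside the \(3\times3\) block intersected with the box, and finally Rayleigh monotonicity for the torus. The differences are minor: the paper gets the unpunctured bound from the Neumann spectral formula rather than an explicit spreading flow, and it controls the rerouting cost by \(C\sum_{z\sim a}|j(a,z)|^2\le C\sum_e|j(e)|^2\), so it needs no bound on individual edge currents (your \(O(1)\) claim relies on your flow being a superposition of boundedly many unit path flows); it also needs no separate case for \(x\) or \(y\) adjacent to \(a\), because the rerouting only uses that \(a\) is neither source nor sink.
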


\begin{proof}
We first prove the logarithmic resistance bound on the undeleted free
square \(P_L\mathbin\square P_L\), where \(P_L\) is the path on
\(\{1,\ldots,L\}\).  An orthonormal eigenbasis of the path Laplacian is
\[
 \phi_m(r)=\left(\frac{2-\delta_{m0}}L\right)^{1/2}
 \cos\!\left(\frac{\pi m(r-1/2)}L\right),
 \quad
 \lambda_m=2\left(1-\cos\frac{\pi m}L\right),
 \quad 0\leq m<L.
\]
Thus \(\psi_{mn}(r,s)=\phi_m(r)\phi_n(s)\) diagonalizes the square
Laplacian, with eigenvalue \(\lambda_m+\lambda_n\).  The spectral formula
for resistance gives, for vertices \(x,y\),
\[
 \mathsf R_{\mathrm{eff}}^{B_L}(x,y)
 =\sum_{(m,n)\ne(0,0)}
 \frac{|\psi_{mn}(x)-\psi_{mn}(y)|^2}{\lambda_m+\lambda_n}.
\]
Since \(|\psi_{mn}|\leq2/L\) and
\(\lambda_m\geq c m^2/L^2\),
\[
 \mathsf R_{\mathrm{eff}}^{B_L}(x,y)
 \leq C\sum_{\substack{0\leq m,n<L\\(m,n)\ne(0,0)}}
 \frac1{m^2+n^2}
 \leq C\log(2L).
\]
The final inequality follows by grouping the integer pairs into dyadic
annuli; each annulus contributes at most a universal constant.

We next show that deletion of one vertex changes this bound by at most a
universal factor.  If \(x=y\), there is nothing to prove.  Fix distinct
\(x,y\ne a\), and let \(j\) be the unit electrical flow from \(x\) to
\(y\) in the undeleted free square.  Orient every edge both ways, write
\(j(u,v)=-j(v,u)\), and use
\(\operatorname{div}j(u)=\sum_{v:v\sim u}j(u,v)\).  Because \(a\) is
neither source nor sink,
\[
 c_z:=j(a,z),\qquad z\sim a,qquad \sum_{z\sim a}c_z=0.
\]
Let \(P_+=\{z:c_z>0\}\) and \(P_-=\{w:c_w<0\}\).  Choose numbers
\(\alpha_{wz}\geq0\), \(w\in P_-\), \(z\in P_+\), such that
\[
 \sum_{z\in P_+}\alpha_{wz}=-c_w,
 \qquad
 \sum_{w\in P_-}\alpha_{wz}=c_z.
\]
Such a transport matrix exists because the two total masses agree.

Any two neighbours of a vertex of the free square can be joined in
\(B_L\setminus\{a\}\) by a path of length at most four.  For an interior
vertex use the surrounding \(3\times3\) square; on a side or at a corner
use its intersection with \(B_L\).  The assumption \(L\geq3\) guarantees
these paths.  For every \((w,z)\in P_-\times P_+\), choose one such path
\(\gamma_{wz}\) and send current \(\alpha_{wz}\) from \(w\) to \(z\)
along it.  Denote the sum of these path flows by \(h\).  Then
\(\operatorname{div}h(z)=-c_z\) at every neighbour \(z\) of \(a\), and
its divergence vanishes elsewhere.  Since there are at most sixteen
paths and each has length at most four,
\[
 \sum_e|h(e)|^2
 \leq C\sum_{w,z}\alpha_{wz}^2
 \leq C\left(\sum_{z\in P_+}c_z\right)^2
 \leq C\sum_{z\sim a}|c_z|^2.
\]
Remove the edges incident to \(a\) from \(j\) and add \(h\) on the
remaining graph.  The resulting flow \(j'\) has divergence \(1\) at
\(x\), \(-1\) at \(y\), and zero elsewhere, and avoids \(a\).  Moreover,
\[
 \sum_e|j'(e)|^2
 \leq2\sum_e|j(e)|^2+2\sum_e|h(e)|^2
 \leq C\sum_e|j(e)|^2
 \leq C\log(2L).
\]
Thomson's principle proves \eqref{eq:punctured-resistance} for the free
square.

Finally, on the same vertex set the periodic square contains every edge
of the free square.  After deleting \(a\), the same inclusion remains
true.  Rayleigh monotonicity, or the use of the free-square flow \(j'\) as
a trial flow in the periodic graph, therefore gives
\[
 \mathsf R_{\mathrm{eff}}^{B_L^{\mathrm{per}}\setminus\{a\}}(x,y)
 \leq
 \mathsf R_{\mathrm{eff}}^{B_L^{\mathrm{free}}\setminus\{a\}}(x,y)
 \leq C\log(2L).
\]
This completes the proof for both boundary conditions.
\end{proof}

\begin{lemma}
\label{lem:contact-bound}
There is a universal constant \(C<\infty\) such that, for every integer
\(L\geq3\), for either choice of boundary conditions
(periodic or free), for every integer
\(1\leq k\leq\lfloor V/2\rfloor\), and for every
\(f\in\ell^2(\Omega_{L,k})\),
\begin{equation}
 \langle f,R_{L,k}f\rangle_{\nu_{L,k}}
 \leq C\frac{k^2}{V}\|f\|_{\nu_{L,k}}^2
 +Ck\log(2L)\langle f,H_{L,k}f\rangle_{\nu_{L,k}}.
 \label{eq:contact-bound}
\end{equation}
\end{lemma}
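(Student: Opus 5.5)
We write \(\langle f,R_{L,k}f\rangle_{\nu_{L,k}}=\sum_{(a,b)}\mathbb E_\nu[|f|^2\1_{a,b\in A}]\), with the sum over oriented nearest-neighbour pairs, and compare the probability that \(a\) and \(b\) are both occupied with the probability that \(a\) is occupied and \(b\) is some other site.

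\textbf{Step 1 (swap decomposition).} Fix an oriented pair \(a\sim b\). For any \(y\neq a\), define the configuration-level quantity
\[
 \Phi_{a,y}(f):=\mathbb E_\nu\bigl[|f|^2\1_{a\in A,\,y\in A}\bigr].
\]
We compare \(\Phi_{a,b}\) with the average
\[
 \frac1{V-1}\sum_{y\neq a}\Phi_{a,y}=\frac{1}{V-1}\,\mathbb E_\nu\bigl[|f|^2\1_{a\in A}(|A|-1)\bigr].
\]
Summing over \(a\) and the at most four neighbours \(b\), this average contributes at most \(4(k-1)k/(V-1)\,\|f\|^2_\nu\). That is the first term \(Ck^2/V\).

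\textbf{Step 2 (transport of the second particle).} For a configuration \(A\) with \(a,b\in A\) and a target \(y\notin A\), we use the path in \(B_L\setminus\{a\}\) from Lemma~\ref{lem:punctured-resistance}. More precisely, we use the unit electrical flow \(j^{b,y}\) in the punctured graph. The particle at \(b\) is routed to \(y\) by a sequence of exclusion moves that avoid \(a\), so the particle at \(a\) is kept fixed. Other particles may block the route. We handle this by the standard device of using the "interchange" rather than exclusion: route a \emph{hole}/particle swap, which in the symmetric-exclusion representation is realised by a bounded number of exclusion moves per edge. Alternatively we use the stirring representation, in which swapping the contents of two neighbouring sites is exactly an exclusion jump or the identity.

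Writing \(g=|f|\), we have \(|g(A)-g(A')|\le|f(A)-f(A')|\). Comparing \(g^2\) along the flow gives, by Cauchy--Schwarz with the flow weights,
\[
 \bigl(\sqrt{\Phi_{a,b}}-\sqrt{\Phi_{a,y}}\bigr)^2
 \le \mathsf R_{\mathrm{eff}}^{B_L\setminus\{a\}}(b,y)\sum_{e}\;\mathbb E_\nu\bigl[\1_{a\in A}|\nabla_e^{\mathrm{swap}}f|^2\bigr].
\]
Here \(\nabla^{\mathrm{swap}}_e\) swaps the occupations of the endpoints of \(e\). The edge expectation is invariant under these swaps because \(\nu\) is uniform.

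\textbf{Step 3 (averaging and summing).} We average over \(y\) and use \((u+v)^2\le 2u^2+2v^2\), which gives
\[
 \Phi_{a,b}\le 2\,\overline{\Phi_{a,\cdot}}+C\log(2L)\sum_e\mathbb E_\nu\bigl[\1_{a\in A}|\nabla_ef|^2\bigr].
\]
We then sum over \(a\) and \(b\sim a\). Since \(\sum_a\1_{a\in A}=k\), the gradient terms sum to \(k\) times the full Dirichlet form, and \(\sum_e\mathbb E_\nu|\nabla_ef|^2\) equals \(2\langle f,H_{L,k}f\rangle_\nu\) up to a constant factor. This yields \eqref{eq:contact-bound}.

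\textbf{Main obstacle.} The hard step is Step 2: transporting one particle through a crowded configuration while keeping the cost equal to the single-particle resistance. The difficulty is making the swap (stirring) bookkeeping rigorous so that each exclusion edge is charged with bounded multiplicity, uniformly in \(k\le V/2\). The fix of the particle at \(a\) is exactly why the punctured resistance of Lemma~\ref{lem:punctured-resistance} is needed.
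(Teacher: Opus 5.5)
Your skeleton matches the paper's. Comparing \(\Phi_{a,b}\) with its average over \(y\) is a square-root version of the paper's split of \(|f(A)|^2\) into the mean \(|\bar f_B|^2\) and a fluctuation. Step~1 correctly gives \(4k(k-1)/(V-1)\), and the count \(\sum_a\1_{\{a\in A\}}=k\) in Step~3 is the anchored-energy identity \eqref{eq:anchored-energy-identity}. The inequality you display in Step~2 is also true. Since \(\nu_{L,k}\) is invariant under \(\tau_{by}\) and \(a\notin\{b,y\}\), the reverse triangle inequality in \(\ell^2(\nu_{L,k})\) bounds its left side by \(\mathbb E_{\nu}[\1_{\{a\in A,\,b\in A,\,y\notin A\}}|f\circ\tau_{by}-f|^2]\). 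This is \(k/V\) times the left side of \eqref{eq:moving-particle-use} for \(f_a(\zeta)=f(\zeta\cup\{a\})\) on \(G_a=B_L\setminus\{a\}\) with \(k-1\) particles.

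The gap is that you never prove this inequality, and the routing you sketch does not give it; you flag it as the main obstacle, and it is the whole difficulty. For one particle, \(g(b)-g(y)=\sum_e j(e)\nabla_eg\) is linear in the flow, so Cauchy--Schwarz produces \(\sum_ej(e)^2\). For \(k\geq2\), \(\tau_{by}\) can be telescoped only along one path at a time, as a word of \(2\ell-1\) adjacent swaps; passing to stirring or interchange does not change this. Increments from different paths of a flow decomposition that cross the same edge are evaluated at different intermediate configurations, so they cannot be merged into \(j(e)\nabla_ef\), and Cauchy--Schwarz yields the path length rather than the resistance.

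Concretely, routing gives \(\mathbb E_\nu[\1_{\{a\in A\}}|f\circ\tau_{by}-f|^2]\leq C\ell\sum_{e\in\gamma}\mathbb E_\nu[\1_{\{a\in A\}}|\nabla_ef|^2]\) with \(\ell\geq\dist(b,y)\), which is typically of order \(L\). Averaging over \(y\) or over paths does not help: every route leaves \(b\) through at most three edges of \(G_a\), so one of them carries a load of order \(L\). You would get \(CkL\) in place of \(Ck\log(2L)\). That does not prove the lemma, and it would also break the pressure argument, since \(\delta_t\) would then contain \(L^2e_0\asymp g^2(\log t)^2\).

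The missing ingredient is the resistance form of the moving-particle lemma, Chen's theorem \cite{Chen2017}, which the paper uses as \eqref{eq:moving-particle-use}. Its proof goes through network reduction and the octopus inequality of \cite{CaputoLiggettRichthammer2010}, not through path routing. Apply it on \(G_a\) to \(f_a\). Chen's Bernoulli statement restricts to a fixed particle number, because exchanges preserve particle number and all configurations in one sector have equal weight. Then Lemma~\ref{lem:punctured-resistance} closes Step~2, and the rest of your argument goes through as written.
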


\begin{proof}
For \(k=1\), one has \(R_{L,1}=0\), so the assertion holds.  Hence assume
\(k\geq2\).  Write \(N_k=\binom Vk\), \(n=V-k+1\), and, for \(b\in A\),
put \(B=A\setminus\{b\}\) and
\[
 \bar f_B:=\frac1n\sum_{y\notin B}f(B\cup\{y\}).
\]
Then
\begin{equation}
 |f(A)|^2\leq2|\bar f_B|^2
 +\frac2n\sum_{y\notin B}|f(A)-f(B\cup\{y\})|^2.
 \label{eq:contact-jensen}
\end{equation}
Let \(d_A(b)=\#\{a\in A:a\sim b\}\).  Since
\(R_{L,k}(A)=\sum_{b\in A}d_A(b)\), averaging
\eqref{eq:contact-jensen} gives
\[
 \langle f,R_{L,k}f\rangle_{\nu_{L,k}}\leq2M+2F,
\]
where
\begin{align*}
 M&:=\frac1{N_k}\sum_A\sum_{b\in A}
 d_A(b)|\bar f_{A\setminus\{b\}}|^2,\\
 F&:=\frac1{N_kn}\sum_A\sum_{b\in A}d_A(b)
 \sum_{y\notin A\setminus\{b\}}
 |f(A)-f((A\setminus\{b\})\cup\{y\})|^2.
\end{align*}

For the mean term, Jensen's inequality and the change of variables
\(A=B\cup\{b\}\) give
\begin{align*}
 M
 &\leq\frac1{N_kn}
 \sum_{\substack{B\subset B_L\\|B|=k-1}}
 \left(\sum_{b\notin B}\#\{a\in B:a\sim b\}\right)
 \sum_{y\notin B}|f(B\cup\{y\})|^2 \\
 &\leq\frac{4(k-1)}{N_kn}
 \sum_{|B|=k-1}\sum_{y\notin B}|f(B\cup\{y\})|^2.
\end{align*}
Every \(k\)-set \(A'\) occurs exactly \(k\) times in the last double
sum, namely as \((B,y)=(A'\setminus\{y\},y)\) with \(y\in A'\).
Consequently,
\begin{equation}
 M\leq\frac{4k(k-1)}n\|f\|_{\nu_{L,k}}^2
 \leq C\frac{k^2}V\|f\|_{\nu_{L,k}}^2,
 \label{eq:contact-mean-bound}
\end{equation}
where \(n\geq V/2\).

We turn to \(F\).  Use the occupied neighbour \(a\) of \(b\) as an
anchor, let \(G_a=B_L\setminus\{a\}\), and condition on \(a\in A\).
Then \(\zeta=A\setminus\{a\}\) is uniform among the \((k-1)\)-subsets
of \(G_a\).  Denote this law by \(\nu^a\), set
\(f_a(\zeta)=f(\zeta\cup\{a\})\), and let
\[
 \cE_a(f_a):=
 \langle f_a,H_{G_a,k-1}f_a\rangle_{\nu^a}.
\]
Since \(\nu_{L,k}(a\in A)=k/V\), the fluctuation term is exactly
\begin{align}
 F={}&\frac{k}{Vn}\sum_{a\in B_L}
 \sum_{\substack{b\in G_a\\b\sim a}}
 \sum_{y\in G_a}
 \mathbb E_{\nu^a}\left[
 \1_{\{b\in\zeta,\,y\notin\zeta\}}
 \left|f_a(\zeta\setminus\{b\}\cup\{y\})-f_a(\zeta)\right|^2
 \right].
 \label{eq:contact-conditioned-exact}
\end{align}
The term \(y=b\) vanishes, so including it does not change
\eqref{eq:contact-conditioned-exact}.

We record the normalization of the moving-particle lemma used here.  If
\(G\) is a finite connected unit-conductance graph, \(\nu_m\) is uniform
on its \(m\)-subsets, and
\(\cE_{G,m}(g)=\langle g,H_{G,m}g\rangle_{\nu_m}\), then Chen's theorem
\cite{Chen2017} gives
\begin{equation}
 \mathbb E_{\nu_m}\left[
 \1_{\{u\in\eta,\,v\notin\eta\}}
 |g(\eta\setminus\{u\}\cup\{v\})-g(\eta)|^2\right]
 \leq \mathsf R_{\mathrm{eff}}^G(u,v)\,\cE_{G,m}(g).
 \label{eq:moving-particle-use}
\end{equation}
The left-hand side of \eqref{eq:moving-particle-use} equals one half of the
expectation of the squared full \(u,v\)-swap, while \(\cE_{G,m}\) is one
half of the sum of the
nearest-neighbour exchange squares.  Thus there is no hidden factor
depending on \(|G|\) or \(m\).

Applying \eqref{eq:moving-particle-use} on \(G_a\), followed by
Lemma~\ref{lem:punctured-resistance}, bounds every expectation in
\eqref{eq:contact-conditioned-exact} by
\(C\log(2L)\cE_a(f_a)\).  There are at most four choices of \(b\), and
\(|G_a|=V-1\); since \((V-1)/n\leq2\),
\begin{equation}
 F\leq C\frac{k\log(2L)}V\sum_{a\in B_L}\cE_a(f_a).
 \label{eq:contact-fluctuation-before-anchor}
\end{equation}

It remains to verify the anchored-energy identity.  In unnormalized form,
every allowed exclusion move \(A\to A\setminus\{u\}\cup\{v\}\) is
present in \(\cE_a(f_a)\) precisely when
\(a\in A\setminus\{u\}\), hence for exactly \(k-1\) anchors.  Therefore
\[
 \binom{V-1}{k-1}\sum_{a\in B_L}\cE_a(f_a)
 =(k-1)\binom Vk\cE_{L,k}(f).
\]
Using \(\binom Vk/\binom{V-1}{k-1}=V/k\) gives
\begin{equation}
 \frac{k}{V}\sum_{a\in B_L}\cE_a(f_a)
 =(k-1)\cE_{L,k}(f).
 \label{eq:anchored-energy-identity}
\end{equation}
Equations \eqref{eq:contact-fluctuation-before-anchor} and
\eqref{eq:anchored-energy-identity} give
\(F\leq Ck\log(2L)\cE_{L,k}(f)\).  Together with
\eqref{eq:contact-mean-bound}, this proves \eqref{eq:contact-bound}.
\end{proof}

By the exclusion-process consequence of Aldous' spectral-gap theorem
\cite{CaputoLiggettRichthammer2010}, the
exclusion gap equals the one-particle random-walk gap.  For a nonnegative
operator whose kernel
consists of constants, \(\gap(T)\) denotes its smallest positive
eigenvalue.  Thus, for periodic or free squares,
\begin{equation}
 \gap(H_{L,k})\geq cL^{-2}=cV^{-1}.
 \label{eq:exclusion-gap}
\end{equation}
Since \(N_{L,k}\leq R_{L,k}\), Lemma~\ref{lem:contact-bound} and
\eqref{eq:exclusion-gap} imply, on the orthogonal complement of the
constant function,
\begin{equation}
 N_{L,k}\leq b_{L,k}H_{L,k},
 \qquad b_{L,k}:=Ck^2\log(2L).
 \label{eq:first-order-N-bound}
\end{equation}

\begin{proof}[Proof of Theorem~\ref{thm:normal-trace}]
The quadratic estimate records the second-order cancellation, while
Lemma~\ref{lem:contact-bound} gives a first-order relative bound.  We now
combine these two operator inequalities.  Work on the orthogonal complement
of constants and put
\begin{equation}
 \mathcal K_{L,k}:=H_{L,k}^{-1/2}N_{L,k}H_{L,k}^{-1/2}.
 \label{eq:relative-collision-operator}
\end{equation}
Then \(0\leq\mathcal K_{L,k}\leq b_{L,k}I\).  Multiplying
\eqref{eq:quadratic-N-prebound} on the left and right by
\(H_{L,k}^{-1/2}\) yields
\[
 \mathcal K_{L,k}
 \leq Ck^2(I+\mathcal K_{L,k})H_{L,k}(I+\mathcal K_{L,k}).
\]
Since \(\mathcal K_{L,k}\) commutes with \(I+\mathcal K_{L,k}\),
multiplying on the left and right by
\((I+\mathcal K_{L,k})^{-1}\) gives
\begin{equation}
 \mathcal K_{L,k}(I+\mathcal K_{L,k})^{-2}\leq Ck^2H_{L,k}.
 \label{eq:absorbed-functional-calculus}
\end{equation}
For \(0\leq x\leq b_{L,k}\),
\(x\leq(1+b_{L,k})^2x(1+x)^{-2}\).  Apply this scalar inequality to the
spectral decomposition of \(\mathcal K_{L,k}\) in
\eqref{eq:absorbed-functional-calculus}.
Multiplying the resulting operator inequality on the left and right by
\(H_{L,k}^{1/2}\) proves
\[
 N_{L,k}\leq Ck^2(1+b_{L,k})^2H_{L,k}^2
 \leq Ck^6\log^2(2L)H_{L,k}^2.
\]
Since \(H_{L,k}\) and \(N_{L,k}\) vanish on constants, the bound
\(N_{L,k}\leq Ck^6\log^2(2L)H_{L,k}^2\) holds on the full exclusion
space.  This proves Theorem~\ref{thm:normal-trace} for periodic boxes.

For free boundary conditions, the proof of Lemma~\ref{lem:contact-bound}
uses only the resistance bound on the free square and therefore yields the
same inequality with the same universal-constant convention.
We transfer the second-order diagonal estimate from a periodic square of
side \(2L\).  Define
\[
 \pi_L(z)=
 \begin{cases}
  z+1,&0\leq z<L,\\
  2L-z,&L\leq z<2L,
 \end{cases}
 \qquad z\in\mathbb Z/(2L\mathbb Z),
\]
and let \(\Pi_L(z_1,z_2)=(\pi_L(z_1),\pi_L(z_2))\).  For
\(u:B_L\to\mathbb C\), set \((Ju)(z)=u(\Pi_Lz)\).  Direct inspection at
the two reflection seams, as well as in the interior, gives
\begin{equation}
 \|Ju\|^2=4\|u\|^2,
 \qquad
 (-\Delta_{\mathrm{per},2L})Ju=J(-\Delta_N)u.
 \label{eq:neumann-reflection-intertwining}
\end{equation}
Tensorizing over the \(k\) particle coordinates, write
\(\widetilde F=J^{\otimes k}F\), and set
\[
 L_{0,L,k}^{N}:=\sum_{i=1}^k(-\Delta_N)_i,
 \qquad
 L_{0,2L,k}^{\mathrm{per}}
 :=\sum_{i=1}^k(-\Delta_{\mathrm{per},2L})_i.
\]
Then
\begin{equation}
 \|L_{0,2L,k}^{\mathrm{per}}\widetilde F\|^2
 =4^k\|L_{0,L,k}^{N}F\|^2.
 \label{eq:reflected-product-L2}
\end{equation}

It remains to identify the lifted collision diagonals.  On
\(\mathbb Z/(2L\mathbb Z)\), let \(r_+(z)=z\) and
\(r_-(z)=-z-1\).  Then \(\pi_L(r_\pm z)=\pi_L(z)\), and
\(\pi_L(z)=\pi_L(w)\) if and only if \(z=w\) or \(z=-w-1\).
For \(\sigma=(\sigma_1,\sigma_2)\in\{+,-\}^2\), put
\[
 r_\sigma(z_1,z_2)
 =(r_{\sigma_1}(z_1),r_{\sigma_2}(z_2)).
\]
The inverse image of the free diagonal \(x_i=x_j\) is the disjoint union
of the four affine periodic diagonals
\[
 D_{ij}^\sigma=\{z_i=r_\sigma z_j\},
 \qquad \sigma\in\{+,-\}^2.
\]
Let \(\mathcal B_{ij}^\sigma\) be the periodic product-edge family with at
least one endpoint in \(D_{ij}^\sigma\), defined in the same way as
\(\mathcal B_{ij}\) for the standard pair diagonal.
The extension \(\widetilde F\) is permutation symmetric and is invariant
under applying any \(r_\sigma\) to any one particle coordinate.  Since
every \(r_\sigma\) is a graph automorphism of the \(2L\)-periodic square,
the change of variable \(z_j\mapsto r_\sigma z_j\) maps the edge family
based on \(D_{ij}^\sigma\) to the standard family based on \(z_i=z_j\),
preserves the product Laplacian, and leaves \(\widetilde F\) unchanged.
Lemma~\ref{lem:single-diagonal-trace} therefore gives, for every
\(\sigma\),
\begin{equation}
 \sum_{e\in\mathcal B_{ij}^\sigma}
 |\nabla_e\widetilde F|^2
 \leq C\|L_{0,2L,k}^{\mathrm{per}}\widetilde F\|^2.
 \label{eq:affine-reflected-diagonal-bound}
\end{equation}

Every nonconstant free product edge has exactly \(4^k\) periodic lifts:
there are four lifts of the moved spatial edge and four choices for every
unmoved particle position.  Edges at a reflection seam which project to a
single vertex have zero \(\widetilde F\)-gradient.  Hence, if
\(\mathcal B_{ij}^{N}\) denotes the free edge family based on
\(x_i=x_j\),
\begin{align}
 4^k\sum_{e\in\mathcal B_{ij}^{N}}|\nabla_eF|^2
 &\leq \sum_{\sigma\in\{+,-\}^2}
       \sum_{e\in\mathcal B_{ij}^\sigma}
       |\nabla_e\widetilde F|^2 \\
 &\leq C\|L_{0,2L,k}^{\mathrm{per}}\widetilde F\|^2
 =C4^k\|L_{0,L,k}^{N}F\|^2.
 \label{eq:free-reflected-diagonal-bound}
\end{align}
Dividing \eqref{eq:free-reflected-diagonal-bound} by \(4^k\) proves
\eqref{eq:single-diagonal-trace} for the free square.  Summing over
the particle pairs and repeating
\eqref{eq:all-collision-strata}--\eqref{eq:absorbed-functional-calculus}
proves Theorem~\ref{thm:normal-trace} for free boundary conditions.
\end{proof}

\section{From collision control to the sharp pressure}
\label{sec:pressure}

We now turn the quadratic collision estimate into the missing free-energy
bound for spin \(1/2\).  The main object is the finite-volume partition
function \(Z_L(t)\) defined below.  We introduce explicit cutoffs on the
energy and the number of reversed spins, show that the discarded trace is
negligible, compare the retained eigenvalues with free Neumann bosons, and
then pass to the thermodynamic limit.
The finite-box reduction and the energy and particle-number cutoffs follow
the general strategy of
\cite{CorreggiGiulianiSeiringer2015,NapiorkowskiSeiringer2021}; the
harmonic-extension comparison is the new ingredient.

\subsection{Low energy and few reversed spins}
\label{subsec:low-energy-cutoff}

For the remainder of this section the boxes have free boundary.  Put
\[
 \mathcal H_L:=2H_{B_L}^{(1/2)},\qquad
 Z_L(t):=\Tr e^{-t\mathcal H_L}.
\]
Thus \(Z_L(\beta/2)\) is the physical spin-\(1/2\) partition function.
We first need a preliminary estimate which is uniform in the box.

\begin{lemma}
\label{lem:rough-sector-energy}
There is a universal constant \(c_{\mathrm{sec}}>0\) such that, for every
integer \(m\geq2\), with \(v=m^2\), and every integer
\(1\leq k\leq\lfloor v/2\rfloor\), the free square satisfies
\begin{equation}
 H_{m,k}^{\HW}\geq c_{\mathrm{sec}}\,\frac{k}{v}.
 \label{eq:rough-sector-energy}
\end{equation}
Consequently, for \(s>0\),
\begin{equation}
 Z_m(s)\leq(v+1)\sum_{k=0}^{\lfloor v/2\rfloor}
 \binom vk e^{-c_{\mathrm{sec}}sk/v}
 \leq(v+1)\exp\{ve^{-c_{\mathrm{sec}}s/v}\}.
 \label{eq:rough-box-trace}
\end{equation}
\end{lemma}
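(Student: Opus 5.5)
The plan is to compare the nearest-neighbour Hamiltonian with the mean-field (complete-graph) Hamiltonian. On the highest-weight space the mean-field Hamiltonian is an explicit Casimir, and a path comparison shows the nearest-neighbour operator dominates it up to a factor \(Cv^2\). Dividing the Casimir eigenvalue \(\asymp kv\) by \(v^2\) gives the claimed \(k/v\).

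\emph{Step 1 (mean-field eigenvalue).} For spin \(1/2\), write \(P_{xy}\) for the swap of the spins at \(x\) and \(y\). Then \(2(\tfrac14-\Sp_x\cdot\Sp_y)=1-P_{xy}\), so on each sector \(H_{m,k}=\sum_{x\sim y}(1-P_{xy})\), with the bonds of the free square counted once. Set \(H_{\mathrm{MF}}:=\sum_{\{x,y\}\subset B_m}(1-P_{xy})\), the sum running over all unordered pairs. From \(\sum_{x<y}\Sp_x\cdot\Sp_y=\tfrac12(\Sp_{\mathrm{tot}}^2-\tfrac34v)\) we get
\[
H_{\mathrm{MF}}=\tfrac{v(v-1)}4+\tfrac{3v}{4}-\Sp_{\mathrm{tot}}^2 .
\]
On \(\cM_{m,k}\) the total spin is \(J=v/2-k\), so \(\Sp_{\mathrm{tot}}^2=J(J+1)\). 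A short computation then gives \(H_{\mathrm{MF}}=(v/2-J)(v/2+J+1)=k(v-k+1)\geq kv/2\) on \(\cM_{m,k}\).

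\emph{Step 2 (path comparison).} Since \(P_{xy}\) is a unitary involution, \(\tfrac12(1-P_{xy})\) is a projection. Hence \(\langle\psi,(1-P_{xy})\psi\rangle=\tfrac12\|(1-P_{xy})\psi\|^2\), and similarly for every bond.

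For each pair \(x,y\), fix a lattice path of length \(d\leq 2m\), say first horizontal and then vertical. The transposition \((x\,y)\) is a product of \(2d-1\) adjacent transpositions \(\tau_1\cdots\tau_{2d-1}\) along this path. Telescoping with unitarity gives
\[
\|(1-P_{xy})\psi\|\leq\sum_\ell\|(1-P_{\tau_\ell})\psi\|,
\]
and Cauchy--Schwarz then yields
\[
\|(1-P_{xy})\psi\|^2\leq(2d-1)\sum_\ell\|(1-P_{\tau_\ell})\psi\|^2 .
\]
Each bond appears at most twice in a given expansion. With these L-shaped routes, a fixed bond lies on at most \(Cm^3\) of the chosen paths. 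Summing over all pairs gives
\[
H_{\mathrm{MF}}\leq C\,m\cdot m^3\,H_{m,k}=Cv^2H_{m,k}
\]
as operators on the whole sector. Restricting to \(\cM_{m,k}\), which is invariant, Step 1 gives
\[
H^{\HW}_{m,k}\geq \frac{kv}{2Cv^2}=c_{\mathrm{sec}}\frac{k}{v}.
\]
The main care needed is in this congestion count, which must be uniform in \(m\geq2\). This is elementary; for small \(m\) one may simply enlarge the constant.

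\emph{Step 3 (trace bound).} I use \eqref{eq:su2-trace-decomposition} with \(t=s\), the bound \(V-2k+1\leq v+1\), the estimate \(\dim\cM_{m,k}\leq\dim\ell^2(\Omega_{m,k})=\binom vk\), and \eqref{eq:rough-sector-energy}. Together these give the first inequality of \eqref{eq:rough-box-trace}. Extending the sum to all \(k\leq v\) and using
\[
\sum_k\binom vk e^{-ak}=(1+e^{-a})^v\leq\exp\{ve^{-a}\},\qquad a=c_{\mathrm{sec}}s/v,
\]
gives the second.
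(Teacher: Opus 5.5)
Your proposal is correct and follows the paper's proof essentially step by step. The paper likewise compares with the complete-graph exchange operator along horizontal-then-vertical Manhattan paths, using at most $2\ell-1$ adjacent exchanges and an edge load of $O(m^3)$ times path length $O(m)$, which gives a factor $Cv^2$. It then uses the Casimir identity $k(v-k+1)\geq kv/2$ on the highest-weight space, and bounds the trace by the multiplicity $\binom vk$ and the multiplet dimension $v+1$. The only difference is cosmetic: you telescope at the operator level with the unitary swaps $P_{xy}$, whereas the paper works with exclusion functions $f-\tau_e f$, and the two are equivalent.
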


\begin{proof}
Let \(K_{v,k}\) be the exclusion Laplacian of the complete graph on the
same \(v\) sites.  For each pair \(x,y\), fix a Manhattan path
\(\gamma_{xy}=(x_0,\ldots,x_\ell)\), and let \(\tau_e\) exchange the
occupations across an edge \(e\).  If \(\tau_{xy}\) exchanges the
occupations at the endpoints, the path writes
\(\tau_{xy}\) as at most \(2\ell-1\) nearest-neighbour exchanges.  Hence
\[
 \|f-\tau_{xy}f\|^2
 \leq(2\ell-1)\sum_{e\in\gamma_{xy}}
 C\|f-\tau_ef\|^2.
\]
Summing over \(x,y\), the path-length-weighted load of every lattice edge
is \(O(m^4)=O(v^2)\).  Here is the count.  Use the canonical path which
first moves horizontally from \(x=(x_1,x_2)\) to \((y_1,x_2)\), and then
vertically to \(y=(y_1,y_2)\).  A fixed horizontal edge can occur only
when \(x_2\) is its row and \(x_1,y_1\) straddle it; after choosing
\(y_2\), this gives at most \(m^3\) endpoint pairs.  Interchanging the two
coordinate directions gives the identical count for a vertical edge.
Since every path has length at most \(2m\),
the additional factor \(2\ell-1\) makes the weighted load at most
\(Cm^4\).  Therefore
\begin{equation}
 \langle f,K_{v,k}f\rangle
 \leq Cv^2\langle f,H_{m,k}f\rangle.
 \label{eq:complete-nearest-comparison}
\end{equation}
For completeness, we record the complete-graph total-spin identity.  On
\(n\) spin-half sites let \(\tau_{xy}\) exchange the two spin factors and
put
\[
 \mathbb K_n:=\sum_{1\leq x<y\leq n}(I-\tau_{xy}).
\]
Since
\[
 \tau_{xy}=2\mathbf s_x\mathbin\cdot\mathbf s_y+\frac12I,
 \qquad \mathbf S_{\mathrm{tot}}:=\sum_{x=1}^n\mathbf s_x,
\]
one obtains
\begin{equation}
 \mathbb K_n=\frac{n(n+2)}4I-\mathbf S_{\mathrm{tot}}^2.
 \label{eq:complete-graph-casimir}
\end{equation}
On a total-spin-\(J\) irreducible component this is the scalar
\(n(n+2)/4-J(J+1)\).  With \(J=n/2-k\), it equals
\begin{equation}
 \frac{n(n+2)}4-
 \left(\frac n2-k\right)\left(\frac n2-k+1\right)
 =k(n-k+1).
 \label{eq:complete-graph-casimir-sector}
\end{equation}
The restriction of \(\mathbb K_v\) to the \(k\)-down-spin sector is
\(K_{v,k}\).  Since \(k\leq v/2\),
\[
 K_{v,k}=k(v-k+1)I\geq\frac{kv}{2}I
\]
on the highest-weight multiplicity space.  Combining this with
\eqref{eq:complete-nearest-comparison} proves
\eqref{eq:rough-sector-energy}.
The highest-weight multiplicity is at most \(\binom vk\), every spin
multiplet has dimension at most \(v+1\), and summing the binomial series
gives \eqref{eq:rough-box-trace}.
\end{proof}

Let \(t\to\infty\) and choose, up to integer rounding,
\begin{equation}
 g:=t^{1/20},\qquad L^2=:V=tg.
 \label{eq:thermal-main-box}
\end{equation}
Partition the \(L\)-box into squares of volume
\begin{equation}
 m^2=\frac{t}{A\log t},
 \label{eq:rough-cell-size}
\end{equation}
where \(A\) is a sufficiently large fixed constant.  If divisibility
fails, choose nearby integers with \(L\) a multiple of \(m\); the relative
changes are \(1+o(1)\), and none of the estimates below is affected.
Dropping the positive bonds between cells and using
\eqref{eq:rough-box-trace} at time \(t/2\) gives
\begin{equation}
 \log Z_L(t/2)\leq Cg(\log t)^2.
 \label{eq:rough-main-box-trace}
\end{equation}
Indeed, there are \(V/m^2=O(g\log t)\) cells.  For one cell,
\(v=m^2\asymp t/(A\log t)\), and hence
\[
 \frac{c_{\mathrm{sec}}(t/2)}v
 \geq \frac{c_{\mathrm{sec}}A}{3}\log t
\]
for all large \(t\), for the integer choices just described.  Choose \(A\)
so that the
last coefficient is at least \(3\).  Then
\(ve^{-c_{\mathrm{sec}}t/(2v)}\leq Ct^{-2}\), and
\eqref{eq:rough-box-trace} gives \(\log Z_m(t/2)\leq C\log t\).
Multiplication by the number of cells proves
\eqref{eq:rough-main-box-trace}.

Choose
\begin{equation}
 e_0:=C_{\mathrm{cut}}\frac{g(\log t)^2}{t},
 \label{eq:energy-cutoff}
\end{equation}
where \(C_{\mathrm{cut}}\) is larger than twice the constant in
\eqref{eq:rough-main-box-trace}.  For every bounded Borel function
\(F:[0,\infty)\to\mathbb C\), define
\[
 \Tr_{\HW}F(\mathcal H_L)
 :=\sum_{k=0}^{\lfloor V/2\rfloor}
 \Tr_{\mathcal M_{L,k}}F(H_{L,k}^{\HW}).
\]
Thus the full spin trace is bounded by \(V+1\) times
\(\Tr_{\HW}\).  For every scalar \(E\geq0\), the spectral inequality
\(\1_{\{E>e_0\}}e^{-tE}\leq e^{-te_0/2}e^{-tE/2}\) gives
\begin{equation}
 \Tr_{\HW}\1_{\{\mathcal H_L>e_0\}}e^{-t\mathcal H_L}
 \leq e^{-te_0/2}Z_L(t/2)
 \leq e^{-cg(\log t)^2}.
 \label{eq:high-energy-tail}
\end{equation}
Here \(\1_{\{\mathcal H_L>e_0\}}\) is the spectral projection of
\(\mathcal H_L\) onto \((e_0,\infty)\).
The omitted factor \(V+1\) in the corresponding full trace is negligible
because \(\log(V+1)=O(\log t)=o(g(\log t)^2)\).
Applying \eqref{eq:rough-sector-energy} with \(m=L\) shows that an
eigenvector of energy at most \(e_0\) can occur only when
\begin{equation}
 k\leq K(t):=\left\lfloor c_{\mathrm{sec}}^{-1}Ve_0\right\rfloor
 \leq Cg^2(\log t)^2.
 \label{eq:magnon-number-cutoff}
\end{equation}

\subsection{Min--max comparison and the zero mode}
\label{subsec:minmax-zero-mode}

The highest-weight multiplicity space \(\cM_{L,0}\) is one-dimensional and
\(H_{L,0}^{\HW}=0\).  Hence
\(\Tr_{\cM_{L,0}}e^{-tH_{L,0}^{\HW}}=1\).  In the full trace
\eqref{eq:su2-trace-decomposition}, its \(V+1\) magnetic states are supplied
by the prefactor \(V-2k+1\).  We treat this sector separately.  Fix an integer
\(1\leq k\leq K(t)\), and let \(P_{k,e_0}\) be the spectral
projection of
\(H_{L,k}^{\HW}\) onto \([0,e_0]\).  The collision estimate is used only
after both sides are tested on the range of \(P_{k,e_0}\); no invariance
of this range under \(N_{L,k}\) is assumed.
On this range the spectral theorem gives
\(P_{k,e_0}H_{L,k}^2P_{k,e_0}
 \leq e_0P_{k,e_0}H_{L,k}P_{k,e_0}\).
Theorem~\ref{thm:normal-trace} therefore gives
\begin{equation}
 P_{k,e_0}(H_{L,k}+N_{L,k})P_{k,e_0}
 \leq(1+\delta_t)P_{k,e_0}H_{L,k}P_{k,e_0},
 \qquad
 \delta_t:=C_{\mathrm{tr}}K(t)^6\log^2(2L)e_0.
 \label{eq:low-energy-relative-bound}
\end{equation}
The choices \eqref{eq:thermal-main-box}, \eqref{eq:energy-cutoff}, and
\eqref{eq:magnon-number-cutoff} yield
\begin{equation}
 \delta_t\leq Ct^{-7/20}(\log t)^{16}=o(1).
 \label{eq:delta-small}
\end{equation}

Let
\(\lambda_{k,1}\leq\cdots\leq\lambda_{k,M_k}\leq e_0\) be precisely the
eigenvalues of \(H_{L,k}^{\HW}\) not exceeding \(e_0\), counted with
multiplicity, and let \(\mu_{k,j}\) be the increasingly ordered
eigenvalues of the full symmetric free product \(L_{0,L,k}\),
\(1\leq j\leq M_k\).  Apply min--max to the harmonic extensions of the
first \(j\) interacting eigenvectors.  These extensions remain linearly
independent because restriction to \(D_{L,k}\) is a left inverse of
\(E_{L,k}\).  By
\eqref{eq:extension-norm}, \eqref{eq:extension-energy}, and
\eqref{eq:low-energy-relative-bound},
\begin{equation}
 \mu_{k,j}\leq(1+\delta_t)\lambda_{k,j}.
 \label{eq:minmax-eigenvalue-comparison}
\end{equation}
Here \(\Sym^k\ell^2(B_L)\) denotes the \(k\)-fold symmetric tensor power
of the one-particle space.  Thus, with \(s=t/(1+\delta_t)\),
\begin{equation}
 \Tr_{\cM_{L,k}}\1_{\{H_{L,k}^{\HW}\leq e_0\}}e^{-tH_{L,k}^{\HW}}
 \leq\Tr_{\Sym^k\ell^2(B_L)}e^{-sL_{0,L,k}}.
 \label{eq:canonical-trace-comparison}
\end{equation}

Let \(Z_j^{\ne0}(s)\) be the free canonical trace, meaning the trace at
fixed particle number \(j\), with all particles in nonzero one-particle
Neumann modes, and let \(Z_k^0(s)\) include the unique zero mode.
Occupying that zero mode with \(k-j\) particles gives
\[
 Z_k^0(s)=\sum_{j=0}^k Z_j^{\ne0}(s).
\]
Consequently,
\begin{equation}
 \sum_{k=0}^{K(t)} Z_k^0(s)
 \leq(K(t)+1)
 \prod_{\eta\in\spec(-\Delta_N)\setminus\{0\}}
 (1-e^{-s\eta})^{-1},
 \label{eq:zero-mode-grand-canonical}
\end{equation}
where the product sums all occupation numbers independently and therefore
is the grand-canonical partition function over the positive eigenvalues of
the one-particle Neumann Laplacian.  Combining
\eqref{eq:su2-trace-decomposition}, \eqref{eq:high-energy-tail},
\eqref{eq:canonical-trace-comparison}, and
\eqref{eq:zero-mode-grand-canonical}, we find
\begin{equation}
 \log Z_L(t)\leq\log(V+1)+\log(K(t)+1)
 -\sum_{\eta\in\spec(-\Delta_N)\setminus\{0\}}
 \log(1-e^{-s\eta})+o(1).
 \label{eq:box-pressure-before-heat}
\end{equation}

\subsection{The Neumann pressure and the thermodynamic limit}
\label{subsec:neumann-pressure}

The remaining expression is an ideal Bose pressure.  The following
uniform estimate keeps the boundary and lattice errors below the scale of
the main term.

\begin{lemma}
\label{lem:neumann-heat-trace}
For every integer \(L\geq2\), let
\[
 \varepsilon_{m,n}:=
 2\bigl(1-\cos(\pi m/L)\bigr)
 +2\bigl(1-\cos(\pi n/L)\bigr),
 \qquad 0\leq m,n\leq L-1.
\]
Uniformly for \(1\leq s\leq L^2\),
\begin{equation}
 -\sum_{(m,n)\ne(0,0)}\log(1-e^{-s\varepsilon_{m,n}})
 =\frac{\pi L^2}{24s}
 +O\!\left(
 \frac{L}{\sqrt s}
 +\log\!\left(2+\frac{L^2}{s}\right)
 +\frac{L^2}{s^2}+1\right).
 \label{eq:neumann-heat-trace}
\end{equation}
\end{lemma}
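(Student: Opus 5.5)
The plan is to compare the lattice sum with a Riemann integral and to evaluate that integral by scaling. Put \(\varphi(x):=-\log(1-e^{-x})\) for \(x>0\). This function is positive and decreasing, and it satisfies \(\varphi(x)\le C(1+|\log x|)\) for \(x\le1\) and \(\varphi(x)\le Ce^{-x}\) for \(x\ge1\). Write \(a(u):=4\sin^2(u/2)\), so that \(\varepsilon_{m,n}=a(\pi m/L)+a(\pi n/L)\). Since \(a\) is increasing on \([0,\pi]\), the summand \(\varphi(s\varepsilon_{m,n})\) is decreasing in \(m\) and in \(n\) separately. Let \(\varepsilon(u)\) denote \(a(u_1)+a(u_2)\) for \(u\in[0,\pi]^2\). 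The target identity is
\[
 \frac{L^2}{\pi^2}\int_{[0,\infty)^2}\frac{\varphi(|u|^2s)}{1}\,du
 =\frac{L^2}{\pi^2 s}\cdot\frac{\pi}{4}\int_0^\infty\varphi(r)\,dr
 =\frac{L^2}{\pi^2 s}\cdot\frac{\pi}{4}\cdot\frac{\pi^2}{6}
 =\frac{\pi L^2}{24s}.
\]
This is the main term, and it is consistent with \eqref{eq:bose-integral}.

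\emph{Step 1 (sum versus integral).} For each \((m,n)\) with \(m,n\ge1\), monotonicity lets me bound \(\varphi(s\varepsilon_{m,n})\) above by the average of \(\varphi(s\varepsilon)\) over the cell \([\pi(m-1)/L,\pi m/L]\times[\pi(n-1)/L,\pi n/L]\). In the same way it is bounded below by the average over the next cell. Summing these bounds shows that \(\sum_{m,n\ge1}\varphi(s\varepsilon_{m,n})\) lies between \((L/\pi)^2\int_{[0,\pi]^2}\varphi(s\varepsilon)\) and that integral minus a boundary strip of width \(\pi/L\) along the two axes. The strip contributes \((L/\pi)\int_0^\pi\varphi(s\,a(u))\,du+O(1)\), which is \(O(L/\sqrt s)\) after substituting \(u=p/\sqrt s\). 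The logarithmic singularity of \(\varphi\) at \(0\) is integrable in one dimension, so it costs at most \(O(\log(2+L^2/s))\). The two axis lines \(m=0\) and \(n=0\) are one-dimensional sums. I treat them by the same monotone comparison. Their bulk is \(O(L/\sqrt s)\), and the first term \(\varphi(s\,a(\pi/L))\approx\log(L^2/(\pi^2 s))\) produces the \(\log(2+L^2/s)\) term. The remaining terms are \(O(1)\).

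\emph{Step 2 (integral versus Gaussian model).} I replace \(\varepsilon(u)\) by \(|u|^2\), using \(\varepsilon(u)=|u|^2(1+O(|u|^2))\) together with \(\varepsilon(u)\ge c|u|^2\) on \([0,\pi]^2\). By the mean value theorem, the error is bounded by
\[
 \frac{L^2}{\pi^2}\int_{[0,\pi]^2}\sup_{\xi\asymp s|u|^2}|\varphi'(\xi)|\,s|u|^4\,du
 \le C\frac{L^2}{s^2}\int_{\R^2}(1+|p|^{-2})e^{-c|p|^2}|p|^4\,dp
 =O\!\left(\frac{L^2}{s^2}\right),
\]
where I have used \(|\varphi'(x)|\le C x^{-1}e^{-cx}\) and the substitution \(p=\sqrt s\,u\). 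Extending the domain from \([0,\pi]^2\) to \([0,\infty)^2\) costs \((L^2/s)\int_{|p|\ge\pi\sqrt s}e^{-|p|^2}\,dp\). Because \(s\ge1\), this is bounded by \(CL^2/s^2\).

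\emph{Step 3 and the main obstacle.} Collecting the error terms gives \eqref{eq:neumann-heat-trace}. The constraint \(s\le L^2\) is used only to keep \(\log(2+L^2/s)\) and the first-cell estimates uniform. The delicate step is the neighbourhood of the origin in Step 1. There the excluded mode \((0,0)\) and the \(\log\)-singular first modes \((1,0)\) and \((0,1)\) must be handled by hand, rather than through the cell comparison, so that they contribute only \(O(\log(2+L^2/s)+1)\). I expect to settle this by bounding the few terms with \(m+n\le2\) directly via \(\varphi(x)\le C(1+|\log x|)\). The monotone cell comparison then applies only to the remaining indices.
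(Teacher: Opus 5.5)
Your proposal is correct, but it takes a different route from the paper.

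\emph{The paper's route.} The paper never compares \(\varphi(s\varepsilon)\) with an integral directly. It expands \(-\log(1-e^{-x})=\sum_{\ell\ge1}e^{-\ell x}/\ell\). It then uses the product structure to write the two-dimensional sum as \(\sum_{\ell\ge1}(\Theta_L(\ell s)^2-1)/\ell\), where \(\Theta_L\) is the one-dimensional Neumann heat trace. Next it proves \(\Theta_L(u)=L/\sqrt{4\pi u}+O(1+Lu^{-3/2})\) for \(1\le u\le L^2\) by Gaussian comparison. Finally it splits the \(\ell\)-sum at \(M=\lfloor L^2/s\rfloor\). In that argument the main term is \(\frac{L^2}{4\pi s}\zeta(2)\), the \(\log(2+L^2/s)\) comes from \(\sum_{\ell\le M}1/\ell\), and the tail \(\ell>M\) is controlled by the one-particle gap.

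\emph{Your route.} You work instead with the non-separable function \(\varphi(s\varepsilon(u))\) in two dimensions. Coordinatewise monotonicity gives you a two-sided cell comparison; this is available precisely because the Neumann momenta \(\pi m/L\) lie in \([0,\pi]\), where \(a\) is increasing. A mean-value comparison of \(\varepsilon\) with \(|u|^2\) then gives the continuum paraboloid. The constant arises as \(\int_0^\infty\varphi=\pi^2/6\) in polar coordinates over the quarter plane.

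\emph{What each buys.} Your argument avoids the series expansion and the two regimes in \(\ell\). The paper's argument only ever handles one-dimensional Gaussian sums, with no singular integrand.

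\emph{Two small remarks.} Your worry in Step 3 is unnecessary. The upper cell bound at \((1,1)\), \((1,0)\), \((0,1)\) is legitimate as it stands, because \(\log|u|\) is integrable in one and two dimensions. The same monotone bound also shows that each axis line is at most \((L/\pi)\int_0^\pi\varphi(s\,a(u))\,du=O(L/\sqrt s)\). So the first modes need no separate treatment, and the logarithmic term is not actually needed; when \(s\le L^2\) it is in any case dominated by \(1+L/\sqrt s\). Also, the boundary strip is bounded by \emph{twice} \((L/\pi)\int_0^\pi\varphi(s\,a(u))\,du\), one copy for each axis. Neither point affects the conclusion.
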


\begin{proof}
Write
\[
 \Theta_L(u):=\sum_{m=0}^{L-1}
 e^{-2u(1-\cos(\pi m/L))}.
\]
Gaussian comparison of \(2(1-\cos q)\) with \(q^2\), followed by a
Riemann-sum estimate, gives, for \(1\leq u\leq L^2\),
\begin{equation}
 \Theta_L(u)=\frac{L}{\sqrt{4\pi u}}
 +O\!\left(1+\frac{L}{u^{3/2}}\right).
 \label{eq:one-dimensional-neumann-trace}
\end{equation}
Indeed, for \(|q|\leq\pi\),
\[
 \left|e^{-2u(1-\cos q)}-e^{-uq^2}\right|
 \leq Cuq^4e^{-cuq^2}.
\]
Summing this estimate at \(q=\pi m/L\) contributes
\(O(Lu^{-3/2})\), while comparison of the decreasing Gaussian sum with
its integral contributes \(O(1)\).  For \(u\geq L^2\), the modewise bound
\[
 2\bigl(1-\cos(\pi m/L)\bigr)\geq c\,m^2/L^2
\]
implies
\[
 \Theta_L(u)-1
 \leq\sum_{m\geq1}e^{-cu m^2/L^2}
 \leq Ce^{-c'u/L^2}.
\]

Expanding the logarithm and using monotone convergence gives
\begin{equation}
 -\sum_{(m,n)\ne(0,0)}\log(1-e^{-s\varepsilon_{m,n}})
 =\sum_{\ell\geq1}\frac{\Theta_L(\ell s)^2-1}{\ell}.
 \label{eq:heat-log-expansion}
\end{equation}
Put
\[
 R:=\frac{L^2}{s}\geq1,
 \qquad M:=\lfloor R\rfloor.
\]
For \(1\leq\ell\leq M\), write
\[
 \Theta_L(\ell s)
 =\frac{L}{\sqrt{4\pi\ell s}}+\mathcal R_\ell,
 \qquad
 |\mathcal R_\ell|
 \leq C\left(1+\frac{L}{(\ell s)^{3/2}}\right).
\]
Since \(\ell s\geq1\), expanding the square gives
\begin{align}
 \left|
 \Theta_L(\ell s)^2-1-\frac{L^2}{4\pi\ell s}
 \right|
 &\leq C\left(
 1+\frac{L}{\sqrt{\ell s}}+\frac{L^2}{(\ell s)^2}
 \right).
 \label{eq:heat-trace-square-error}
\end{align}
Indeed,
\[
 2\frac{L}{\sqrt{4\pi\ell s}}|\mathcal R_\ell|
 \leq C\left(\frac{L}{\sqrt{\ell s}}
 +\frac{L^2}{(\ell s)^2}\right),
\]
while
\[
 |\mathcal R_\ell|^2
 \leq C\left(1+\frac{L^2}{(\ell s)^3}\right)
 \leq C\left(1+\frac{L^2}{(\ell s)^2}\right).
\]
Consequently,
\begin{align}
 &\sum_{\ell=1}^{M}\frac1\ell
 \left|
 \Theta_L(\ell s)^2-1-\frac{L^2}{4\pi\ell s}
 \right| \notag\\
 &\quad\leq C\sum_{\ell=1}^{M}\frac1\ell
 +C\frac{L}{\sqrt s}\sum_{\ell=1}^{M}\ell^{-3/2}
 +C\frac{L^2}{s^2}\sum_{\ell=1}^{M}\ell^{-3} \notag\\
 &\quad\leq C\left[
 \log\!\left(2+\frac{L^2}{s}\right)
 +\frac{L}{\sqrt s}+\frac{L^2}{s^2}\right].
 \label{eq:heat-trace-summed-error}
\end{align}
The truncated main term differs from the full Bose sum by a bounded
amount.  Indeed,
\[
 0\leq\sum_{\ell>M}\frac1{\ell^2}\leq\frac{C}{M},
 \qquad \frac{R}{M}\leq2,
\]
and therefore
\begin{equation}
 \frac{L^2}{4\pi s}\sum_{\ell=1}^{M}\frac1{\ell^2}
 =\frac{L^2}{4\pi s}\sum_{\ell\geq1}\frac1{\ell^2}+O(1)
 =\frac{\pi L^2}{24s}+O(1).
 \label{eq:heat-trace-main-term}
\end{equation}

It remains to control \(\ell>M\).  Since \(\ell s>L^2\), the large-\(u\)
estimate above gives
\[
 0\leq\Theta_L(\ell s)^2-1
 \leq Ce^{-c\ell s/L^2}=Ce^{-c\ell/R}.
\]
Moreover, \(M/R\geq1/2\).  Comparison with an integral yields
\begin{equation}
 \sum_{\ell>M}\frac{\Theta_L(\ell s)^2-1}{\ell}
 \leq C\sum_{\ell>M}\frac{e^{-c\ell/R}}{\ell}
 \leq C\int_{M/R}^{\infty}\frac{e^{-cu}}{u}\,du+C
 \leq C.
 \label{eq:heat-trace-large-ell}
\end{equation}
Combining \eqref{eq:heat-log-expansion}--
\eqref{eq:heat-trace-large-ell} proves
\eqref{eq:neumann-heat-trace}.
\end{proof}

In our choice of scales, \(L^2=tg\) and \(s=t(1+o(1))\).  Therefore
\[
 \frac{L}{\sqrt s}=O(\sqrt g),\qquad
 \log\!\left(2+\frac{L^2}{s}\right)=O(\log g),\qquad
 \frac{L^2}{s^2}=O(g/t),
\]
all of which are \(o(g)=o(V/t)\).  Also
\(\log V+\log(K(t)+1)=o(g)\).  Equations
\eqref{eq:box-pressure-before-heat} and
\eqref{eq:neumann-heat-trace} hence give
\begin{equation}
 \log Z_L(t)\leq\frac{\pi}{24}\frac{V}{t}+o(V/t).
 \label{eq:sharp-spinhalf-box-pressure}
\end{equation}

Tile a square of side \(nL\) by \(n^2\) translates of \(B_L\), and write
its Hamiltonian as \(\mathcal H_0+W\), where \(\mathcal H_0\) is the sum
of the box Hamiltonians and \(W\geq0\) is the sum of the interbox bond
operators.  For \(0\leq u\leq1\), Duhamel's formula and cyclicity of the
trace give
\begin{equation}
 \frac{d}{du}\log\Tr e^{-t(\mathcal H_0+uW)}
 =-t\,\frac{\Tr\bigl(W e^{-t(\mathcal H_0+uW)}\bigr)}
              {\Tr e^{-t(\mathcal H_0+uW)}}\leq0.
 \label{eq:bond-dropping-duhamel}
\end{equation}
Thus removing the interbox bonds increases the partition function, and
\(\log Z_{nL}(t)\leq n^2\log Z_L(t)\).  Before taking the two limits, set
\[
 \mathcal H_\Lambda:=2H_\Lambda^{(1/2)},\qquad
 p_{1/2}(t):=\lim_{\Lambda\uparrow\Z^2}
 \frac1{|\Lambda|}\log\Tr e^{-t\mathcal H_\Lambda}.
\]
First taking \(n\to\infty\), and then \(t\to\infty\), in
\eqref{eq:sharp-spinhalf-box-pressure} gives
\[
 \limsup_{t\to\infty}t\,p_{1/2}(t)\leq\frac{\pi}{24}.
\]
Since \(t=\beta/2\), this is
\begin{equation}
 \liminf_{\beta\to\infty}
 \beta^2\frac12 f_2(\beta,\tfrac12)\geq-\frac\pi{24}.
 \label{eq:spinhalf-new-direction}
\end{equation}
This proves the new direction of Theorem~\ref{thm:main-free-energy} for
spin \(1/2\).

\section{Arbitrary fixed spin through a decorated graph}
\label{sec:fixed-spin}

It remains to pass from spin \(1/2\) to arbitrary fixed \(S\).  The main
object is the square lattice decorated by \(r=2S\) spin-half fibres at
each site.  Its one-particle space splits into functions constant in the
fibre variable and functions orthogonal to them.  On the first subspace the
operator is the scalar lattice Laplacian; on the second it has a positive
lower bound independent of the box size.  We prove the decorated analogue
of Theorem~\ref{thm:normal-trace} and evaluate the two spectral contributions
separately.

\subsection{The decorated graph and its one-particle decomposition}
\label{subsec:decorated-graph}

For this subsection \(B_L\) may have periodic or free boundary conditions.
Put \(r:=2S\), write \([r]:=\{1,\ldots,r\}\), and set
\(\mathcal V_{L,r}:=B_L\times[r]\).  We use the weighted decorated graph
\(\Gamma_{L,r}\) with conductances
\begin{equation}
 c_{(x,a),(y,b)}:=\frac1r\,\1_{\{x\sim y\}},
 \qquad (x,a),(y,b)\in\mathcal V_{L,r}.
 \label{eq:decorated-conductances}
\end{equation}
Thus every base edge is replaced by the complete bipartite graph
\(K_{r,r}\), with conductance \(1/r\) on each of its edges.  The total
conductance out of \((x,a)\) is
\(\deg_{B_L}(x)\), independently of \(r\).  In particular,
\begin{equation}
 \Delta_c:=\max_{\xi\in\mathcal V_{L,r}}
 \sum_{\eta\in\mathcal V_{L,r}}c_{\xi\eta}\leq4.
 \label{eq:decorated-weighted-degree}
\end{equation}

On the unweighted \(K_{r,r}\)-decorated square consider spin-\(1/2\)
generators \(\mathbf s_{x,a}\) and the Hamiltonian
\begin{equation}
 \widetilde H_L
 :=\sum_{\langle x,y\rangle}\sum_{a,b=1}^r
 \left(\frac14-\mathbf s_{x,a}\mathbin\cdot\mathbf s_{y,b}\right).
 \label{eq:decorated-hamiltonian}
\end{equation}
Let \(V_S\cong\mathbb C^{2S+1}\) denote the irreducible spin-\(S\)
representation space.
Let \(\mathcal H_{x,\mathrm{sym}}\subset(\mathbb C^2)^{\otimes r}\) be the
fully symmetric fibre.  It is the irreducible spin-\(S=r/2\)
representation.  Since
\(\sum_{a=1}^r\mathbf s_{x,a}=\Sp_x\) on this subspace, one has,
for every base bond,
\begin{equation}
 \sum_{a,b=1}^r
 \left(\frac14-\mathbf s_{x,a}\mathbin\cdot\mathbf s_{y,b}\right)
 =\frac{r^2}{4}-\Sp_x\mathbin\cdot\Sp_y
 =S^2-\Sp_x\mathbin\cdot\Sp_y.
 \label{eq:fibre-bond-identity}
\end{equation}
The tensor product of the symmetric fibres is invariant under
\(\widetilde H_L\).  Define the physical finite-volume partition function
by
\[
 Z_L^{(S)}(\beta):=
 \Tr_{\bigotimes_{x\in B_L}V_S}e^{-\beta H_{B_L}^{(S)}},
\]
where free boundary conditions are understood.  Under the natural unitary
identification \(\mathcal H_{x,\mathrm{sym}}\cong V_S\),
\begin{equation}
 Z_L^{(S)}(\beta)
 =\Tr_{\otimes_x\mathcal H_{x,\mathrm{sym}}}
 e^{-\beta\widetilde H_L}
 \leq \Tr_{(\mathbb C^2)^{\otimes rV}}e^{-\beta\widetilde H_L}.
 \label{eq:physical-decorated-trace}
\end{equation}

For a spin-half ferromagnet, twice the Hamiltonian restricted to a fixed
number of down spins is the exclusion generator with unit conductances.
Consequently
\begin{equation}
 \mathcal H_L^{(r)}:=\frac2r\widetilde H_L
 \label{eq:normalized-decorated-hamiltonian}
\end{equation}
is, in every down-spin sector, the exclusion generator on the weighted
graph \(\Gamma_{L,r}\) of \eqref{eq:decorated-conductances}.  Explicitly,
if \(A\subset\mathcal V_{L,r}\), \(|A|=k\), then
\begin{equation}
 (H_{L,k}^{(r)}f)(A)
 =\sum_{\substack{\xi\in A,\ \eta\notin A}}
 c_{\xi\eta}\bigl(f(A)-f(A\setminus\{\xi\}\cup\{\eta\})\bigr).
 \label{eq:decorated-exclusion-generator}
\end{equation}
Since \(\widetilde H_L=(r/2)\mathcal H_L^{(r)}\),
\begin{equation}
 \Tr e^{-\beta\widetilde H_L}
 =\Tr e^{-t\mathcal H_L^{(r)}},
 \qquad t:=\frac{\beta r}{2}=\beta S.
 \label{eq:decorated-time-normalization}
\end{equation}

We now compute the one-particle operator without suppressing its
normalization.  Put
\(u_r=r^{-1/2}(1,\ldots,1)\in\mathbb C^r\),
\(P=|u_r\rangle\langle u_r|\), and \(Q=I-P\).  From
\eqref{eq:decorated-conductances}, for
\(\psi\in\ell^2(B_L)\otimes\mathbb C^r\),
\begin{align}
 (h_r\psi)(x,a)
 &=\sum_{y\sim x}\sum_{b=1}^r\frac1r
   \bigl(\psi(x,a)-\psi(y,b)\bigr)\notag\\
 &=\deg_{B_L}(x)\psi(x,a)
   -\sum_{y\sim x}(P\psi)(y,a).
 \label{eq:decorated-one-particle-action}
\end{align}
It follows that
\begin{equation}
 h_r=(-\Delta_{\mathrm{bc}})\otimes P
     +M_{\deg,\mathrm{bc}}\otimes Q.
 \label{eq:acoustic-optical-split}
\end{equation}
Here \(\mathrm{bc}\) denotes the chosen free or periodic boundary
condition, and
\((M_{\deg,\mathrm{bc}}u)(x)=\deg_{B_L}(x)u(x)\).
The spaces \(\ell^2(B_L)\otimes P\mathbb C^r\) and
\(\ell^2(B_L)\otimes Q\mathbb C^r\) are invariant under \(h_r\).  On the
first, \(h_r\) is unitarily equivalent to the scalar base-square
Laplacian.  On the second,
\begin{equation}
 h_r\geq \min_{x\in B_L}\deg_{B_L}(x)\,Q\geq2Q,
 \label{eq:optical-gap-two}
\end{equation}
for \(L\geq3\), for both boundary conditions.  Thus the normalized
one-particle energies are \(\varepsilon(p)\) on the \(P\)-subspace and at
least \(2\) on the \(Q\)-subspace.  In physical units they are
\(S\varepsilon(p)\) and at least \(2S\), respectively.

\subsection{Collision control at fixed fibre width}
\label{subsec:decorated-collision}

Let
\[
 X_{L,k}^{(r)}=\mathcal V_{L,r}^k,
 \qquad
 D_{L,k}^{(r)}
 =\{(\xi_1,\ldots,\xi_k):\xi_i\ne\xi_j\text{ for }i\ne j\}.
\]
On the ordered product space put
\[
 L_{0,L,k}^{(r)}:=\sum_{i=1}^kh_{r,i}.
\]
Here \(h_{r,i}\) acts as \(h_r\) in coordinate \(i\) and as the identity
in all other coordinates.  Denote by
\(\mathcal E(\Gamma_{L,r}^{\square k})\) the set of unoriented edges of
the weighted Cartesian product graph.
Every product edge changes a single coordinate along an edge of
\(\Gamma_{L,r}\), and is assigned the conductance of that decorated edge.
Thus, with each unoriented product edge counted once,
\begin{equation}
 \langle F,L_{0,L,k}^{(r)}F\rangle
 =\sum_{e\in\mathcal E(\Gamma_{L,r}^{\square k})}
 c(e)|\nabla_eF|^2.
 \label{eq:decorated-product-form}
\end{equation}
The restriction of this form to \(D_{L,k}^{(r)}\), after passing to the
permutation-invariant subspace, is precisely
\eqref{eq:decorated-exclusion-generator}.  For \(k=1\) there is no
collision set, and we put \(E_{L,1}^{(r)}=I\) and
\(N_{L,1}^{(r)}=0\).  Let \(2\leq k\leq rV/2\) below.  Decompose the
symmetric product space into the distinct and collision sets.  The
Dirichlet collision block is strictly positive, and
the block decomposition, harmonic extension, and Schur complement of
Section~\ref{subsec:harmonic-extension} give operators
\(E_{L,k}^{(r)}\) and \(N_{L,k}^{(r)}\geq0\).  In particular,
\begin{align}
 L_{0,L,k}^{(r)}E_{L,k}^{(r)}f
 &=\binom{(H_{L,k}^{(r)}+N_{L,k}^{(r)})f}{0},
 \label{eq:decorated-L-on-extension}\\
 0\leq N_{L,k}^{(r)}&\leq R_{L,k}^{(r)},
 \label{eq:decorated-N-below-R}
\end{align}
where \(R_{L,k}^{(r)}\) is multiplication by the total conductance of
jumps from a distinct configuration into the collision set.
To justify the asserted positivity: \(\Gamma_{L,r}\) and hence its
Cartesian product are connected, while \(D_{L,k}^{(r)}\ne\varnothing\)
for \(k\leq rV\).  The Dirichlet restriction of the Laplacian of a
connected finite graph to the complement of a nonempty set has trivial
kernel; the same is true after restriction to the symmetric subspace.

For a product edge call it bad if at least one endpoint is outside
\(D_{L,k}^{(r)}\), and set
\begin{equation}
 q_{\mathrm{bad}}^{(r)}(F)
 :=\sum_{e\ \mathrm{bad}}c(e)|\nabla_eF|^2.
 \label{eq:decorated-bad-edge-form}
\end{equation}
For \(i<j\), let \(\mathcal C_{ij}=\{\xi_i=\xi_j\}\), and let
\(q_{ij}^{(r)}\) be the part of the weighted product form carried by edges
with at least one endpoint in \(\mathcal C_{ij}\).  Every bad edge is
contained in at least one of these pair forms, whence
\begin{equation}
 q_{\mathrm{bad}}^{(r)}(F)
 \leq\sum_{1\leq i<j\leq k}q_{ij}^{(r)}(F).
 \label{eq:decorated-bad-pair-cover}
\end{equation}

We first prove the required single-pair estimate.  Set
\(\Pi_{\sigma\tau}^{ij}=\sigma_i\tau_j\), for
\(\sigma,\tau\in\{P,Q\}\), and
\(F_{\sigma\tau}=\Pi_{\sigma\tau}^{ij}F\).  Since a squared difference
is a quadratic form and \(F=\sum_{\sigma,\tau}F_{\sigma\tau}\),
edgewise Cauchy--Schwarz gives
\begin{equation}
 q_{ij}^{(r)}(F)
 \leq4\sum_{\sigma,\tau\in\{P,Q\}}
 q_{ij}^{(r)}(F_{\sigma\tau}).
 \label{eq:decorated-channel-cauchy}
\end{equation}

We first estimate the \(PP\) component, including free boundary conditions.
After identifying \(P\mathbb C^r\) with \(\mathbb C\), the function
\(F_{PP}\) may be written as a vector-valued function
\[
 G(x_i,x_j)\in\mathcal K_{\mathrm{sp}},
\]
where \(\mathcal K_{\mathrm{sp}}\) contains the variables of the remaining
\(k-2\) particles.  More explicitly,
\begin{equation}
 F_{PP}(x,a;y,b;\widehat\xi)=\frac1rG(x,y;\widehat\xi).
 \label{eq:PP-unitary-identification}
\end{equation}
Because \(F\) is permutation invariant, \(G\) is exchange symmetric in
the two selected base coordinates.  Let
\[
 A_{\mathrm{sp}}:=\sum_{\ell\ne i,j}h_{r,\ell}\geq0
\]
on \(\mathcal K_{\mathrm{sp}}\).  The first positive eigenvalue of an
individual \(P\)-subspace factor is at least \(cL^{-2}\), and every
\(Q\)-subspace eigenvalue is at least \(2\).  Consequently, for a constant
\(c_r>0\),
\begin{equation}
 \operatorname{spec}(A_{\mathrm{sp}})
 \subset\{0\}\cup[c_rL^{-2},\infty).
 \label{eq:spectator-gap}
\end{equation}
The free product operator on this component is
\[
 (-\Delta_{\mathrm{bc}})_i+(-\Delta_{\mathrm{bc}})_j+A_{\mathrm{sp}}.
\]
For a normal edge, summation over its two fibre labels and use of the
conductance \(1/r\) gives
\[
 \sum_{a,b=1}^r\frac1r\left|\frac1r
 (G(y,x;\widehat\xi)-G(x,x;\widehat\xi))\right|^2
 =\frac1r|G(y,x;\widehat\xi)-G(x,x;\widehat\xi)|^2.
\]
On the pair diagonal the two selected fibre labels coincide.  For a
tangential spectator edge, summing this common label therefore contributes
the factor \(r\cdot r^{-2}=r^{-1}\).  Thus
\begin{equation}
 q_{ij}^{(r)}(F_{PP})=\frac1r q_{ij}^{\mathrm{base}}(G),
 \label{eq:PP-form-reduction}
\end{equation}
where the nonnegative spectator operator is retained in
\(q_{ij}^{\mathrm{base}}\).

For periodic boundary conditions, resolve \(A_{\mathrm{sp}}\) spectrally.
On an eigenspace of \(A_{\mathrm{sp}}\), its value is a number
\(a\in\{0\}\cup[c_rL^{-2},\infty)\) in
\eqref{eq:product-fibre-energy}.  In precisely this range, the two lattice sums
\eqref{eq:tangential-restriction-sum} and
\eqref{eq:normal-restriction-sum} are uniform in \(a\).  Integrating the
scalar estimate over the spectral measure of \(A_{\mathrm{sp}}\) therefore
gives
\begin{equation}
 q_{ij}^{(r)}(F_{PP})
 \leq\frac Cr\|L_{0,L,k}^{(r)}F_{PP}\|^2.
 \label{eq:decorated-PP-periodic-trace}
\end{equation}

For free boundary conditions we reflect only the two selected spatial
coordinates and leave \(\mathcal K_{\mathrm{sp}}\) and \(A_{\mathrm{sp}}\) unchanged.
Let \(J\) be the two-dimensional even extension defined in
\eqref{eq:neumann-reflection-intertwining}, and let
\(\mathcal J_{ij}\) act as \(J\) in coordinates \(i,j\) and as the
identity on \(\mathcal K_{\mathrm{sp}}\).  Then
\begin{align}
 \mathcal J_{ij}
 \bigl[(-\Delta_N)_i+(-\Delta_N)_j+A_{\mathrm{sp}}\bigr]
 &=\bigl[(-\Delta_{\mathrm{per},2L})_i
 +(-\Delta_{\mathrm{per},2L})_j+A_{\mathrm{sp}}\bigr]\mathcal J_{ij},
 \label{eq:decorated-PP-reflection-intertwining}\\
 \|\mathcal J_{ij}G\|^2&=16\|G\|^2.
 \label{eq:decorated-PP-reflection-norm}
\end{align}
The two selected coordinates share one reflected spatial variable on the
pair diagonal.  Hence the diagonal form is multiplied by \(4\), whereas
the squared product norm on the right of the trace estimate is multiplied
by \(16\).  Applying the preceding periodic estimate on each eigenspace
of \(A_{\mathrm{sp}}\) to \(\mathcal J_{ij}G\) proves, for both boundary
conditions,
\begin{equation}
 q_{ij}^{(r)}(F_{PP})
 \leq C_r\|L_{0,L,k}^{(r)}F_{PP}\|^2.
 \label{eq:decorated-PP-trace}
\end{equation}

The other three components do not require a trace estimate.  The
projections \(P_i,Q_i\) commute with \(L_{0,L,k}^{(r)}\).  In each of the
components \(PQ,QP,QQ\), at least one selected fibre component lies in
\(Q\), and hence \(L_{0,L,k}^{(r)}\geq2\).  Since \(q_{ij}^{(r)}\) is a
subform of the
full product form,
\begin{equation}
 q_{ij}^{(r)}(F_{\sigma\tau})
 \leq\langle F_{\sigma\tau},L_{0,L,k}^{(r)}F_{\sigma\tau}\rangle
 \leq\frac12\|L_{0,L,k}^{(r)}F_{\sigma\tau}\|^2
 \label{eq:decorated-optical-trace}
\end{equation}
for \((\sigma,\tau)\in\{(P,Q),(Q,P),(Q,Q)\}\).  The four projections are
mutually orthogonal and commute with \(L_{0,L,k}^{(r)}\), so
\[
 \sum_{\sigma,\tau\in\{P,Q\}}
 \|L_{0,L,k}^{(r)}F_{\sigma\tau}\|^2
 =\|L_{0,L,k}^{(r)}F\|^2.
\]
Together with \eqref{eq:decorated-channel-cauchy}, this proves
\begin{equation}
 q_{ij}^{(r)}(F)\leq C_r\|L_{0,L,k}^{(r)}F\|^2.
 \label{eq:decorated-single-pair-trace}
\end{equation}
Summing over the fewer than \(k^2/2\) pairs gives
\begin{equation}
 q_{\mathrm{bad}}^{(r)}(F)
 \leq C_rk^2\|L_{0,L,k}^{(r)}F\|^2.
 \label{eq:decorated-all-collision-strata}
\end{equation}

The decorated block multiplication in
\eqref{eq:decorated-L-on-extension} shows that
\[
 \|L_{0,L,k}^{(r)}E_{L,k}^{(r)}f\|^2
 =\|(H_{L,k}^{(r)}+N_{L,k}^{(r)})f\|^2.
\]
Splitting the decorated product form into edges contained in
\(D_{L,k}^{(r)}\) and edges having a collision endpoint gives
\begin{align*}
 \langle E_{L,k}^{(r)}f,
 L_{0,L,k}^{(r)}E_{L,k}^{(r)}f\rangle
 &=\langle f,H_{L,k}^{(r)}f\rangle
 +q_{\mathrm{bad}}^{(r)}(E_{L,k}^{(r)}f),\\
 &=\langle f,(H_{L,k}^{(r)}+N_{L,k}^{(r)})f\rangle.
\end{align*}
Therefore
\begin{equation}
 q_{\mathrm{bad}}^{(r)}(E_{L,k}^{(r)}f)
 =\langle f,N_{L,k}^{(r)}f\rangle.
 \label{eq:decorated-N-as-bad-energy}
\end{equation}
Applying \eqref{eq:decorated-all-collision-strata} to the harmonic
extension proves
\begin{equation}
 N_{L,k}^{(r)}
 \leq C_rk^2(H_{L,k}^{(r)}+N_{L,k}^{(r)})^2.
 \label{eq:decorated-quadratic-prebound}
\end{equation}

We next supply the first-order estimate, including the deletion required
by the moving-particle argument.

\begin{lemma}
\label{lem:decorated-punctured-resistance}
For every fixed \(r\in\mathbb N\), there is \(C_r<\infty\) such that, for
either choice of boundary conditions, every integer \(L\geq3\), every
\(\alpha\in\mathcal V_{L,r}\), and all
\(\xi,\eta\in\mathcal V_{L,r}\setminus\{\alpha\}\),
\begin{equation}
 \mathsf R_{\mathrm{eff}}^{\Gamma_{L,r}\setminus\{\alpha\}}
 (\xi,\eta)\leq C_r\log(2L).
 \label{eq:decorated-punctured-resistance}
\end{equation}
\end{lemma}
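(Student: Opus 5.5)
The plan is to reduce the decorated punctured resistance to Lemma~\ref{lem:punctured-resistance} by building an explicit unit flow. We lift a base flow fibrewise and repair the fibre labels locally, using Thomson's principle throughout. Since every decorated edge has conductance \(1/r\), the energy of a flow \(j\) is \(\sum_e r|j(e)|^2\). Write \(\xi=(x,a)\), \(\eta=(y,b)\) and \(\alpha=(z,c)\).

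\textbf{Lifting a base flow.} Let \(j\) be a unit flow from \(x\) to \(y\) in the base graph. We lift it uniformly: a base edge \((u,v)\) carries current \(j(u,v)/r^2\) on each of its \(r^2\) decorated edges \((u,a'),(v,b')\). The fibre \((u,a')\) then has divergence \(\operatorname{div}j(u)/r\). The lifted energy is \(r\cdot r^2\cdot r^{-4}\sum_e|j(e)|^2=r^{-1}\sum_e|j(e)|^2\). This lift is a unit flow from the uniform distribution on the fibres above \(x\) to the uniform distribution on the fibres above \(y\).

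\textbf{Choosing the base flow.} If \(x\ne z\) and \(y\ne z\), take \(j\) to be the flow avoiding \(z\) from Lemma~\ref{lem:punctured-resistance}, with energy \(\le C\log(2L)\). Then no current passes through any fibre above \(z\), in particular not through \(\alpha\). Otherwise at least one endpoint lies above \(z\). In that case use the unpunctured base flow, whose energy is also \(\le C\log(2L)\) by the first part of that proof. Remove from its lift the edges incident to \(\alpha\), and redistribute their current equally over the other \(r-1\) fibres above \(z\). This needs \(r\ge2\). For \(r=1\) the endpoints coincide with the base vertices and we are back in the case already treated. The redistribution multiplies the energy by at most a factor depending only on \(r\).

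\textbf{Concentrating at the endpoints.} It remains to move the mass \(1/r\) sitting on each fibre above \(x\) onto the single vertex \(\xi\), and likewise at \(\eta\). Pick a base neighbour \(w\) of \(x\). For each \(a'\ne a\), route current \(1/r\) along the path \((x,a')\to(w,d)\to(x,a)\), where \(d\) is a fibre label with \((w,d)\ne\alpha\). Such a label exists for \(r\ge2\), or by choosing \(w\ne z\), which is possible since \(\deg\ge2\). Each of these corrections is a finite flow with \(O_r(1)\) edges and currents of size at most \(1\). Its energy is therefore \(O_r(1)\le C_r\log(2L)\), and it can be chosen to avoid \(\alpha\).

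\textbf{Conclusion and main obstacle.} Summing the three pieces gives a unit flow from \(\xi\) to \(\eta\) that avoids \(\alpha\). By Cauchy--Schwarz its energy is \(\le C_r\log(2L)\), and Thomson's principle then proves the lemma. The main obstacle is the bookkeeping when \(\alpha\) lies above \(x\) or \(y\), or above a base neighbour used in the local corrections. Any such case can be handled by choosing the correction paths through a different neighbour \(w\ne z\) or a different fibre label \(d\ne c\). The degree bound \(\deg\ge2\) and the \(K_{r,r}\) structure always leave such a choice.
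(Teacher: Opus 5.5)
Your argument is correct and uses the paper's ingredients: the uniform fibre lift \(j/r^2\) with energy factor \(1/r\), the punctured base bound of Lemma~\ref{lem:punctured-resistance}, bounded local flows at the endpoints, and Thomson's principle. The difference lies in how the puncture interacts with the endpoints, and the paper's choice is more economical.

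The paper never lifts a flow between \(x\) and \(y\) themselves. It picks base neighbours \(x_*\sim x\) and \(y_*\sim y\) with \(x_*,y_*\neq z\). This is possible since every vertex has degree at least two, and it is automatic when \(x=z\) or \(y=z\). It then sends current \(1/r\) from \(\xi\) to each of the \(r\) fibres \((x_*,d)\); this fan has energy exactly \(1\) and avoids \(\alpha\). It does the same in reverse at \(\eta\), and lifts the punctured base flow from \(x_*\) to \(y_*\). The single fan does the job of your concentration step and also moves the base endpoints off \(z\). As a result, the lifted flow never touches any fibre above \(z\). No case split, no unpunctured flow, and no redistribution at \(z\) are needed.

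Your route costs a case distinction, including the degenerate cases \(x=y\), and \(x=y=z\) when \(r\ge3\), but it is sound. Redistributing the current on the edges at \(\alpha\) over the other \(r-1\) fibres above \(z\) has three effects:
\begin{itemize}
\item it leaves all divergences off the fibre above \(z\) unchanged;
\item it multiplies the energy on the affected edges by \((r/(r-1))^2\le4\);
\item it leaves mass \(1/(r-1)\), not \(1/r\), on each remaining fibre above \(z\), so your concentration step must route that amount.
\end{itemize}

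Two small bookkeeping slips should be fixed.
\begin{itemize}
\item With the convention \(\operatorname{div}j(u)=\sum_v j(u,v)\), the correction current must run from \(\xi\) to \((x,a')\), not from \((x,a')\) to \(\xi\). Otherwise the divergence at \((x,a')\) doubles instead of cancelling.
\item For periodic boundary conditions, the unpunctured base bound does not follow from the free-square spectral computation alone. It needs Rayleigh monotonicity, or simply Lemma~\ref{lem:punctured-resistance} applied with a puncture away from \(x\) and \(y\).
\end{itemize}
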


\begin{proof}
The case \(\xi=\eta\) is trivial.  Otherwise write
\(\alpha=(z,c)\), \(\xi=(x,a)\), and \(\eta=(y,b)\).  If \(x\ne z\),
choose a neighbour \(x_*\sim x\) with \(x_*\ne z\); such a neighbour
exists because every vertex of either square has at least two neighbours.
If \(x=z\), choose any \(x_*\sim z\).  If \(y\ne z\), choose a neighbour
\(y_*\sim y\) with \(y_*\ne z\); if \(y=z\), choose any \(y_*\sim z\).
Send current \(1/r\) from \((x,a)\) to each \((x_*,d)\), \(d\in[r]\).
The resulting endpoint flow has unit divergence at \(\xi\), divergence
\(-1/r\) at every
\((x_*,d)\), avoids \(\alpha\), and has energy
\begin{equation}
 r\,\frac{(1/r)^2}{1/r}=1.
 \label{eq:decorated-fan-energy}
\end{equation}
At \(\eta\), use the corresponding reversed endpoint flow.

By Lemma~\ref{lem:punctured-resistance}, there is a unit flow \(j\) from
\(x_*\) to \(y_*\) in \(B_L\setminus\{z\}\) with energy at most
\(C\log(2L)\).  Lift it to the decorated graph by putting
\[
 J((u,d),(v,e)):=\frac1{r^2}j(u,v),
 \qquad d,e\in[r],\quad u,v\ne z.
\]
At every decorated vertex over \(u\), the divergence of \(J\) is
\(r^{-1}\operatorname{div}j(u)\), so it connects the two uniform fibre
distributions produced by the endpoint flows.  Moreover,
\begin{equation}
 \sum_{\{u,v\}}\sum_{d,e=1}^r
 \frac{|j(u,v)/r^2|^2}{1/r}
 =\frac1r\sum_{\{u,v\}}|j(u,v)|^2
 \leq\frac Cr\log(2L).
 \label{eq:decorated-lifted-flow-energy}
\end{equation}
The sum of the two endpoint flows and the lifted flow is a unit flow from \(\xi\)
to \(\eta\) avoiding \(\alpha\).  If some of its supports overlap, the
inequality \(|u+v+w|^2\leq3(|u|^2+|v|^2+|w|^2)\) changes only the constant.
Thomson's principle proves
\eqref{eq:decorated-punctured-resistance}.  The case \(x_*=y_*\) uses the
zero base flow and is included.
\end{proof}

For later use, let \(M:=rV=|\mathcal V_{L,r}|\), set
\[
 \Omega_{L,k}^{(r)}:=
 \{A\subset\mathcal V_{L,r}:|A|=k\},
\]
let \(\nu_{L,k}^{(r)}\) be the uniform probability measure on this set,
and put
\begin{equation}
 R_{L,k}^{(r)}(A)
 :=\sum_{\substack{\xi,\eta\in A}}c_{\xi\eta}.
 \label{eq:decorated-contact-multiplier}
\end{equation}
The multiplication operator \(R_{L,k}^{(r)}\) in
\eqref{eq:decorated-N-below-R} is therefore given by
\eqref{eq:decorated-contact-multiplier}; the sum is oriented, as are the
attempted jumps into occupied sites.

\begin{lemma}
\label{lem:decorated-contact-bound}
For every fixed \(r\in\mathbb N\), there is \(C_r<\infty\) such that,
for either choice of boundary conditions, every integer \(L\geq3\), every
integer \(1\leq k\leq\lfloor M/2\rfloor\), and every
\(f\in\ell^2(\Omega_{L,k}^{(r)},\nu_{L,k}^{(r)})\),
\begin{equation}
 \langle f,R_{L,k}^{(r)}f\rangle_{\nu_{L,k}^{(r)}}
 \leq C_r\frac{k^2}{M}\|f\|_{\nu_{L,k}^{(r)}}^2
 +C_rk\log(2L)
 \langle f,H_{L,k}^{(r)}f\rangle_{\nu_{L,k}^{(r)}}.
 \label{eq:decorated-contact-bound}
\end{equation}
\end{lemma}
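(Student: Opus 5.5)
We follow the proof of Lemma~\ref{lem:contact-bound} line by line, replacing the square \(B_L\) by the weighted decorated graph \(\Gamma_{L,r}\). The three ingredients are the same: a Jensen split into a mean term and a fluctuation term, Chen's moving-particle inequality \eqref{eq:moving-particle-use} on the punctured graph, and the anchored-energy identity. The only changes are that unit conductances become \(c_{\xi\eta}\), and that Lemma~\ref{lem:punctured-resistance} is replaced by Lemma~\ref{lem:decorated-punctured-resistance}. For \(k=1\) the multiplier vanishes, so we assume \(k\geq2\).

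\emph{Jensen split.} Put \(n=M-k+1\geq M/2\). For \(b\in A\) let \(B=A\setminus\{b\}\) and define the average \(\bar f_B\) over the \(n\) sites \(y\notin B\). Write
\[
 R_{L,k}^{(r)}(A)=\sum_{b\in A}d_A(b),\qquad d_A(b)=\sum_{a\in A}c_{ab}.
\]
By \eqref{eq:decorated-weighted-degree} the weighted degree satisfies \(d_A(b)\leq4\). Inserting \eqref{eq:contact-jensen} then bounds \(\langle f,R^{(r)}f\rangle\) by \(2M_0+2F\), with \(M_0\) and \(F\) the weighted analogues of \(M\) and \(F\).

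\emph{Mean term.} Change variables \(A=B\cup\{b\}\) and bound \(\sum_{b\notin B}\sum_{a\in B}c_{ab}\leq4(k-1)\). Each \(k\)-set is counted \(k\) times, so
\[
 M_0\leq\frac{4k(k-1)}{n}\|f\|^2\leq C\frac{k^2}{M}\|f\|^2 .
\]

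\emph{Fluctuation term.} Anchor at the occupied partner \(a\) of \(b\). Conditioning on \(a\in A\) makes \(\zeta=A\setminus\{a\}\) uniform on the \((k-1)\)-subsets of \(G_a=\mathcal V_{L,r}\setminus\{a\}\), where \(G_a\) carries the restricted conductances. Since \(\nu(a\in A)=k/M\), the analogue of \eqref{eq:contact-conditioned-exact} reads
\[
 F=\frac{k}{Mn}\sum_{a}\sum_{b}c_{ab}\sum_{y\in G_a}\mathbb E_{\nu^a}\bigl[\cdots\bigr].
\]
Chen's theorem holds for weighted graphs: \eqref{eq:moving-particle-use} is stated with the weighted effective resistance and the weighted exclusion form, and \eqref{eq:decorated-exclusion-generator} is exactly that form. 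Applying it on \(G_a\) together with Lemma~\ref{lem:decorated-punctured-resistance} bounds each expectation by \(C_r\log(2L)\,\cE_a(f_a)\). Using \(\sum_b c_{ab}\leq4\) and \(|G_a|/n\leq2\) then gives
\[
 F\leq C_r\frac{k\log(2L)}{M}\sum_a\cE_a(f_a).
\]

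\emph{Anchored-energy identity.} Every weighted move \(A\to A\setminus\{u\}\cup\{v\}\) appears in \(\cE_a\) for exactly the \(k-1\) anchors \(a\in A\setminus\{u\}\), and it carries the same conductance there. Hence the identity \eqref{eq:anchored-energy-identity} holds with \(V\) replaced by \(M\):
\[
 \frac{k}{M}\sum_a\cE_a(f_a)=(k-1)\,\cE^{(r)}_{L,k}(f).
\]
Combining the three estimates yields \eqref{eq:decorated-contact-bound}.

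The main point requiring care is whether the moving-particle inequality applies on the weighted, punctured graph \(G_a\) with no constant depending on \(r\), \(|G_a|\) or \(k\). Chen's result is formulated for arbitrary conductances, so the normalization check made after \eqref{eq:moving-particle-use} carries over verbatim, and connectivity of \(G_a\) is implicit in Lemma~\ref{lem:decorated-punctured-resistance}. Everything else is bookkeeping, with all \(r\)-dependence absorbed into \(C_r\) through the resistance bound.
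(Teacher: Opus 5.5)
Your proposal is correct and follows the paper's proof step by step. The steps match: the Jensen split with the weighted-degree bound \(\Delta_c\leq4\) in the mean term, conditioning on the occupied anchor, the weighted moving-particle lemma on the punctured graph combined with Lemma~\ref{lem:decorated-punctured-resistance}, and the anchored-energy identity with \(V\) replaced by \(M\). The one detail the paper spells out that you pass over is that Chen states the lemma for Bernoulli product measures; one reaches the fixed-particle sector by extending \(f_\alpha\) by zero, and the common weight \(p^{k-1}(1-p)^{M-k}\) then cancels on both sides.
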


\begin{proof}
The proof is included to keep track of the vertex number \(M=rV\) and the
weighted degree.  For \(k=1\), one has \(R_{L,1}^{(r)}=0\), so the assertion
holds.  Hence assume \(k\geq2\).  Fix an occupied ordered edge
\((\alpha,\zeta)\), remove
\(\zeta\), put \(B=A\setminus\{\zeta\}\), and set
\(n=M-k+1\).  With
\[
 \bar f_B:=\frac1n\sum_{y\notin B}f(B\cup\{y\}),
\]
Jensen gives
\begin{equation}
 |f(A)|^2\leq2|\bar f_B|^2
 +\frac2n\sum_{y\notin B}
 |f(A)-f(B\cup\{y\})|^2.
 \label{eq:decorated-contact-jensen}
\end{equation}
For the mean term, rewrite the sum first over the \((k-1)\)-set \(B\).
Using \eqref{eq:decorated-weighted-degree} and Jensen once more gives
\begin{align}
 &\frac1{\binom Mk}\sum_{|A|=k}
 \sum_{\alpha,\zeta\in A}c_{\alpha\zeta}
 |\bar f_{A\setminus\{\zeta\}}|^2\notag\\
 &\quad\leq
 \frac{\Delta_c(k-1)}{\binom Mk}
 \sum_{|B|=k-1}\frac1n\sum_{y\notin B}|f(B\cup\{y\})|^2\notag\\
 &\quad=
 \frac{\Delta_c k(k-1)}n\|f\|_{\nu_{L,k}^{(r)}}^2
 \leq C\frac{k^2}{M}\|f\|_{\nu_{L,k}^{(r)}}^2,
 \label{eq:decorated-contact-mean}
\end{align}
The final inequality in \eqref{eq:decorated-contact-mean} uses
\(k\leq M/2\).

For the fluctuation term condition on the occupied anchor \(\alpha\).  On
the punctured weighted graph
\(\Gamma_{L,r}^{\alpha}:=\Gamma_{L,r}\setminus\{\alpha\}\), let
\(f_\alpha(\omega)=f(\omega\cup\{\alpha\})\), where
\(|\omega|=k-1\).  Let \(\nu^\alpha\) be the uniform law on the
\((k-1)\)-subsets of
\(\mathcal V_{L,r}\setminus\{\alpha\}\), and set
\[
 \mathcal E_\alpha(f_\alpha):=
 \langle f_\alpha,
 H_{\Gamma_{L,r}^{\alpha},k-1}f_\alpha\rangle_{\nu^\alpha}.
\]
For \(\zeta,y\ne\alpha\), let \(\omega^{\zeta y}\) denote the
configuration obtained by exchanging the occupations of \(\zeta\) and
\(y\).  The weighted moving-particle lemma \cite{Chen2017} gives
\begin{equation}
 \frac12\mathbb E_{\nu^\alpha}
 |f_\alpha(\omega^{\zeta y})-f_\alpha(\omega)|^2
 \leq
 \mathsf R_{\mathrm{eff}}^{\Gamma_{L,r}^{\alpha}}(\zeta,y)
 \mathcal E_\alpha(f_\alpha).
 \label{eq:decorated-moving-particle}
\end{equation}
The left-hand side of \eqref{eq:decorated-moving-particle} is understood as
zero unless exactly one of \(\zeta,y\) is occupied.  Chen states the
estimate for a Bernoulli product measure.  To obtain
\eqref{eq:decorated-moving-particle} at fixed particle number, extend
\(f_\alpha\) by zero off
the \((k-1)\)-particle sector.  Exchanges preserve particle number, and
every configuration in that sector has the same Bernoulli weight
\(p^{k-1}(1-p)^{M-k}\); this common factor occurs on both sides and
cancels.  Thus there is no hidden dependence on \(M\), \(k\), or \(p\).

More explicitly, after conditioning on \(\alpha\in A\), the
fluctuation contribution (apart from the factor \(2\) in
\eqref{eq:decorated-contact-jensen}) is
\begin{align}
 F={}&\frac{k}{Mn}
 \sum_{\alpha\in\mathcal V_{L,r}}
 \sum_{\zeta\ne\alpha}c_{\alpha\zeta}
 \sum_{y\ne\alpha}
 \mathbb E_{\nu^\alpha}\!\left[
 \1_{\{\zeta\in\omega,\,y\notin\omega\}}
 |f_\alpha(\omega\setminus\{\zeta\}\cup\{y\})
     -f_\alpha(\omega)|^2\right].
 \label{eq:decorated-conditioned-fluctuation}
\end{align}
Apply \eqref{eq:decorated-moving-particle}, then use
\((M-1)/n\leq2\), \eqref{eq:decorated-weighted-degree}, and
Lemma~\ref{lem:decorated-punctured-resistance}.  This gives
\begin{equation}
 F\leq C_r\frac{k\log(2L)}M
 \sum_{\alpha\in\mathcal V_{L,r}}\mathcal E_\alpha(f_\alpha).
 \label{eq:decorated-fluctuation-before-anchor}
\end{equation}
Finally use the exact anchored-energy identity
\begin{equation}
 \frac{k}{M}\sum_{\alpha\in\mathcal V_{L,r}}
 \mathcal E_\alpha(f_\alpha)
 =(k-1)\langle f,H_{L,k}^{(r)}f\rangle_{\nu_{L,k}^{(r)}}.
 \label{eq:decorated-anchored-energy}
\end{equation}
Indeed, every allowed exclusion move has precisely \(k-1\) occupied
anchors outside its two endpoints.  The fluctuation term is therefore at
most \(C_rk\log(2L)\langle f,H_{L,k}^{(r)}f\rangle\).  Together with
\eqref{eq:decorated-contact-mean}, this proves
\eqref{eq:decorated-contact-bound}.
\end{proof}

We also record the gap with its normalization.  By
\eqref{eq:acoustic-optical-split}, the nonzero spectrum of \(h_r\) is the
union of the positive spectrum of the base-square Laplacian and the
spectrum of \(M_{\deg,\mathrm{bc}}\) repeated \(r-1\) times.  Hence
\begin{equation}
 \gap(h_r)\geq cL^{-2}.
 \label{eq:decorated-one-particle-gap}
\end{equation}
The weighted graph is connected.  Aldous' theorem applies to the weighted
interchange process, in which the labels at the endpoints of an edge are
exchanged at the edge's conductance rate
\cite{CaputoLiggettRichthammer2010}.  Forgetting the labels and retaining
only which vertices carry one of \(k\) selected labels gives the
\(k\)-particle exclusion process; hence its gap is at least
\(\gap(h_r)\).  Conversely,
if \(h_r\phi=\gap(h_r)\phi\) with \(\sum_\xi\phi(\xi)=0\), then
\[
 F_k(A):=\sum_{\xi\in A}\phi(\xi)
\]
is nonzero for \(1\leq k\leq M-1\).  Indeed, if \(F_k\) vanished
identically, comparing two \(k\)-sets which differ only by replacing
\(\xi\) with \(\eta\) would give \(\phi(\xi)=\phi(\eta)\) for every
\(\xi,\eta\); then \(\phi\) would be constant and hence zero.  Moreover,
using the symmetry \(c_{\xi\eta}=c_{\eta\xi}\),
\begin{align*}
 (H_{L,k}^{(r)}F_k)(A)
 &=\sum_{\substack{\xi\in A\\ \eta\notin A}}
 c_{\xi\eta}\bigl(\phi(\xi)-\phi(\eta)\bigr)\\
 &=\sum_{\xi\in A}\sum_{\eta\in\mathcal V_{L,r}}
 c_{\xi\eta}\bigl(\phi(\xi)-\phi(\eta)\bigr)\\
 &=\gap(h_r)\sum_{\xi\in A}\phi(\xi)
 =\gap(h_r)F_k(A).
\end{align*}
In the second line the terms with \(\xi,\eta\in A\) cancel in pairs.
Thus the reverse gap inequality also holds, and
\begin{equation}
 \gap(H_{L,k}^{(r)})=\gap(h_r)\geq cL^{-2}.
 \label{eq:decorated-exclusion-gap}
\end{equation}
Both \(H_{L,k}^{(r)}\) and \(N_{L,k}^{(r)}\) annihilate constants.  On the
orthogonal complement of constants,
\eqref{eq:decorated-N-below-R},
Lemma~\ref{lem:decorated-contact-bound}, and
\eqref{eq:decorated-exclusion-gap} give
\begin{align}
 N_{L,k}^{(r)}
 &\leq C_r\left(\frac{k^2}{rV}L^2
       +k\log(2L)\right)H_{L,k}^{(r)}\notag\\
 &\leq b_{L,k}^{(r)}H_{L,k}^{(r)},
 \qquad b_{L,k}^{(r)}:=C_rk^2\log(2L).
 \label{eq:decorated-first-order-bound}
\end{align}

\begin{proposition}
\label{prop:decorated-normal-trace}
For every fixed \(r\in\mathbb N\), there is \(C_{\mathrm{tr},r}<\infty\)
such that, for every integer \(L\geq3\), for either choice of boundary
conditions, and for every integer
\(1\leq k\leq\lfloor rV/2\rfloor\),
\begin{equation}
 0\leq N_{L,k}^{(r)}
 \leq C_{\mathrm{tr},r}k^6\log^2(2L)
 (H_{L,k}^{(r)})^2
 \label{eq:decorated-normal-trace}
\end{equation}
as an operator on \(\ell^2_{\mathrm{Sym}}(D_{L,k}^{(r)})\).
\end{proposition}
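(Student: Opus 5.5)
The proof repeats the operator-absorption argument from the proof of Theorem~\ref{thm:normal-trace}, now fed with the two decorated inputs already established. The first is the quadratic prebound \eqref{eq:decorated-quadratic-prebound}, $N_{L,k}^{(r)}\leq C_rk^2(H_{L,k}^{(r)}+N_{L,k}^{(r)})^2$, valid for both boundary conditions. The second is the first-order bound \eqref{eq:decorated-first-order-bound}, $N_{L,k}^{(r)}\leq b_{L,k}^{(r)}H_{L,k}^{(r)}$ on the orthogonal complement of constants, with $b^{(r)}_{L,k}=C_rk^2\log(2L)$. The lower bound $N^{(r)}_{L,k}\geq0$ is already contained in \eqref{eq:decorated-N-below-R}. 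For $k=1$ there is nothing to prove, since $N^{(r)}_{L,1}=0$.

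For $k\geq2$, first note that $H^{(r)}_{L,k}$ is nonnegative with kernel exactly the constants. This follows from the connectedness of $\Gamma_{L,r}$ and from the gap identity \eqref{eq:decorated-exclusion-gap}, which shows the restriction to the complement is strictly positive. Also, $N^{(r)}_{L,k}$ annihilates constants and, being self-adjoint, preserves their orthogonal complement. On that complement define
\[
\mathcal K:=(H^{(r)}_{L,k})^{-1/2}N^{(r)}_{L,k}(H^{(r)}_{L,k})^{-1/2}.
\]
Then $0\leq\mathcal K\leq b^{(r)}_{L,k}I$. Conjugating the prebound by $(H^{(r)}_{L,k})^{-1/2}$, and writing $H^{(r)}_{L,k}+N^{(r)}_{L,k}=(H^{(r)}_{L,k})^{1/2}(I+\mathcal K)(H^{(r)}_{L,k})^{1/2}$, gives
\[
\mathcal K\leq C_rk^2(I+\mathcal K)H^{(r)}_{L,k}(I+\mathcal K).
\]
Conjugating again by $(I+\mathcal K)^{-1}$, which commutes with $\mathcal K$, yields $\mathcal K(I+\mathcal K)^{-2}\leq C_rk^2H^{(r)}_{L,k}$. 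The scalar inequality $x\leq(1+b)^2x(1+x)^{-2}$ for $0\leq x\leq b$, applied through the functional calculus of $\mathcal K$, then gives $\mathcal K\leq C_rk^2(1+b^{(r)}_{L,k})^2H^{(r)}_{L,k}$. Conjugating back by $(H^{(r)}_{L,k})^{1/2}$ gives
\[
N^{(r)}_{L,k}\leq C_rk^2\bigl(1+C_rk^2\log(2L)\bigr)^2(H^{(r)}_{L,k})^2\leq C_{\mathrm{tr},r}k^6\log^2(2L)(H^{(r)}_{L,k})^2.
\]
The last step uses $k\geq2$ and $\log(2L)\geq\log 6$ to absorb the $1$. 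Since both sides vanish on constants and the inequality holds on their complement, it holds on all of $\ell^2_{\Sym}(D^{(r)}_{L,k})$.

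The only point needing care is that the inputs hold in exactly the stated range $1\leq k\leq\lfloor rV/2\rfloor$ and for both boundary conditions. The prebound requires positivity of the decorated Dirichlet collision block, which holds for $k\leq rV$. The free-boundary case of the $PP$ trace was handled by reflection in \eqref{eq:decorated-PP-trace}. Lemma~\ref{lem:decorated-contact-bound} needs $k\leq M/2$, which is our range.

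I expect the main obstacle to be the first-order bound at its $k^2/M$ term. There one must check that $\frac{k^2}{rV}\cdot\gap(H^{(r)}_{L,k})^{-1}\leq C_rk^2\frac{L^2}{rV}$ is indeed $O_r(k^2)$, so that it is dominated by $k^2\log(2L)$. This uses $\gap\geq cL^{-2}$ together with $V=L^2$. Once that is verified, the absorption argument goes through verbatim, with constants depending only on $r$.
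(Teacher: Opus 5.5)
Your proposal is correct and is essentially the paper's own proof. You use the same relative operator \(\mathcal K=H^{-1/2}NH^{-1/2}\) on the complement of constants, bounded by \(b_{L,k}^{(r)}=C_rk^2\log(2L)\) through the decorated first-order bound, conjugate the quadratic prebound by \(H^{-1/2}\) and then by \((I+\mathcal K)^{-1}\), apply the scalar inequality \(x\le(1+b)^2x(1+x)^{-2}\), and extend to constants at the end; the details you make explicit (that \(N\) preserves the complement of constants, and absorbing the \(1\) in \((1+b)^2\) using \(k\ge2\), \(L\ge3\)) are left implicit in the paper but are right.
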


\begin{proof}
For \(k=1\), one has \(N_{L,1}^{(r)}=0\), so the assertion holds.  Assume
\(k\geq2\).
Work first on the orthogonal complement of constants and set
\(\mathcal K=H^{-1/2}NH^{-1/2}\), with
\(H=H_{L,k}^{(r)}\) and \(N=N_{L,k}^{(r)}\).  By
\eqref{eq:decorated-first-order-bound},
\(0\leq\mathcal K\leq b_{L,k}^{(r)}I\).  Multiplying
\eqref{eq:decorated-quadratic-prebound} on the left and right by
\(H^{-1/2}\) gives
\[
 \mathcal K\leq C_rk^2(I+\mathcal K)H(I+\mathcal K).
\]
Since \(\mathcal K\) commutes with \(I+\mathcal K\), multiplying on the
left and right by \((I+\mathcal K)^{-1}\) yields
\begin{equation}
 \mathcal K(I+\mathcal K)^{-2}\leq C_rk^2H.
 \label{eq:decorated-absorbed-calculus}
\end{equation}
For \(0\leq x\leq b_{L,k}^{(r)}\),
\[
 x\leq(1+b_{L,k}^{(r)})^2\frac{x}{(1+x)^2}.
\]
Applying this scalar inequality to the spectral decomposition of
\(\mathcal K\) in \eqref{eq:decorated-absorbed-calculus}, and then
multiplying on the left and right by \(H^{1/2}\), proves
\begin{align*}
 N&\leq C_rk^2(1+b_{L,k}^{(r)})^2H^2\\
  &\leq C_{\mathrm{tr},r}k^6\log^2(2L)H^2.
\end{align*}
Since \(H_{L,k}^{(r)}\) and \(N_{L,k}^{(r)}\) annihilate constants,
\eqref{eq:decorated-normal-trace} holds on the full exclusion space.
\end{proof}

\subsection{Completion of the free-energy asymptotics and a remainder}
\label{subsec:completion-fixed-spin}

Besides completing the proof of the main theorem, this subsection records
the quantitative remainder announced in the introduction.

\begin{corollary}[A quantitative remainder]
\label{cor:quantitative-free-energy}
For every fixed \(S\in\frac12\N\), there are constants
\(C_S<\infty\) and \(t_S<\infty\) such that, whenever
\(t:=\beta S\geq t_S\),
\begin{equation}
 \left|\beta^2S f_2(\beta,S)+\frac{\pi}{24}\right|
 \leq C_S t^{-1/27}(\log t)^{16}.
 \label{eq:quantitative-free-energy}
\end{equation}
\end{corollary}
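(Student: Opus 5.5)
The plan is to rerun the argument of Section~\ref{sec:pressure} on the decorated graph \(\Gamma_{L,r}\) with every error term written out explicitly, and to leave the cutoff exponent free: \(g=t^{\gamma}\), \(V=L^2=tg\). The pressure upper bound (the new direction) then comes with an explicit rate, and \(\gamma\) is chosen at the end to balance the errors. For the opposite direction I would quote the trial-state upper bound on \(f_2\) of \cite{NapiorkowskiSeiringer2021} and track its remainder. My expectation is that their construction (free bosonic trial states in boxes, the Gibbs variational principle, and Dirichlet/Neumann bracketing) gives a polynomial rate at least as good as \(t^{-1/27}\). So the real work is in the pressure upper bound.

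\textbf{Step 1: finite-box bound.} By \eqref{eq:physical-decorated-trace} and \eqref{eq:decorated-time-normalization} it suffices to bound \(\Tr e^{-t\mathcal H_L^{(r)}}\) with \(t=\beta S\). The rough sector bound of Lemma~\ref{lem:rough-sector-energy} carries over to the decorated graph: the complete-graph comparison now runs on \(rV\) sites and the constant depends on \(r\). This gives \(\log Z\leq C_rg(\log t)^2\) for the main box, the energy cutoff \(e_0=C g(\log t)^2/t\), and the particle cutoff \(K(t)\leq C_rg^2(\log t)^2\). Proposition~\ref{prop:decorated-normal-trace} then gives the relative error
\[
 \delta_t\leq C_rK^6\log^2(2L)\,e_0\leq C_r\,g^{13}t^{-1}(\log t)^{16}.
\]
Min--max via the harmonic extension compares the retained spectrum with free symmetric bosons for the one-particle operator \(h_r\).

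\textbf{Step 2: split the free Bose gas into acoustic and optical modes.} By \eqref{eq:acoustic-optical-split}, the grand-canonical product over the positive spectrum of \(h_r\) splits into two parts. The \(Q\)-modes have energies \(\geq2\) and contribute at most \(-rV\log(1-e^{-2s})\leq C_rVe^{-2s}\), which is negligible. The \(P\)-modes are the Neumann Laplacian. Lemma~\ref{lem:neumann-heat-trace} at \(s=t/(1+\delta_t)\) gives
\[
 \log Z_L(t)\leq\frac{\pi V}{24t}\Bigl(1+O(\delta_t)+O(g^{-1/2})+O\bigl(\tfrac{\log t}{g}\bigr)\Bigr).
\]
Here \(g^{-1/2}\) is the boundary term \(L/\sqrt s\) divided by \(V/t\), and the logarithms of the multiplicity factors \((V+1)(K+1)\) are of order \(\log t=o(g)\). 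The high-energy tail \eqref{eq:high-energy-tail} contributes \(e^{-cg(\log t)^2}\).

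\textbf{Step 3: pass to infinite volume and choose \(\gamma\).} The Duhamel argument \eqref{eq:bond-dropping-duhamel} is exact, so subadditivity in the box size costs nothing: \(t\,p(t)\leq \frac{\pi}{24}\bigl(1+C(\delta_t+g^{-1/2})\bigr)\). Balancing \(g^{-1/2}\) against \(g^{13}/t\) gives \(g=t^{2/27}\), so both terms are of order \(t^{-1/27}\). The log powers collect into \((\log t)^{16}\): twelve from \(K^6\), two from \(e_0\), two from \(\log^2L\). This explains the exponents in \eqref{eq:quantitative-free-energy}. Rewriting in terms of \(\beta^2Sf_2=-t\,p/S\cdot S\) gives the stated form.

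\textbf{Main obstacle.} I expect the hardest point to be not the new direction but making the quoted lower bound on the pressure quantitative with a rate no worse than \(t^{-1/27}\). If \cite{NapiorkowskiSeiringer2021} does not record an explicit rate, I would rederive it by bounding the pressure from below with a product over boxes of side \(\sim\sqrt{tg}\) of trial density matrices supported on spin-wave states with few magnons. The interaction energy of such states is \(O(K^2/V)\) relative, the boundary loss is \(O(g^{-1/2})\) relative, and the occupation-bound violations are exponentially small after a particle-number cutoff. Each of these is polynomially small in \(t\) for \(g=t^{2/27}\).
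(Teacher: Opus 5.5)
Your proposal is correct and follows the paper's proof essentially step for step. The steps coincide: the decorated rough-sector cutoffs with $e_{0,r}\sim g(\log t)^2/t$ and $K_r(t)\leq C_rg^2(\log t)^2$, the bound $\delta_{t,r}\leq C_rg^{13}(\log t)^{16}/t$ from Proposition~\ref{prop:decorated-normal-trace}, the harmonic-extension min--max, the acoustic/optical split with Lemma~\ref{lem:neumann-heat-trace}, exact bond dropping, and the balance $g=t^{2/27}$. For the reverse direction the paper also simply cites \cite{NapiorkowskiSeiringer2021}, stating it as $f_2(\beta,S)\leq-\frac{\pi}{24S\beta^2}\bigl[1-O_S(t^{-1/3}(\log t)^{2/3})\bigr]$; this is $o(t^{-1/27}(\log t)^{16})$, so your fallback rederivation is not needed.
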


We return to free boundary conditions.  We first give the comparison used
both for the rough trace and for the magnon-number cutoff.  If \(m\geq2\),
write \(v=m^2\) and \(M_m=rv\).  Let \(K_{M_m,k}\) be the unit-conductance
exclusion generator of the complete graph on the same \(M_m\) decorated
vertices.
Let \(\mathcal M_{m,k}^{(r)}\) be the global highest-weight multiplicity
space of total spin \(M_m/2-k\), and let
\(H_{m,k}^{(r),\mathrm{HW}}\) denote the restriction of
\(H_{m,k}^{(r)}\) to this space.

\begin{lemma}
\label{lem:decorated-rough-sector}
For every fixed \(r\in\mathbb N\), there is \(c_{\mathrm{sec},r}>0\) such
that, for every integer \(m\geq2\) and every integer
\(1\leq k\leq\lfloor M_m/2\rfloor\),
\begin{equation}
 H_{m,k}^{(r),\mathrm{HW}}
 \geq c_{\mathrm{sec},r}\frac{k}{m^2}.
 \label{eq:decorated-rough-sector}
\end{equation}
Consequently, for every \(s>0\), if
\(Z_{m,r}(s):=\Tr\exp(-s\mathcal H_m^{(r)})\), then
\begin{equation}
 Z_{m,r}(s)
 \leq(M_m+1)\exp\left\{M_m
 e^{-c_{\mathrm{sec},r}s/m^2}\right\}.
 \label{eq:decorated-rough-trace}
\end{equation}
\end{lemma}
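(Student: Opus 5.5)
We follow the proof of Lemma~\ref{lem:rough-sector-energy}, now on the \(M_m=rm^2\) decorated vertices. The plan has three steps: compare \(H_{m,k}^{(r)}\) with the complete-graph exclusion generator \(K_{M_m,k}\) by a path (canonical flow) argument, evaluate \(K_{M_m,k}\) on highest-weight vectors through the Casimir identity \eqref{eq:complete-graph-casimir}--\eqref{eq:complete-graph-casimir-sector}, and then sum the resulting bound over sectors.

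\emph{Step 1 (comparison).} For two decorated vertices \(\xi=(x,a)\) and \(\eta=(y,b)\), choose a path in \(\Gamma_{m,r}\) that follows the Manhattan path \(\gamma_{xy}\) of the base square, changing the fibre label at each base step. At the last step we land on label \(b\). If \(x=y\) and \(a\ne b\), we use a detour through a neighbour \((x',d)\) and back to \((x,b)\); this has length \(2\), and \(m\geq2\) guarantees that a neighbour exists. Each path has length \(\ell\leq 2m+2\), and the exchange \(\tau_{\xi\eta}\) is a product of at most \(2\ell-1\) nearest-neighbour exchanges. We write \(\|f-\tau_{\xi\eta}f\|^2\leq(2\ell-1)\sum_{e\in\gamma}\|f-\tau_ef\|^2\). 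Each decorated edge has conductance \(1/r\), so its exchange term in \(\langle f,H^{(r)}_{m,k}f\rangle\) carries a factor \(1/r\), and this costs a factor \(r\). Edge loads are counted as in the base case. A decorated edge \(((u,d),(v,e))\) is used only by pairs \((\xi,\eta)\) whose base projections use \(\{u,v\}\); there are at most \(m^3\) such base pairs, times \(r^2\) choices of labels. The detour paths add at most \(O(r^2)\) further uses per edge. The weighted load is therefore at most \(C_r m^4\), and we obtain
\[
 \langle f,K_{M_m,k}f\rangle\leq C_r m^4\langle f,H^{(r)}_{m,k}f\rangle .
\]

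\emph{Step 2 (Casimir).} Apply \eqref{eq:complete-graph-casimir-sector} with \(n=M_m\). On \(\mathcal M^{(r)}_{m,k}\) the total spin is \(M_m/2-k\), so \(K_{M_m,k}=k(M_m-k+1)\geq kM_m/2=krm^2/2\). Combining this with Step 1 gives
\[
 H^{(r),\HW}_{m,k}\geq \frac{kr m^2}{2C_r m^4}=c_{\mathrm{sec},r}\frac{k}{m^2},
\]
which is \eqref{eq:decorated-rough-sector}.

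\emph{Step 3 (trace bound).} Decompose \(\Tr e^{-s\mathcal H^{(r)}_m}\) by \(\mathrm{SU}(2)\) as in \eqref{eq:su2-trace-decomposition}, which holds for any spin-\(\tfrac12\) system on \(M_m\) sites. Multiplets have dimension at most \(M_m+1\), and the highest-weight multiplicities are at most \(\binom{M_m}{k}\). Step 2 bounds each highest-weight contribution by \(e^{-c_{\mathrm{sec},r}sk/m^2}\), so
\[
 Z_{m,r}(s)\leq(M_m+1)\sum_k\binom{M_m}{k}e^{-c_{\mathrm{sec},r}sk/m^2}\leq(M_m+1)\bigl(1+e^{-c s/m^2}\bigr)^{M_m},
\]
and \(1+x\leq e^x\) gives \eqref{eq:decorated-rough-trace}.

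The main obstacle is the edge-load count in Step 1. We must check that the fibre multiplicity enters only through \(r\)-dependent constants and not through extra powers of \(m\), and that the same-base-site pairs, which have no counterpart on the base square, are covered by short detours. Everything else transfers verbatim from Lemma~\ref{lem:rough-sector-energy}.
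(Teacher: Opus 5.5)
Your proposal is correct and follows the paper's proof essentially step for step. It uses Manhattan paths lifted to \(\Gamma_{m,r}\) with two-edge detours for same-site pairs, a path-length-weighted edge load of \(O_r(m^4)\) with the extra factor \(r\) coming from the conductance \(1/r\), the Casimir value \(k(M_m-k+1)\geq kM_m/2\) on the highest-weight space, and the multiplet count with \(\binom{M_m}{k}\) and \(M_m+1\). The only slip is cosmetic: all but the last edge of the path occur twice in the word of \(2\ell-1\) exchanges, so the telescoping bound needs a factor \(2\), as in \eqref{eq:decorated-path-telescoping}, which is absorbed into \(C_r\).
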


\begin{proof}
For each ordered pair of decorated vertices
\(\xi=(x,a)\), \(\eta=(y,b)\), choose a path as follows.  If \(x\ne y\),
use the base-square Manhattan path which first moves horizontally and then
vertically; use the label \(1\) at all intermediate base vertices and the
prescribed labels at the endpoints.  If \(x=y\) and \(a\ne b\), use the
two-edge path through a fixed neighbour of \(x\).  Every such path has
length at most \(2m+2\); denote it by \(\gamma_{\xi\eta}\).

Let \(\tau_{\xi\eta}\) exchange the occupations at its endpoints and
\(\tau_e\) exchange occupations along a decorated edge.  Moving one
endpoint along the path and then moving the other endpoint back gives a
word of at most \(2\ell-1\) adjacent exchanges for
\(\tau_{\xi\eta}\).  Telescoping and Cauchy--Schwarz therefore give
\begin{equation}
 \|f-\tau_{\xi\eta}f\|^2
 \leq(2\ell-1)\sum_{e\in\gamma_{\xi\eta}}2
 \|f-\tau_ef\|^2.
 \label{eq:decorated-path-telescoping}
\end{equation}
For a fixed horizontal base edge, at most \(Cm^3\) ordered pairs of base
vertices have a horizontal-first Manhattan path using that edge: the two
horizontal endpoints contribute \(O(m^2)\), and the unused vertical
endpoint contributes \(O(m)\).  The same count holds for a vertical edge.
The endpoint labels contribute at most the fixed factor \(r^2\), and the
two-edge same-fibre paths contribute only \(O_r(1)\).  Multiplication by
the path length shows that the path-length-weighted load of every
decorated edge is at most \(C_rm^4\).  Since the local edge conductance in
\(H_{m,k}^{(r)}\) is \(1/r\), summing
\eqref{eq:decorated-path-telescoping} gives
\begin{equation}
 K_{M_m,k}\leq C_rm^4H_{m,k}^{(r)}.
 \label{eq:decorated-complete-path-comparison}
\end{equation}

The decorated system contains \(M_m=rm^2\) spin-half vertices.  Applying
\eqref{eq:complete-graph-casimir-sector} with \(n=M_m\) shows that, on
the highest-weight multiplicity space of total spin \(M_m/2-k\),
\begin{equation}
 K_{M_m,k}^{\mathrm{HW}}=k(M_m-k+1)I
 \geq\frac12kM_mI.
 \label{eq:decorated-complete-spin-identity}
\end{equation}
Combining \eqref{eq:decorated-complete-path-comparison} and
\eqref{eq:decorated-complete-spin-identity}, and using \(M_m=rm^2\), proves
\eqref{eq:decorated-rough-sector}.

Finally decompose the full \(M_m\)-spin-half space into global
\(\mathrm{SU}(2)\) multiplets.  Their dimensions are at most \(M_m+1\),
and the multiplicity of total spin \(M_m/2-k\) is at most
\(\binom{M_m}{k}\).  Therefore
\begin{align*}
 Z_{m,r}(s)
 &\leq(M_m+1)\sum_{k=0}^{\lfloor M_m/2\rfloor}
 \binom{M_m}{k}e^{-c_{\mathrm{sec},r}sk/m^2}\\
 &\leq(M_m+1)
 (1+e^{-c_{\mathrm{sec},r}s/m^2})^{M_m}\\
 &\leq(M_m+1)\exp\{M_m e^{-c_{\mathrm{sec},r}s/m^2}\}.
\end{align*}
\end{proof}

\begin{proof}[Proof of Corollary~\ref{cor:quantitative-free-energy}]
Together with the known inequality recalled at the end, this proof also
completes Theorem~\ref{thm:main-free-energy}.
For the quantitative pressure estimate put \(t:=\beta S\) and choose the
box-enlargement parameter \(g:=t^a\), where \(0<a<1/13\).  Fix a
sufficiently large constant \(A_r>0\), depending only on \(r=2S\).
Choose integers \(m,L\), with \(L\) a multiple of \(m\), so that
\begin{equation}
 m^2\asymp \frac{t}{A_r\log t},
 \qquad V:=L^2\asymp tg.
 \label{eq:quantitative-rounded-scales}
\end{equation}
The constants implicit in \(\asymp\), here and below, are independent of
\(t\).  Such a choice is obtained by first choosing \(m\) and then taking
\(L/m\) to be the nearest integer to
\((A_rg\log t)^{1/2}\).  Increasing \(A_r\), if necessary, absorbs the
integer roundings.

Apply Lemma~\ref{lem:decorated-rough-sector} in every \(m\)-cell.
Dropping the positive bonds between cells and using
\eqref{eq:decorated-rough-trace} at time \(t/2\) gives
\begin{align}
 \log\Tr e^{-(t/2)\mathcal H_L^{(r)}}
 &\leq\frac{V}{m^2}
 \left[\log(rm^2+1)+rm^2
 e^{-c_{\mathrm{sec},r}t/(2m^2)}\right]\notag\\
 &\leq C_rg(\log t)^2.
 \label{eq:quantitative-rough-trace}
\end{align}
Indeed, after increasing \(A_r\) by a fixed \(r\)-dependent factor, the
rounding in \eqref{eq:quantitative-rounded-scales} still leaves
\(e^{-c_{\mathrm{sec},r}t/(2m^2)}\leq t^{-3}\).  Since
\(V/m^2=O_r(g\log t)\), the logarithmic term in
\eqref{eq:quantitative-rough-trace} is \(O_r(g(\log t)^2)\), and the
exponential term is \(O_r(Vt^{-3})=O_r(gt^{-2})\).

Set
\begin{equation}
 e_{0,r}:=C_{\mathrm{cut},r}\frac{g(\log t)^2}{t},
 \qquad
 K_r(t):=\left\lfloor c_{\mathrm{sec},r}^{-1}Ve_{0,r}\right\rfloor.
 \label{eq:quantitative-cutoffs}
\end{equation}
where \(C_{\mathrm{cut},r}\) will be chosen below.  Put \(M=rV\), and
write
\[
 \widetilde Z_{L,r}(u):=\Tr e^{-u\mathcal H_L^{(r)}}
\]
for the partition function of the full decorated spin-half system.  The
physical trace satisfies
\begin{equation}
 Z_L^{(S)}(\beta)\leq\widetilde Z_{L,r}(t)
 \label{eq:physical-below-decorated-trace}
\end{equation}
by \eqref{eq:physical-decorated-trace} and
\eqref{eq:decorated-time-normalization}.

We first make the high-energy reduction precise.  Let
\(\mathcal M_{L,k}^{(r)}\) be the global highest-weight multiplicity space
of total spin \(M/2-k\), and let \(H_{L,k}^{(r),\HW}\) be the restriction
of \(H_{L,k}^{(r)}\) to this space.  Global rotation invariance gives
\begin{equation}
 \widetilde Z_{L,r}(u)
 =\sum_{k=0}^{\lfloor M/2\rfloor}(M-2k+1)
 \Tr_{\mathcal M_{L,k}^{(r)}}e^{-uH_{L,k}^{(r),\HW}}.
 \label{eq:decorated-su2-trace-decomposition}
\end{equation}
Define the highest-weight high-energy trace by
\[
 T_{>e_{0,r}}(t):=
 \sum_{k=0}^{\lfloor M/2\rfloor}
 \Tr_{\mathcal M_{L,k}^{(r)}}
 \1_{(e_{0,r},\infty)}(H_{L,k}^{(r),\HW})
 e^{-tH_{L,k}^{(r),\HW}}.
\]
For every scalar \(E\geq0\),
\[
 \1_{\{E>e_{0,r}\}}e^{-tE}
 \leq e^{-te_{0,r}/2}e^{-tE/2}.
\]
Consequently, \eqref{eq:decorated-su2-trace-decomposition} and
\eqref{eq:quantitative-rough-trace} imply
\begin{align}
 (M+1)T_{>e_{0,r}}(t)
 &\leq(M+1)e^{-te_{0,r}/2}\widetilde Z_{L,r}(t/2)\notag\\
 &\leq(M+1)e^{-c_rg(\log t)^2}
 \label{eq:quantitative-high-energy-tail}
\end{align}
for all sufficiently large \(t\), provided \(C_{\mathrm{cut},r}\) is
larger than twice the constant in
\eqref{eq:quantitative-rough-trace}.  The factor \(M+1\) bounds the
dimensions of the global spin multiplets.

Applying Lemma~\ref{lem:decorated-rough-sector} with \(m=L\) shows that
an eigenvalue of \(H_{L,k}^{(r),\HW}\) can lie in \([0,e_{0,r}]\) only if
\(k\leq K_r(t)\).  Since \(V\asymp tg\),
\begin{equation}
 K_r(t)\leq C_rg^2(\log t)^2.
 \label{eq:quantitative-number-cutoff}
\end{equation}
In particular, \(K_r(t)\leq M/2\) for all sufficiently large \(t\).

Fix \(1\leq k\leq K_r(t)\), and let
\[
 P_{k,e_{0,r}}^{\HW}:=
 \1_{[0,e_{0,r}]}(H_{L,k}^{(r),\HW}),
\]
regarded as a projection in the full \(k\)-particle exclusion space.  No
invariance of its range under \(N_{L,k}^{(r)}\) is used.  On this range,
the spectral theorem gives
\[
 P_{k,e_{0,r}}^{\HW}(H_{L,k}^{(r)})^2P_{k,e_{0,r}}^{\HW}
 \leq e_{0,r}P_{k,e_{0,r}}^{\HW}H_{L,k}^{(r)}P_{k,e_{0,r}}^{\HW}.
\]
Testing Proposition~\ref{prop:decorated-normal-trace} on the same range
therefore yields
\begin{align}
 &P_{k,e_{0,r}}^{\HW}(H_{L,k}^{(r)}+N_{L,k}^{(r)})
 P_{k,e_{0,r}}^{\HW}\notag\\
 &\qquad\leq(1+\delta_{t,r})
 P_{k,e_{0,r}}^{\HW}H_{L,k}^{(r)}P_{k,e_{0,r}}^{\HW},
 \label{eq:decorated-low-energy-relative-bound}
\end{align}
where
\begin{equation}
 \delta_{t,r}:=
 C_{\mathrm{tr},r}K_r(t)^6\log^2(2L)e_{0,r}
 \leq C_r\frac{g^{13}(\log t)^{16}}{t}.
 \label{eq:quantitative-delta}
\end{equation}
Indeed, \eqref{eq:quantitative-number-cutoff} supplies
\(g^{12}(\log t)^{12}\), the factor \(\log^2(2L)\) supplies two
logarithms, and \(e_{0,r}\) supplies \(g(\log t)^2/t\).  Since
\(g=t^a\) and \(a<1/13\), one has \(\delta_{t,r}=o(1)\).

We next perform the min--max comparison.  Let
\[
 \lambda_{k,1}\leq\cdots\leq\lambda_{k,d_k}\leq e_{0,r}
\]
be all eigenvalues of \(H_{L,k}^{(r),\HW}\) in \([0,e_{0,r}]\), counted
with multiplicity, and let \(\mu_{k,j}^{(r)}\) be the increasingly ordered
eigenvalues of \(L_{0,L,k}^{(r)}\) on
\(\Sym^k\ell^2(\mathcal V_{L,r})\).  The harmonic extensions of the first
\(j\) interacting eigenvectors are linearly independent, because
restriction to \(D_{L,k}^{(r)}\) is a left inverse of
\(E_{L,k}^{(r)}\).  Their norms do not decrease, while
\eqref{eq:decorated-L-on-extension} and
\eqref{eq:decorated-low-energy-relative-bound} bound their Rayleigh
quotients.  The min--max principle consequently gives
\begin{equation}
 \mu_{k,j}^{(r)}\leq(1+\delta_{t,r})\lambda_{k,j},
 \qquad 1\leq j\leq d_k.
 \label{eq:decorated-minmax-comparison}
\end{equation}
With
\[
 s:=\frac{t}{1+\delta_{t,r}},
\]
equation \eqref{eq:decorated-minmax-comparison} implies
\begin{equation}
 \Tr_{\mathcal M_{L,k}^{(r)}}
 \1_{[0,e_{0,r}]}(H_{L,k}^{(r),\HW})e^{-tH_{L,k}^{(r),\HW}}
 \leq
 \Tr_{\Sym^k\ell^2(\mathcal V_{L,r})}e^{-sL_{0,L,k}^{(r)}}.
 \label{eq:decorated-canonical-trace-comparison}
\end{equation}
For \(k=0\), both operators in
\eqref{eq:decorated-canonical-trace-comparison} act on a one-dimensional
space and vanish.  Hence both traces equal \(1\), so the same inequality
holds also in this sector.

It remains to sum the free canonical traces while treating the zero
eigenvalue separately.  The connected decorated graph has exactly one
one-particle zero-energy eigenfunction: the constant base function
tensored with \(u_r\).  Let
\(Z_j^{+,r}(s)\) be the free canonical trace with \(j\) particles in the
positive one-particle spectral subspace of \(h_r\), and let
\(Z_k^{0,r}(s)\) be the unrestricted free canonical trace at particle
number \(k\).  Filling the unique zero mode with the remaining particles
gives
\[
 Z_k^{0,r}(s)=\sum_{j=0}^kZ_j^{+,r}(s).
\]
It follows that
\begin{align}
 \sum_{k=0}^{K_r(t)}Z_k^{0,r}(s)
 &\leq(K_r(t)+1)\sum_{j=0}^{\infty}Z_j^{+,r}(s)\notag\\
 &=(K_r(t)+1)\mathcal G_{L,r}(s),
 \label{eq:decorated-zero-mode-summation}
\end{align}
where
\begin{align}
 \mathcal G_{L,r}(s)
 &:=\prod_{\lambda\in\spec(h_r)\setminus\{0\}}
 (1-e^{-s\lambda})^{-1}\notag\\
 &=\exp\left\{-\Tr_{P,\ne0}\log(1-e^{-sh_r})
 -\Tr_Q\log(1-e^{-sh_r})\right\}.
 \label{eq:decorated-positive-grand-canonical}
\end{align}
Eigenvalues in the product are counted with multiplicity.  Here
\(\Tr_{P,\ne0}\) is the trace on
\[
 (\ell^2(B_L)\ominus\operatorname{span}\{\1\})\otimes P\mathbb C^r,
\]
and \(\Tr_Q\) is the trace on
\(\ell^2(B_L)\otimes Q\mathbb C^r\).

Combining \eqref{eq:physical-below-decorated-trace},
\eqref{eq:decorated-su2-trace-decomposition},
\eqref{eq:quantitative-high-energy-tail},
\eqref{eq:decorated-canonical-trace-comparison}, and
\eqref{eq:decorated-zero-mode-summation}, we obtain
\[
 Z_L^{(S)}(\beta)
 \leq(M+1)\left[(K_r(t)+1)\mathcal G_{L,r}(s)
 +e^{-c_rg(\log t)^2}\right].
\]
Since \((K_r(t)+1)\mathcal G_{L,r}(s)\geq1\), the inequality
\(\log(A+B)\leq\log A+B/A\), for \(A>0\) and \(B\geq0\), yields
\begin{align}
 \log Z_L^{(S)}(\beta)
 &\leq\log(M+1)+\log(K_r(t)+1)\notag\\
 &\quad-\Tr_{P,\ne0}\log(1-e^{-sh_r})
 -\Tr_Q\log(1-e^{-sh_r})
 +C_re^{-c_rg(\log t)^2}.
 \label{eq:decorated-free-pressure}
\end{align}

We finally estimate the two invariant subspaces in
\eqref{eq:acoustic-optical-split}.  Since
\(\delta_{t,r}=o(1)\), one has \(t/2\leq s\leq t\) for all sufficiently
large \(t\).  Moreover, \(L^2=V\asymp tg\) and \(g=t^a\to\infty\), so
\(1\leq s\leq L^2\).  The restriction to the \(P\)-subspace is the scalar
Neumann Laplacian.
Lemma~\ref{lem:neumann-heat-trace} therefore gives
\begin{align}
 -\Tr_{P,\ne0}\log(1-e^{-sh_r})
 &=\frac{\pi V}{24s}
 +O\left(\frac{L}{\sqrt s}
 +\log\!\left(2+\frac Vs\right)+\frac{V}{s^2}+1\right).
 \label{eq:decorated-acoustic-pressure}
\end{align}
On the \(Q\)-subspace, \eqref{eq:optical-gap-two} and
\(-\log(1-e^{-x})\leq Ce^{-x}\) for \(x\geq2\) imply
\begin{equation}
 0\leq-\Tr_Q\log(1-e^{-sh_r})\leq C_rVe^{-2s}.
 \label{eq:decorated-optical-pressure}
\end{equation}
Furthermore,
\[
 \log(M+1)+\log(K_r(t)+1)\leq C_r\log t.
\]
Multiplying \eqref{eq:decorated-free-pressure} by \(t/V\), using
\(t/s=1+\delta_{t,r}\), and inserting
\eqref{eq:decorated-acoustic-pressure} and
\eqref{eq:decorated-optical-pressure}, we find
\begin{align}
 \frac{t}{V}\log Z_L^{(S)}(\beta)
 \leq\frac{\pi}{24}
 +C_r\bigg(&\delta_{t,r}+g^{-1/2}
 +\frac{\log t}{g}+\frac1t\notag\\[-2mm]
 &\quad+te^{-c_rt}+te^{-c_rg(\log t)^2}\bigg).
 \label{eq:quantitative-box-pressure}
\end{align}
Indeed,
\[
 \frac{t}{V}\frac{L}{\sqrt s}=O(g^{-1/2}),\qquad
 \frac{t}{V}\log\!\left(2+\frac Vs\right)
 =O\!\left(\frac{\log t}{g}\right),\qquad
 \frac{t}{s^2}=O(t^{-1}).
\]
The remaining constant heat-trace error is \(O(t/V)=O(g^{-1})\), which
is absorbed by \(O(g^{-1/2})\).  The last two terms in
\eqref{eq:quantitative-box-pressure} bound the \(Q\)-subspace contribution
and the high-energy tail, respectively.

To pass to infinite volume, tile a free square of side \(nL\) by
\(n^2\) translates of \(B_L\).  The interbox bond operators
\(S^2-\Sp_x\cdot\Sp_y\) are nonnegative, and hence the full Hamiltonian is
bounded below by the sum of the \(n^2\) box Hamiltonians.  Eigenvalue
monotonicity and factorization of the decoupled trace give, equivalently by
the Duhamel interpolation \eqref{eq:bond-dropping-duhamel},
\[
 \log Z_{nL}^{(S)}(\beta)\leq n^2\log Z_L^{(S)}(\beta).
\]
Divide by \((nL)^2\), let \(n\to\infty\), and then use
\eqref{eq:quantitative-box-pressure}.  Since
\(p_2(\beta,S)=-\beta f_2(\beta,S)\), we obtain, for every
\(0<a<1/13\),
\begin{equation}
 \beta^2S f_2(\beta,S)
 \geq-\frac{\pi}{24}
 -C_S\left(t^{13a-1}(\log t)^{16}
              +t^{-a/2}+t^{-a}\log t+t^{-1}\right)
 \label{eq:quantitative-new-direction-general-a}
\end{equation}
for all sufficiently large \(t\).  The exponentially small terms have
been absorbed into the \(C_St^{-1}\) term in
\eqref{eq:quantitative-new-direction-general-a}.

Among choices of the form \(g=t^a\), the polynomial powers are optimized
by balancing the collision error with the Neumann boundary error:
\[
 13a-1=-\frac a2,
 \qquad\text{hence}\qquad a=\frac2{27}.
\]
Thus
\begin{equation}
 \beta^2S f_2(\beta,S)
 \geq-\frac{\pi}{24}
 -C_S t^{-1/27}(\log t)^{16}.
 \label{eq:quantitative-new-direction}
\end{equation}

For the reverse direction, the two-dimensional upper bound of
\cite{NapiorkowskiSeiringer2021} gives, in multiplicative form,
\[
 f_2(\beta,S)\leq-\frac{\pi}{24S\beta^2}
 \left[1-O_S\!\left(t^{-1/3}(\log t)^{2/3}\right)\right].
\]
Equivalently, for each fixed \(S\), after enlarging an \(S\)-dependent
constant and taking \(t\) sufficiently large,
\begin{equation}
 \beta^2S f_2(\beta,S)
 \leq-\frac{\pi}{24}
 +C_S t^{-1/3}(\log t)^{2/3}.
 \label{eq:quantitative-known-direction}
\end{equation}
The error in \eqref{eq:quantitative-known-direction} is
\(o(t^{-1/27}(\log t)^{16})\).  Hence combining
\eqref{eq:quantitative-new-direction} and
\eqref{eq:quantitative-known-direction} proves
Corollary~\ref{cor:quantitative-free-energy}, and in particular completes
the proof of Theorem~\ref{thm:main-free-energy}.
\end{proof}


\begin{thebibliography}{99}
\fontsize{9.5pt}{10.1pt}\selectfont
\setlength{\itemsep}{0em}
\setlength{\parsep}{0pt}

\bibitem{Benedikter2017}
N.~Benedikter,
\newblock Interaction corrections to spin-wave theory in the large-\(S\)
limit of the quantum Heisenberg ferromagnet,
\newblock \emph{Mathematical Physics, Analysis and Geometry} \textbf{20}
(2017), no.~2, article no.~5;
\newblock \href{https://doi.org/10.1007/s11040-016-9237-6}{doi}.

\bibitem{Bloch1930}
F.~Bloch,
\newblock Zur Theorie des Ferromagnetismus,
\newblock \emph{Zeitschrift f\"ur Physik} \textbf{61} (1930), no.~3--4,
206--219;
\newblock \href{https://doi.org/10.1007/BF01339661}{doi}.

\bibitem{CaputoLiggettRichthammer2010}
P.~Caputo, T.~M. Liggett and T.~Richthammer,
\newblock Proof of Aldous' spectral gap conjecture,
\newblock \emph{Journal of the American Mathematical Society} \textbf{23}
(2010), no.~3, 831--851;
\newblock \href{https://doi.org/10.1090/S0894-0347-10-00659-4}{doi}.

\bibitem{Chen2017}
J.~P. Chen,
\newblock The moving particle lemma for the exclusion process on a weighted
graph,
\newblock \emph{Electronic Communications in Probability} \textbf{22}
(2017), paper no.~47, 1--13;
\newblock \href{https://doi.org/10.1214/17-ECP82}{doi}.

\bibitem{ConlonSolovej1991}
J.~G. Conlon and J.~P. Solovej,
\newblock Upper bound on the free energy of the spin \(1/2\) Heisenberg
ferromagnet,
\newblock \emph{Letters in Mathematical Physics} \textbf{23} (1991),
223--231;
\newblock \href{https://doi.org/10.1007/BF01885500}{doi}.

\bibitem{CorreggiGiuliani2012}
M.~Correggi and A.~Giuliani,
\newblock The free energy of the quantum Heisenberg ferromagnet at large
spin,
\newblock \emph{Journal of Statistical Physics} \textbf{149} (2012),
234--245;
\newblock \href{https://doi.org/10.1007/s10955-012-0589-4}{doi}.

\bibitem{CorreggiGiulianiSeiringer2014}
M.~Correggi, A.~Giuliani and R.~Seiringer,
\newblock Validity of spin-wave theory for the quantum Heisenberg model,
\newblock \emph{EPL} \textbf{108} (2014), no.~2, article no.~20003;
\newblock
\href{https://doi.org/10.1209/0295-5075/108/20003}{doi}.

\bibitem{CorreggiGiulianiSeiringer2015}
M.~Correggi, A.~Giuliani and R.~Seiringer,
\newblock Validity of the spin-wave approximation for the free energy of
the Heisenberg ferromagnet,
\newblock \emph{Communications in Mathematical Physics} \textbf{339}
(2015), 279--307;
\newblock \href{https://doi.org/10.1007/s00220-015-2402-0}{doi}.

\bibitem{Dyson1956}
F.~J. Dyson,
\newblock General theory of spin-wave interactions,
\newblock \emph{Physical Review} \textbf{102} (1956), 1217--1230;
\newblock \href{https://doi.org/10.1103/PhysRev.102.1217}{doi}.

\bibitem{Dyson1956Thermodynamics}
F.~J. Dyson,
\newblock Thermodynamic behavior of an ideal ferromagnet,
\newblock \emph{Physical Review} \textbf{102} (1956), 1230--1244;
\newblock \href{https://doi.org/10.1103/PhysRev.102.1230}{doi}.

\bibitem{HolsteinPrimakoff1940}
T.~Holstein and H.~Primakoff,
\newblock Field dependence of the intrinsic domain magnetization of a
ferromagnet,
\newblock \emph{Physical Review} \textbf{58} (1940), 1098--1113;
\newblock \href{https://doi.org/10.1103/PhysRev.58.1098}{doi}.

\bibitem{MerminWagner1966}
N.~D. Mermin and H.~Wagner,
\newblock Absence of ferromagnetism or antiferromagnetism in one- or
two-dimensional isotropic Heisenberg models,
\newblock \emph{Physical Review Letters} \textbf{17} (1966), 1133--1136;
\newblock \href{https://doi.org/10.1103/PhysRevLett.17.1133}{doi}.

\bibitem{NapiorkowskiSeiringer2021}
M.~Napi\'orkowski and R.~Seiringer,
\newblock Free energy asymptotics of the quantum Heisenberg spin chain,
\newblock \emph{Letters in Mathematical Physics} \textbf{111} (2021),
article no.~31;
\newblock \href{https://doi.org/10.1007/s11005-021-01375-4}{doi}.

\bibitem{SeiringerOberwolfach2025}
R.~Seiringer,
\newblock The Heisenberg ferromagnet: a dilute Bose gas in disguise,
\newblock in \emph{Mini-Workshop: New Directions in Correlated Quantum
Systems},
\newblock \emph{Oberwolfach Reports} \textbf{22} (2025), no.~1,
357--358;
\newblock
\href{https://doi.org/10.4171/OWR/2025/7}{doi}.

\bibitem{Takahashi1986}
M.~Takahashi,
\newblock Quantum Heisenberg ferromagnets in one and two dimensions at low
temperature,
\newblock \emph{Progress of Theoretical Physics Supplement} \textbf{87}
(1986), 233--246;
\newblock \href{https://doi.org/10.1143/PTPS.87.233}{doi}.

\bibitem{Thomas1980}
L.~E. Thomas,
\newblock Quantum Heisenberg ferromagnets and stochastic exclusion
processes,
\newblock \emph{Journal of Mathematical Physics} \textbf{21} (1980),
1921--1924;
\newblock \href{https://doi.org/10.1063/1.524610}{doi}.

\bibitem{Toth1993}
B.~T\'oth,
\newblock Improved lower bound on the thermodynamic pressure of the spin
\(1/2\) Heisenberg ferromagnet,
\newblock \emph{Letters in Mathematical Physics} \textbf{28} (1993),
75--84;
\newblock \href{https://doi.org/10.1007/BF00739568}{doi}.

\end{thebibliography}
\end{document}